\documentclass[
aps,
prx,
superscriptaddress,
amsmath,
amssymb
]
{revtex4-2} 

\usepackage{graphicx}
\usepackage{dcolumn}
\usepackage{bm}
\usepackage[colorlinks, linkcolor=blue]{hyperref}
\usepackage{mathrsfs}
\usepackage{braket}
\usepackage{mathtools}
\usepackage{autobreak}
\usepackage{algorithmic}
\usepackage{algorithm}
\usepackage{color}
\usepackage{amsthm}
\usepackage{quantikz}
\usepackage{empheq}
\usepackage[caption=false]{subfig}
\usepackage{booktabs, multirow}
\allowdisplaybreaks

\newtheorem{thm}{Theorem}

\newtheorem{dfn}{Definition}
\newtheorem{lem}{Lemma}

\newtheorem{problem}{Problem}

\newcommand{\BQP}{\textsf{BQP}}

\newcommand{\proofpara}[1]{\par\medskip\noindent{\bfseries #1.}\par\nobreak\smallskip\noindent\ignorespaces}

\begin{abstract}
The Berry phase is a fundamental geometric quantity for characterizing the geometry and topology of quantum many-body systems.
Previous work established a super-polynomial quantum advantage in Berry phase estimation at inverse-polynomial precision, given an ansatz state approximating the initial ground state.
However, whether this hardness persists for Berry phases quantized by symmetry, which can be distinguished at constant precision, remained open.
In this work, we characterize the computational complexity of quantized Berry phase estimation.
First, we prove that deciding whether the Berry phase is exactly $0$ or $\pi$ is $\mathsf{BQP}$-complete when the spectral gap is inverse-polynomially small.
This result implies that distinguishing the quantized Berry phase still has a super-polynomial quantum advantage, assuming $\mathsf{BPP}\neq \mathsf{BQP}$.
To prove this result, we introduce an encoding that maps the output of any quantum computation to a Berry phase that is exactly $0$ or $\pi$.
Second, we extend this hardness to physically motivated Hamiltonians on a 2D square lattice, including Heisenberg and XY interactions.
Third, we provide a classical polynomial-time algorithm for geometrically local Hamiltonians on fixed-dimensional lattices with constant spectral gaps.
Therefore we identify that the spectral gap is a key resource for the classical tractability, rather than precision.
These results establish a super-polynomial quantum advantage for computing a quantized topological invariant that is unchanged under symmetry- and gap-preserving deformations, and provide new steps towards practical quantum advantage in quantum physics.
\end{abstract}

\begin{document}
\title{
Encoding universal quantum computation into quantized Berry phases: \\
Hardness results and classical algorithms
}

\author{Kazuki Sakamoto}
\email{kazuki.sakamoto.osaka@gmail.com}
\affiliation{%
Graduate School of Engineering Science, The University of Osaka\\
1-3 Machikaneyama, Toyonaka, Osaka 560-8531, Japan.
}

\author{Keisuke Fujii}
\affiliation{%
Graduate School of Engineering Science, The University of Osaka\\
1-3 Machikaneyama, Toyonaka, Osaka 560-8531, Japan.
}
\affiliation{%
Center for Quantum Information and Quantum Biology, The University of Osaka 560-0043, Japan.
}
\affiliation{
Center for Quantum Computing, RIKEN, Hirosawa 2-1, Wako Saitama 351-0198, Japan.
}
\affiliation{
Graduate School of Informatics, Kyoto University, Sakyo-ku, Kyoto, 606-8501, Japan
}

\maketitle


\section{Introduction}\label{sec:introduction}
One of the central goals of modern condensed-matter physics is to identify and characterize the phases of matter that can emerge in quantum many-body systems, and to understand the universal principles that distinguish them. 
Many conventional phases are successfully described within the Landau paradigm, in which phases are characterized by patterns of spontaneous symmetry breaking~\cite{landau2013course}. 
In quantum mechanics, however, there exist phases that cannot be described by this paradigm, including phases with intrinsic topological order and symmetry-protected topological phases~\cite{wen2017colloquium}. 
These developments have revealed the fundamental roles of topology, symmetry, and many-body entanglement in quantum phases~\cite{senthil2015symmetry, chiu2016classification}.
For gapped quantum systems, phases can be understood in terms of continuous deformations of local Hamiltonians.
Ground states connected by a gapped path of local Hamiltonians preserving the relevant symmetries are regarded as belonging to the same phase~\cite{chen2010local, wen2013topological}. 
Topological invariants, which remain unchanged under such deformations, provide robust labels for distinguishing different phases.
This perspective naturally motivates us to study how the ground state evolves as the Hamiltonian is varied along a path.

The Berry phase~\cite{berry1984quantal} is one of the fundamental quantities reflecting such a ground state structure.
When a Hamiltonian is varied adiabatically around a closed loop, a nondegenerate ground state returns to itself up to a phase. 
A part of the accumulated phase depends only on the path traversed by the state, which is the Berry phase.
This geometric quantity and the associated Berry curvature play a central role in the description of quantum matter, underlying phenomena such as electric polarization~\cite{king1993theory, resta1994macroscopic} and quantized Hall conductance~\cite{thouless1982quantized, niu1985quantized}.
Moreover, while a generic Berry phase can take continuous values, suitable symmetries can quantize it to discrete values such as $0$ and $\pi$~\cite{zak1989berry, hatsugai2006quantized}, making it stable under symmetry- and gap-preserving deformations.
It therefore defines a topological invariant of the symmetry-constrained Hamiltonian loop.
For appropriately chosen loops, these invariants provide concrete diagnostics of interacting many-body states.
In spin systems, for example, they reveal local singlet and valence-bond structures that are not captured by conventional symmetry-breaking order parameters~\cite{hatsugai2006quantized, hirano2008topological}.
Beyond one dimension, generalized quantized Berry phases have also been used to identify symmetry-protected topological phases in two-dimensional Heisenberg and XY spin models~\cite{araki2020zq}.
These applications motivate the computation of Berry phases from parametrized Hamiltonians.
Classical numerical methods, including exact diagonalization, tensor-network approaches, and quantum Monte Carlo, have been used to study Berry phases and related topological properties~\cite{hirano2008topological, pollmann2012detection, motoyama2013path}.
Their efficiency, however, depends on the structure of the system and is not guaranteed for general interacting Hamiltonians.

Quantum algorithms offer a promising approach to this computational challenge.
In fact, algorithms based on adiabatic evolution and phase estimation have been proposed to estimate Berry phases~\cite{murta2020berry}.
Also the variational approaches~\cite{tamiya2021calculating} and techniques for suppressing adiabatic phase error~\cite{kiumi2026adiabatic} have been proposed.
These developments strengthen the motivation for studying Berry phase estimation as a candidate for quantum advantage in the characterization of interacting systems.
Nevertheless, the existence of an efficient quantum algorithm does not establish a quantum advantage.
That is, the limitations of particular classical numerical methods do not rule out other efficient classical algorithms.
A complexity-theoretic characterization is therefore essential to determine when Berry phase estimation is classically hard.

Recent work by Hayakawa et al.~\cite{hayakawa2025computational} partially answered this question.
They proved that Berry phase estimation is \BQP-complete for inverse-polynomial precision when a guiding state that has a large overlap with the initial ground state is provided.
This implies that estimating the Berry phase is intractable for classical computers in the worst case under the assumption of $\mathsf{BPP}\neq \BQP$.
However, their result leaves open whether comparable hardness persists for quantized Berry phases. 
Distinguishing zero from $\pi$ requires only constant precision, so hardness in inverse-polynomial precision does not settle this question.
The question is particularly natural in light of dequantization results for the guided local Hamiltonian problem~\cite{gharibian2022dequantizing, gall2024classical}.
That is, when we are given a guiding state that has a large overlap with the ground state and the full Hamiltonian is normalized to have constant norm, the classical algorithm can efficiently estimate the ground state energy with a constant precision, while the problem is $\BQP$-hard when the required precision is inverse polynomially small.
However, constant accuracy in a Berry phase does not automatically place the problem within the scope of such results, since this problem involves the spectral gap and the variation of the Hamiltonian along a closed path as an input.
Relaxing the phase accuracy alone therefore does not determine classical tractability.

In this work, we establish hardness for estimating exactly quantized Berry phases at constant precision and identify conditions under which Berry phases can be estimated classically. 
We consider smooth closed paths of local Hamiltonians $H(\theta)$ on $n$ qubits with a unique ground state throughout the loop. 
Each local interaction term has operator norm at most one and the variation of the path is bounded as $\sup_\theta \|\dot{H}(\theta)\|\leq B_1$.
We also assume access to a guiding state that approximates the ground state of $H(0)$.
For the classical algorithms, this access consists of querying amplitudes and sampling computational-basis strings according to their squared magnitudes.
We have three main results (Fig.~\ref{fig:summary-of-result}):
\begin{enumerate}
    \item We show that deciding whether the Berry phase is zero or $\pi$ is \BQP-hard, even for 5-local Hamiltonians with an inverse-polynomial spectral gap and constant $B_1$ (Theorem~\ref{thm:hardness-BPE}).
    \item We extend the hardness result to Hamiltonians on a 2D square lattice, whose interactions are drawn from any fixed non-2SLD~\footnote{
    The term 2SLD stands for ''the 2-local parts of all interactions in the set are simultaneously locally diagonalizable''.
    The concept was originally introduced in Ref.~\cite{cubitt2016complexity}.
    } interaction set, while allowing an inverse-polynomial spectral gap (Theorem~\ref{thm:hardness-physical-coefficient-variable}).
    The class of non-2SLD interactions consists of interaction sets whose two-qubit parts cannot all be transformed into Ising-type interactions by a single-qubit basis change.
    For example, Heisenberg and XY interactions are included in this class.
    \item We provide an efficient classical algorithm that estimates the Berry phase for geometrically local Hamiltonians on fixed-dimensional lattices under the assumption of constant spectral gap (Theorem~\ref{thm:classical-BPE} and Theorem~\ref{thm:classical-quantized-BPE}).
    The algorithm runs in polynomial time when $B_1$ and the precision are constant, given a sufficiently accurate guiding state.
\end{enumerate}

\begin{figure}
\centerline{
\includegraphics[width=170mm, page=1]{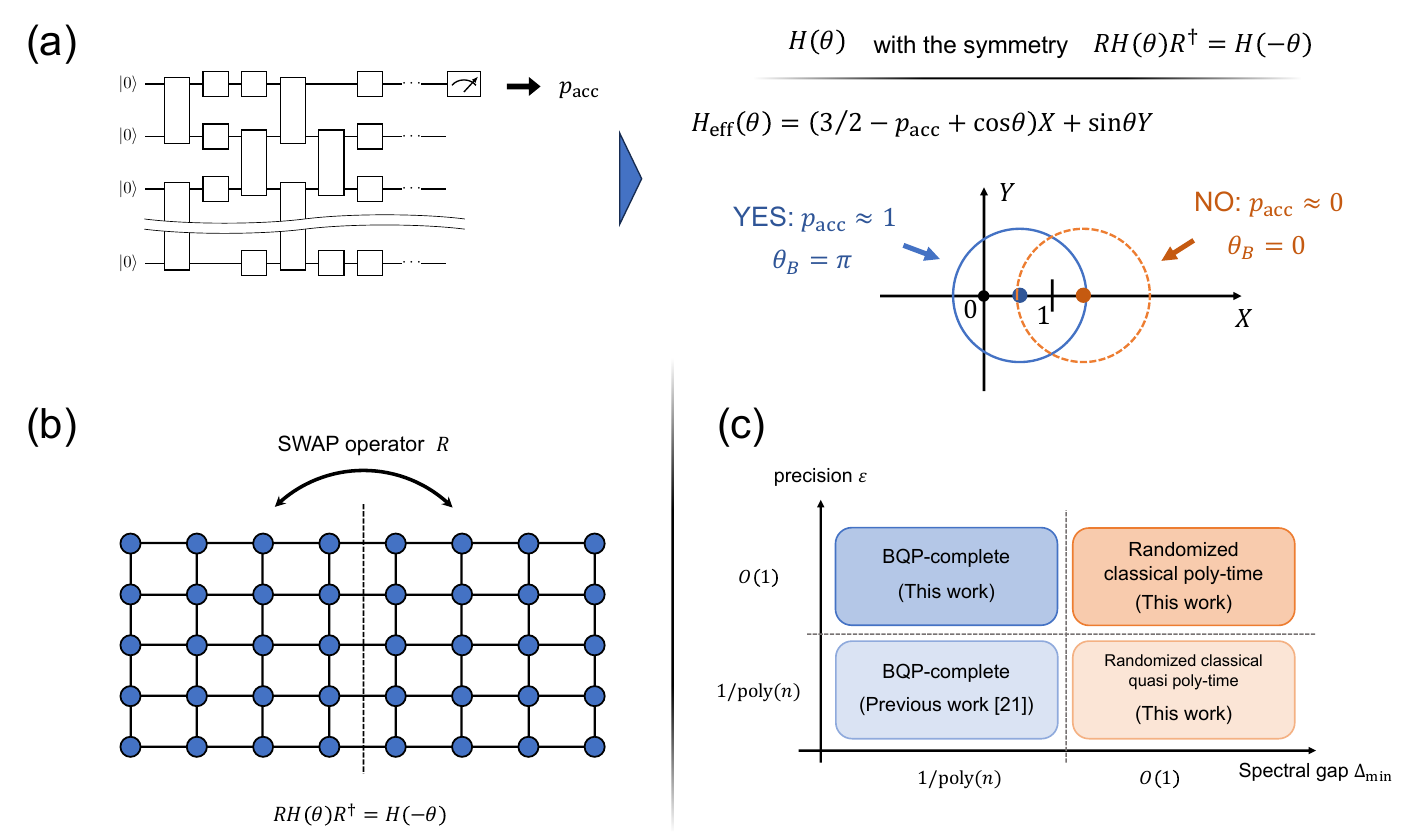}
}
\caption{
Overview of the main results.
(a) Hardness mechanism. 
The acceptance probability \(p_{\mathrm{acc}}\) of the encoded quantum computation is mapped to an effective two-level Hamiltonian in the low-energy subspace. 
In the YES case, \(p_{\mathrm{acc}}\approx 1\) and the loop encloses the origin, giving \(\theta_B=\pi\), whereas in the NO case, \(p_{\mathrm{acc}}\approx 0\) and the loop does not enclose it, giving \(\theta_B=0\).
The symmetry \(R H(\theta)R^\dagger=H(-\theta)\) quantizes the Berry phase of the full Hamiltonian to \(0\) or \(\pi\).
(b) The reflection \(R\) exchanges the two halves of the lattice and realizes the symmetry \(R H(\theta)R^\dagger=H(-\theta)\), which protects the quantization of the Berry phase throughout the reduction.
(c) We assume that $B_1=O(1)$ and $\gamma\geq 1-O(\varepsilon)$.
For inverse-polynomial spectral gaps, Berry-phase estimation is BQP-complete at constant precision for the quantized \(0/\pi\) case as shown in this work. 
For constant-gap geometrically local Hamiltonians, our classical algorithms run in polynomial time at constant precision and quasi-polynomial time at inverse-polynomial precision.
}
\label{fig:summary-of-result}
\end{figure}

The first result implies that the constant precision does not make the problem classically tractable in Berry phase estimation.
This is a contrast with the context of ground state energy estimation.
In the case of estimating the ground state energy of a normalized Hamiltonian with an accurate guiding state, constant precision suffices for an efficient classical algorithm~\cite{gharibian2022dequantizing, gall2024classical}, whereas it does not remove the hardness of estimating a Berry phase.
Moreover, the guiding state in our construction can be chosen as a simple classical subset state, which is a uniform superposition over polynomially many computational basis states, tensored with a fixed single-qubit state.
The hardness holds even if the guiding  state has a squared overlap with the ground state of $H(0)$ at least $1-1/\mathrm{poly}(n)$.
While the simple and high fidelity guiding state was already achieved in Ref.~\cite{hayakawa2025computational}, our result retains these properties while establishing hardness at constant precision.

To establish the hardness of quantized Berry phase estimation, we introduce an encoding that maps the outcome of a quantum computation to a symmetry-quantized binary topological invariant of a Hamiltonian loop.
Previous work~\cite{hayakawa2025computational} constructed a nondegenerate circuit-to-Hamiltonian with a perturbation term that encodes the outcome of computation in a perturbatively small Berry phase.
We instead employ the two-level Bloch Hamiltonian underlying the Su--Schrieffer--Heeger (SSH) model~\cite{su1979solitons}.
This Hamiltonian has an inversion symmetry that quantizes the Berry phase to zero or $\pi$, depending on the relative coupling strengths~\cite{chiu2016classification}. 
We add a single ancilla qubit to a standard circuit-to-Hamiltonian construction~\cite{kitaev2002classical, feynman1986quantum} and consider its two-dimensional ground space. 
By adding a perturbation term conditioned on the final computational time, we induce an effective SSH Hamiltonian within this low-energy subspace.
As illustrated in Fig.~\ref{fig:summary-of-result}(a), the circuit acceptance probability determines whether the corresponding trajectory encloses the origin, yielding a Berry phase of $\pi$ when it does and $0$ when it does not~\cite{zak1989berry}.
Consequently, reducing the perturbation strength changes the energy scale without reducing the phase separation.
Moreover, by ensuring that the full Hamiltonian possesses an exact symmetry, the Berry phase is quantized to either $0$ or $\pi$ for the full Hamiltonian~\cite{zak1989berry, hatsugai2006quantized}, rather than merely for its low-energy effective Hamiltonian.
Exploiting this symmetry together with the analysis of the spectrum by Schrieffer-Wolff transformation~\cite{bravyi2011schrieffer, bravyi2017complexity}, we show that the Berry phase takes different quantized values, $0$ or $\pi$, depending on the outcome of the encoded quantum circuit.

Our construction has a natural implication beyond establishing hardness.
It shows that the outcome of any quantum computation can be encoded in a discrete topological invariant that remains unchanged under continuous deformations preserving the relevant symmetry and keeping the ground-state gap open.
This provides a complementary perspective on the universality of adiabatic quantum computation.
The standard universality result shows that an arbitrary quantum circuit can be simulated by following ground states along a Hamiltonian path~\cite{aharonov2008adiabatic}.
Here, by contrast, the computational output itself is encoded in whether the Hamiltonian loop has Berry phase $0$ or $\pi$.
Thus, determining only this discrete value is already sufficient to solve an arbitrary $\mathsf{BQP}$ decision problem.
This viewpoint suggests a possible route toward robust encodings for adiabatic quantum computation, in which computational information is stored in deformation-invariant quantities such as quantized Berry phases.

Our second result restricts both the interactions and their geometry while preserving this hardness.
We say that the Hamiltonian is non-2SLD when all interactions are drawn from a fixed non-2SLD interaction set and coefficients can be chosen arbitrarily.
In the guided local Hamiltonian problem, $\BQP$-hardness holds even for any non-2SLD Hamiltonian on a 2D square lattice~\cite{cade2022improved}.
Inspired by this result, we establish $\BQP$-hardness of Berry phase estimation for any non-2SLD Hamiltonian on a 2D square lattice \footnote{
Very recently, independent work by Waite~\cite{waite2026computational} established BQP-completeness of guided Berry phase estimation at inverse-polynomial precision for parameterized 2-local qubit Hamiltonians, including weighted Heisenberg interactions on two-dimensional square and triangular lattices. 
Our work instead focuses on symmetry-quantized Berry phases taking exactly \(0\) or \(\pi\), for which we establish hardness already at constant precision.
Moreover, the exact symmetry substantially simplifies the physical reduction by reducing Berry-phase preservation to the preservation of discrete symmetry eigenvalues.
}.
Moreover, the hardness persists even when the parameter-dependent terms are limited to a constant-size region, and also $B_1=O(1)$. 
The guiding state can still be chosen to be a semi-classical encoded state, which admits an efficient classical description and access.

The reduction mainly builds on universal Hamiltonian
simulation~\cite{cubitt2018universal}. 
However, applying a simulation separately at each parameter value is insufficient, because it may not preserve the symmetry or the smoothness of the parameter loop. 
We realize the symmetry as an exchange of the two registers by arranging two copies of the circuit-to-Hamiltonian construction.
We then assign identical perturbative gadgets to terms exchanged by this symmetry, ensuring that the symmetry is preserved throughout the reduction.
Here, perturbative gadgets reproduce the desired interaction within the low-energy subspace using ancilla qubits and carefully chosen Hamiltonian terms.
Also a direct application of the reductions in the previous work encounters a smoothness problem.
That is, when an original coefficient crosses zero, derivatives of the resulting simulator Hamiltonian can diverge.
We avoid this problem by first reducing to a Hamiltonian whose relevant coefficients do not cross zero anywhere along the parameter loop, and then applying the standard perturbative reduction.

The first and second results show that the problem remains $\mathsf{BQP}$-hard even at constant precision as long as the spectral gap is inverse polynomial. 
This naturally raises the question of its complexity when the spectral gap is constant.
Our third result shows that, assuming that both a precision and a spectral gap are constant, the problem can be solved by a randomized classical algorithm in polynomial time when $B_1=O(1)$ and the guiding state has sufficiently large constant fidelity with the ground state. 
Since our hardness results already hold for constant $B_1$ and guiding state with fidelity inverse polynomially close to one, this result implies that the spectral gap is a key factor that can fundamentally change the computational complexity of Berry phase estimation.
When the Berry phase is promised to be either $0$ or $\pi$, the requirement on the guiding state can be further relaxed.
Specifically, when the squared overlap is $\geq 1/2+\Omega(1)$, the algorithm runs in polynomial time, whereas when it is $\geq 1/2$, it runs in quasipolynomial time.

To obtain these results, we first develop a dequantized algorithm for time-dependent Hamiltonian simulation based on Ref.~\cite{gharibian2022dequantizing}.
We then use this algorithm to classically simulate the dynamics generated by quasi-adiabatic continuation.
This dynamics exactly transports the ground state along the adiabatic path and yields the Berry phase~\cite{hastings2010locality}.
The operator norm of the quasi-adiabatic generator is given by $O(B_1/\Delta_{\min})$, which allows the dequantized algorithm to run efficiently as long as $B_1$ and the spectral gap are constant.
When the Berry phase is promised to be $0$ or $\pi$, we further insert a degree-one energy filter, extending the algorithm to the case where the squared overlap of the guiding state is exactly $1/2$. 
Importantly, these algorithms require only that each local term of the Hamiltonian have a constant norm and they do not require the norm of the full Hamiltonian to be constant.
This is because the time evolution is generated not by the Hamiltonian itself, but by the quasi-adiabatic generator, whose operator norm is represented by $O(B_1/\Delta_{\min})$ rather than by the norm of the full Hamiltonian.
In this sense, our setting differs from the dequantization of the guided local Hamiltonian problem~\cite{gharibian2022dequantizing, gall2024classical}, where the full Hamiltonian is normalized to have a constant norm~\footnote{See Refs.~\cite{wu2024classical, zhang2024dequantized} for works considering other normalization and promise.}.

In summary, our results show that the output of any quantum computation can be encoded in a symmetry-quantized topological invariant of a local Hamiltonian loop.
The encoded value is unchanged under continuous deformations that preserve the symmetry and keep the spectral gap open.
Nevertheless, determining this discrete invariant remains $\mathsf{BQP}$-complete when inverse-polynomial spectral gaps are allowed.
We also show that the problem becomes classically tractable in the constant spectral gap regime when both the precision and the derivative of Hamiltonian $B_1$ are constant. 
In establishing the $\mathsf{BQP}$-completeness result, we exploit the topological properties of a simple single-qubit model and introduce reduction techniques that preserve the symmetry. 
These techniques may be of independent interest for future research.
On the algorithmic side, our classical algorithm has a provable complexity and may provide a foundation for efficient Berry-phase estimation in quantum many-body systems.
Our results clarify the potential for super-polynomial quantum advantage in computing the Berry phase, which is one of the fundamental topological invariants in quantum many-body systems, and provide new insight into the computational aspects of topology in quantum many-body physics.

This paper is organized as follows.
In Section~\ref{sec:preliminary}, we briefly introduce the definitions. 
Section~\ref{sec:hardness-constant}  establishes hardness for estimating symmetry-quantized Berry phases, and Section~\ref{sec:hardness-physical} extends the hardness to physical Hamiltonians on a two-dimensional square lattice.
Section~\ref{sec:dequantize} presents our classical algorithms in the constant-gap regime.
Section~\ref{sec:conclusion} concludes with open problems and future directions.

\section{Preliminaries}\label{sec:preliminary}

\subsection{Berry phase}\label{subsec:BP}
Let $H(\theta)$ be a smooth closed Hamiltonian path with a nondegenerate ground state such that $H(0)=H(2\pi)$, and $\ket{\psi_0(\theta)}$ be its ground state. 
Its Berry connection and Berry phase are defined by~\cite{berry1984quantal}
\begin{equation}\label{eq:berry-definition}
 \mathcal A(\theta)=i\braket{\psi_0(\theta)|\dot\psi_0(\theta)},
 \qquad
 \theta_B=\int_0^{2\pi}\mathcal A(\theta)\,\mathrm d\theta
 \pmod{2\pi}.
\end{equation}
The Berry phase is unchanged by a parameter-independent change of basis or by multiplication of the Hamiltonian by a positive scalar. 
Although a generic Berry phase varies continuously with the loop, the symmetry may restrict it to two values. 
Specifically, suppose that a parameter-independent unitary involution $R$ satisfies
\begin{equation}\label{eq:inversion-condition}
 R^2=I,\qquad RH(\theta)R^\dagger=H(-\theta),
\end{equation}
where the parameter is understood modulo $2\pi$. At $\theta=0$ and $\pi$, the unique ground state of $H(\theta)$ is an eigenstate of $R$, with eigenvalue $\xi_0, \xi_\pi \in\{+1,-1\}$. Then it holds that
\begin{equation}\label{eq:berry-parity-formula}
 e^{i\theta_B}=\xi_0\xi_\pi.
\end{equation}
Thus, the Berry phase is determined by the eigenvalues of $R$ at $\theta=0$ and $\pi$~\cite{zak1989berry,hughes2011inversion}. 
For completeness, Appendix~\ref{app:berry-symmetry} derives the identity in the notation used here. 

For angles, we write $|x-y|_{2\pi}=\min_{\ell\in\mathbb Z}|x-y+2\pi\ell|$. The notation $[a,b]_{2\pi}$ denotes the arc traversed from $a$ to $b$ in the increasing angular direction, allowing it to pass through $2\pi$.

\subsection{Schrieffer-Wolff transformation}\label{subsec:SW}
We review the Schrieffer--Wolff transformation following Refs.~\cite{bravyi2011schrieffer,bravyi2017complexity}, which allows us to analyze the low energy spectrum of perturbed Hamiltonians.
Let $H_0$ be an $n$-qubit Hamiltonian with a possibly degenerate ground energy $E_0$, ground space $\mathcal P_0$, and spectral gap $\Delta_0>0$ above that ground space. 
Let $P_0$ denote the projector onto $\mathcal P_0$.
In the context of Hamiltonian complexity, we usually consider the Hamiltonian of the form $H=H_0 + V$ where $\|V\|$ is sufficiently small relative to $\Delta_0$.
If $\|V\|\leq \Delta_0/4$, all eigenvalues of $H_0$ are shifted at most $\Delta_0/4$.
Therefore, the low energy eigenvalues of $H$ lie in an interval of $[E_0-\Delta_0/4, E_0+\Delta_0/4]$ and are still separated from other higher eigenvalues by an amount of $\geq \Delta_0/2$.
Letting $\mathcal{P}$ be that low energy subspace of $H$ and $P$ be a projector onto $\mathcal{P}$, Schrieffer-Wolff transformation provides a unitary $U_{SW}=e^{S}$ that maps $P$ to $P_0$ such that $U_{SW}PU_{SW}^\dagger = P_0$.
Then the effective low energy Hamiltonian of $H$ is given by $H_{SW} = P_0 U_{SW} (H_0+V)U_{SW}^\dagger P_0$ and its perturbative expansion is
\begin{equation*}
    H_{SW} = H_0 P_0 + P_0VP_0 + O\left( \frac{\|V\|^2}{\Delta_0} \right).
\end{equation*}
In this way, we can estimate the low-energy eigenvalues of $H$ with an additive error $O(\|V\|^2/\Delta_0)$ for sufficiently small $\|V\|$.

\subsection{Query-access and sampling-access to vectors and matrices}\label{subsec:query}
We introduce some access model to vectors and matrices, which is used for classical algorithms.
We define query-access to a vector as follows.
\begin{dfn}[Query-access to a vector]\label{dfn:query-vec}
We say that we have query-access to a vector $u\in\mathbb{C}^{N}$ if on input $i\in\{1,\dots,N\}$ we can query the entry $u_i$.
We denote the runtime of implementing one such query as $\bm{q}(u)$.
\end{dfn}
We define query-access to sparse square matrices which allows us to efficiently compute the non-zero entries and their positions.
\begin{dfn}[Query-access to a sparse square matrix]\label{dfn:query-matrix}
We say that we have query-access to an $s$-sparse square matrix $A\in\mathbb{C}^{N\times N}$ if we have queries $\mathcal{Q}^{\mathrm{row}}(A)$ and $\mathcal{Q}^{\mathrm{col}}(A)$ such that:
\begin{itemize}
    \item on input $(i,l)\in \{1,\dots,N\}\times\{1,\dots,s\}$, the query $\mathcal{Q}^{\mathrm{row}}(A)$ outputs the $l$-th non-zero entry of the $i$-th row of A and its column index if this row has at least $l$ non-zero entries, and outputs an error message otherwise;
    \item on input $(j,l)\in \{1,\dots,N\}\times\{1,\dots,s\}$, the query $\mathcal{Q}^{\mathrm{col}}(A)$ outputs the $l$-th non-zero entry of the $j$-th column of A and its row index if this column has at least $l$ non-zero entries, and outputs an error message otherwise.
\end{itemize}
We denote the runtime of implementing one such query as $\bm{q}(A)$.
\end{dfn}
Typically, the cost of implementing one query is polylogarithmic in $N$ when each entry is stored in a (classical) random-access-memory.

Next, we define sampling-access to a vector.
\begin{dfn}[Sampling-access to a vector]\label{dfn:sampling-vec}
We say that we have sampling-access to a vector $u\in\mathbb{C}^{N}$ if we can sample from a probability distribution $p^u$, where $p_i^u=|u_i|^2/\|u\|^2$ for each $i$.
We denote the runtime of generating one sample as $\bm{s}(u)$.
\end{dfn}
Combined with query-access, we define sampling-and-query-access.
\begin{dfn}[Sampling-and-query-access to a vector]\label{dfn:sampling-query-vec}
We say that we have sampling-and-query-access to a vector $u\in\mathbb{C}^{N}$ if we have query-access and sampling-access to $u$ and additionally can get the value $\|u\|$.
We denote the maximum among $\bm{q}(u)$, $\bm{s}(u)$ and the cost of obtaining $\|u\|$ as $\bm{sq}(u)$.
\end{dfn}

These access models are the ones used in the dequantization framework of Ref.~\cite{gharibian2022dequantizing}. 
For time-dependent matrices, the query takes the time parameter as an additional input, and its cost is the maximum over the requested times. 
We assume that the input descriptions allow the coefficients and their first two derivatives to be evaluated efficiently. 

A classical subset state is a normalized uniform superposition over an explicitly listed set of polynomially many computational-basis states. 
It has efficient sampling-and-query-access and can also be prepared by a polynomial-size quantum circuit. 
We also allow tensoring with a constant number of fixed single-qubit states, which does not affect efficient classical sampling-and-query access.
The physical reductions use the corresponding semi-classical encoded states, obtained by applying fixed, constant-size local encodings and adjoining fixed mediator qubits, as in Ref.~\cite{cade2022improved}.

\section{Hardness of quantized Berry phase estimation}\label{sec:hardness-constant}
We begin with a decision problem formulation of Berry phase estimation. 
Here the two allowed arcs are separated by a gap of width $2\varepsilon$.

\begin{problem}[Guided Berry phase estimation]\label{problem:GBPE}
$\mathrm{GBPE}(k,a,b,\varepsilon,\gamma,\Delta_{\min}, B_1, B_2)$ is the following promise problem.
    \label{def:BPE_guided}
    \ \\
    \textbf{Input:}
    A family of $k$-local Hamiltonians $\{H(\theta)=\sum_{i=1}^m h_i(\theta) \}_\theta$ parameterized by $\theta\in [0,2\pi]$ on $n$-qubit {given as a $\mathrm{poly}(n)$-size description of a function of $\theta$}, real numbers $\varepsilon, \Delta_{\mathrm{min}}, \gamma, B_1, B_2 > 0$,
    {$a,b\in [0,2\pi)$} s.t. $2\varepsilon<|b-a|<2\pi-2\varepsilon$, and sampling-and-query-access to a normalized quantum state $\ket{c}$, together with a polynomial-size quantum circuit preparing this state.\\
    \textbf{Promise: }
    \begin{itemize}
        \item ($\mathcal{C}1$): $H(0)=H(2\pi)$, and for all $\theta\in [0,2\pi]$ the following holds: (1) $\|h_i(\theta)\|\leq 1$ for all $i$,  (2) $H(\theta)$ has a non-degenerate ground state, (3) the spectral gap of $\{H(\theta)\}_\theta$ is lower bounded by $\Delta_{\mathrm{min}}$, (4) $\|\dot{H}(\theta)\| \leq B_1$, $\|\ddot{H}(\theta)\| \leq B_2$.
        \item ($\mathcal{C}2$):
            {$|\braket{\psi_0(0)|c}|^2\geq \gamma$}, where $\ket{\psi_0(0)}$ is a normalized ground state of $H(0)$.
        \item
              The Berry phase $\theta_B =
                  \int_{0}^{2\pi}
                  i\bra{\psi_0(\theta)}\frac{\mathrm{d}}{\mathrm{d}\theta}\ket{\psi_0(\theta)}
                  \mathrm{d}\theta$ lies either in  $[a+\varepsilon,b-\varepsilon]_{2\pi}$ or in $[b+\varepsilon,a-\varepsilon]_{2\pi}$.
    \end{itemize}
    \textbf{Output:}
    {1 if $\theta_B\in[a+\varepsilon,b-\varepsilon]_{2\pi}$ and 0 if $\theta_B\in[b+\varepsilon,a-\varepsilon]_{2\pi}$.}
\end{problem}

In Ref.~\cite{hayakawa2025computational}, they proved that this problem is $\BQP$-hard for the precision $\varepsilon=\Theta(1/\mathrm{poly}(n))$ with appropriately chosen other parameters.
Next we prove that the problem is $\BQP$-hard even for $\varepsilon=\Theta(1)$.

\begin{thm}\label{thm:hardness-BPE}
    The problem $\mathrm{GBPE}(k,a,b,\varepsilon,\gamma,\Delta_{\min}, B_1, B_2)$ is $\BQP$-hard for $k\geq 5$, any $0<\varepsilon \leq \pi/4$ and $\Delta_{\min} = \Theta(1/\mathrm{poly}(n))$.
    The hardness holds even for $B_1, B_2 =O(1)$, a guiding state $\ket{u}$ of a classical subset state tensored with a fixed single-qubit state satisfying $|\braket{\psi_0(0)|u}|^2\geq 1-1/\mathrm{poly}(n)$, and the Berry phase promised to be $\theta_B=0$ or $\pi$.
\end{thm}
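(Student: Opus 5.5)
We reduce from an arbitrary $\BQP$ decision problem given by a uniform circuit $U=U_T\cdots U_1$ on $N$ qubits with acceptance probability $p_{\mathrm{acc}}\geq 1-\delta$ (YES) or $\leq\delta$ (NO). As usual we first pad the circuit with identity gates to make the final-time weight $1/(T+1)$ harmless. We also amplify, so that $\delta$ is small. The base Hamiltonian is a Feynman--Kitaev history Hamiltonian $H_{\mathrm{FK}}$ with a unary clock (5-local, with each term of norm at most one). Its input term forces the workspace to $\ket{0^N}$, which makes the history state $\ket{\eta}$ the unique ground state. The gap of $H_{\mathrm{FK}}$ is $\Omega(1/T^2)$. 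We then adjoin one ancilla qubit $a$, on which $H_{\mathrm{FK}}$ acts trivially, so the unperturbed ground space is $\mathcal P_0=\mathrm{span}\{\ket{\eta}\ket{0}_a,\ket{\eta}\ket{1}_a\}$. The loop is
\[
H(\theta)=H_{\mathrm{FK}}+\lambda\,V(\theta),
\qquad
V(\theta)=\ket{T}\!\bra{T}_{\mathrm{clock}}\otimes\bigl[(\Pi_{\mathrm{rej}}\,u+\Pi_{\mathrm{acc}}\cos\theta)X_a+\Pi_{\mathrm{acc}}\sin\theta\, Y_a\bigr]\;+\;c\,(\text{a small }Z_a\text{-free term}).
\]
Here $\Pi_{\mathrm{acc}}$ and $\Pi_{\mathrm{rej}}$ project the output qubit, and $\lambda=\Theta(1/\mathrm{poly})$ is well below the gap of $H_{\mathrm{FK}}$. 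We will use the simpler choice $V(\theta)=\ket{T}\!\bra{T}\otimes\bigl(\alpha I_{\mathrm{out}}X_a+\Pi_{\mathrm{acc}}(\cos\theta X_a+\sin\theta Y_a)\bigr)$ with $\alpha=1/2$, tuned so that projecting onto $\mathcal P_0$ gives an SSH-type two-level Hamiltonian
\[
P_0VP_0=\tfrac{1}{T+1}\bigl[(\alpha+p_{\mathrm{acc}}\cos\theta)X_a+p_{\mathrm{acc}}\sin\theta\,Y_a\bigr].
\]
Its Bloch vector traces a circle of radius $p_{\mathrm{acc}}$ centered at $\alpha$. The circle encloses the origin iff $p_{\mathrm{acc}}>\alpha$, giving $\theta_B=\pi$ in the YES case and $0$ in the NO case. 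Every term is at most 5-local, and $\|\dot H\|,\|\ddot H\|\leq\lambda=O(1)$. After rescaling so that the local norms are at most one, $\Delta_{\min}=\Theta(\lambda/T)$ is inverse-polynomial.

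The symmetry is $R=X_a$. It satisfies $X_aX_aX_a=X_a$ and $X_aY_aX_a=-Y_a$ while commuting with $H_{\mathrm{FK}}$, so $RH(\theta)R^\dagger=H(-\theta)$ holds exactly for the full Hamiltonian, and we can apply Eq.~\eqref{eq:berry-parity-formula}. At $\theta=0,\pi$ the Hamiltonian commutes with $X_a$ and decomposes into $X_a=\pm1$ sectors. In each sector it reads $H_{\mathrm{FK}}\pm\lambda\ket{T}\!\bra{T}(\alpha\pm' \Pi_{\mathrm{acc}})$, with $\pm'$ being $+$ at $\theta=0$ and $-$ at $\theta=\pi$. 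Using the Schrieffer--Wolff bound from Sec.~\ref{subsec:SW}, the lowest energies in the two sectors are $E_0\mp\lambda(\alpha\pm' p_{\mathrm{acc}})/(T+1)+O(\lambda^2T^2)$, where the error term uses gap $\Omega(1/T^2)$. At $\theta=0$ the $X_a=-1$ sector always wins, so $\xi_0=-1$. At $\theta=\pi$ the coefficient is $\alpha-p_{\mathrm{acc}}$, whose sign flips between the YES case ($\approx-1/2$) and the NO case ($\approx+1/2$), so $\xi_\pi=+1$ or $-1$ respectively. Hence $e^{i\theta_B}=\xi_0\xi_\pi$ is $-1$ (YES) or $+1$ (NO), exactly. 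Taking $a,b$ with the arcs separating $0$ from $\pi$ then works for any $\varepsilon\leq\pi/4$.

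The main obstacle is quantitative. We must show that the two-level effective splitting, of order $\lambda|\alpha\pm p_{\mathrm{acc}}|/T$ on the whole loop, dominates the second-order Schrieffer--Wolff error $O(\lambda^2T^2)$. This choice of scale gives both the inverse-polynomial gap $\Delta_{\min}$ along the entire loop and nondegeneracy. It requires $\lambda\ll 1/T^3$ and a uniform lower bound on $|(\alpha+p\cos\theta,\,p\sin\theta)|\geq|\alpha-p|\geq 1/2-\delta$. The remaining step is the guiding state. Take $\ket{u}=\frac{1}{\sqrt{T+1}}\sum_t\ket{t}\ket{x_t}\otimes\ket{-}_a$, the ground state of the $\theta=0$ effective model. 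For this to be a classical subset state we restrict, as in Ref.~\cite{hayakawa2025computational}, to circuits where the history state is approximately a uniform superposition of computational basis states. This can be arranged by padding with long identity stretches, so that the overlap with $\ket{\psi_0(0)}$ is $1-1/\mathrm{poly}(n)$; the Schrieffer--Wolff rotation contributes only $O(\lambda T^2)$ further infidelity.
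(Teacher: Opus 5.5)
Your proposal is correct and takes essentially the same route as the paper. Both use a Kitaev history Hamiltonian with one ancilla, an SSH-type perturbation gated on the final clock time, the exact symmetry $X_aH(\theta)X_a=H(-\theta)$ with $e^{i\theta_B}=\xi_0\xi_\pi$, a sector-by-sector perturbative comparison at $\theta=0,\pi$ giving $\xi_0=-1$ and $\xi_\pi=\pm1$, and an initial-idling guiding state tensored with $\ket{-}_a$.

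The differences are minor. You let $p_{\mathrm{acc}}$ set the radius of the circle at fixed center $1/2$, whereas the paper lets $3/2-p_1$ set the center at fixed radius $1$; the two perturbations coincide at $\theta=\pi$. Your guiding state should also be stated explicitly as $\frac{1}{\sqrt{M+1}}\sum_{t\le M}\ket{0^n}\ket{t}\otimes\ket{-}_a$, with the identity padding placed before the computation. Writing it as a sum over all $t$ of basis states $\ket{x_t}$ is not well defined for a quantum circuit.
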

\begin{proof}
    We begin by recalling the encoding illustrated in Fig.~\ref{fig:summary-of-result}(a). 
    The circuit acceptance probability determines the hopping amplitude of an effective SSH Hamiltonian, so that the corresponding trajectory encloses the origin for YES instances and does not for NO instances, yielding Berry phases \(\pi\) and \(0\), respectively. 
    The proof below makes this effective picture rigorous for the full Hamiltonian. 
    We first establish the low-energy effective Hamiltonian and spectral gap using the Schrieffer--Wolff transformation.
    Then we use an exact symmetry to guarantee the quantization of the full Berry phase. 
    Finally, we construct a simple guiding state with large overlap with the initial ground state.
    
    \proofpara{Construction of the parametrized Hamiltonian}
    Let $x$ be an instance of a \BQP\ problem.
    Let \(U_x=U_TU_{T-1}\cdots U_1\underbrace{I\cdots I}_{M}\) be a
    quantum circuit with pre-amplified success probability and \(M\)-step initial idling
    on \(n\) qubits such that
    \begin{itemize}
        \item If $x \in L_{\mathrm{yes}}$, then $p_1= \| \Pi_1 U_x \ket{0^n} \|^2 = 1 - O(2^{-n})$,
        \item If $x \in L_{\mathrm{no}}$, then $p_1  =O(2^{-n})$.
    \end{itemize}
    The $n$-qubit register is initialized to $\ket{0^n}$.
    Here, $M$ is chosen such that $T/(T+M+1)\leq 1/\mathrm{poly}(n)$.

    Let us consider a standard 5-local circuit-to-Hamiltonian construction \cite{kitaev2002classical} without the output term $H_{\mathrm{hist}}=H_\text{in}+H_\text{prop}+H_\text{clock}$ on $\mathcal{H}_n\otimes \mathcal{H}_\text{clock}$ where
    \begin{align*}
         & H_{\text{in}} \coloneqq \sum_i \ket{1}\bra{1}_{A_i}  \otimes \ket{0}\bra{0}_{C_1}                  \\
         & H_{\text{clock}} \coloneqq \sum_{t=1}^{T+M-1} \ket{0}\bra{0}_{C_t}\otimes \ket{1}\bra{1}_{C_{t+1}} \\
         & H_{\text{prop}} \coloneqq
        \sum_{t=1}^{T+M} H_t
    \end{align*}
    where
    $$H_t = -\frac{1}{2}U_t\otimes\ket{t}\bra{t-1}_C-\frac{1}{2} U_{t}^\dagger \otimes \ket{t-1}\bra{t}_C +\frac{1}{2}(\ket{t}\bra{t}_C+\ket{t-1}\bra{t-1}_C). $$
    Here, $\mathcal{H}_n$ is the $n$-qubit Hilbert space and $\mathcal{H}_\text{clock}$ is the clock space where time is represented by unary, and $C_i$ represents the $i$-th qubit in the clock register.
    The expression for $H_t$ is its restriction to the legal clock subspace, where $\ket t=\ket{1^t0^{T+M-t}}$, as used in Ref.~\cite{kitaev2002classical}.
    Then the non-degenerate ground state of $H_{\mathrm{hist}}$ is given by
    $$
        \ket{\psi_{\text{hist}}}\coloneqq
        \frac{1}{\sqrt{T+M+1}}
        \sum_{t=0}^{T+M}
        U_tU_{t-1}...U_1 \ket{0^n}\otimes \ket{t},
    $$
    where 
    $U_t=I_{n}$ for $1\leq t\leq M$, and \(U_{M+j}\) is the \(j\)-th gate of the original amplified computation for \(1\leq j\leq T\).
    Let \(N\coloneqq T+M+1\).
    Note that $H_{\mathrm{hist}}$ has a spectral gap of $\Delta(H_\text{hist})=\Omega\left(\frac{1}{N^3}\right)$ \cite{gharibian2012hardness}.

    Adding one ancilla qubit, we let $H(\theta)$  be a Hamiltonian on $\mathcal{H}_n\otimes \mathcal{H}_a\otimes \mathcal{H}_\text{clock}$ defined as
    \begin{equation}\label{eq:hamiltonian-5-local}
        H(\theta) \coloneqq H_{\mathrm{hist}} + r V(\theta)
    \end{equation}
    where \(r\) is an inverse-polynomial parameter chosen below, and
    \begin{equation}\label{eq:perturbation}
        V(\theta):=
        \left[ \left(
        \frac{3}{2} - \Pi_1 + \cos{\theta}\right) \otimes X_a + \sin{\theta} Y_a 
        \right]
        \otimes \ket{1}\bra{1}_{C_{T+M}},
    \end{equation}
    where $\Pi_1 = \ket{1}\bra{1}_\text{out}$.
    Here $\mathcal{H}_a$ is a single qubit Hilbert space for the added ancilla qubit, and \(C_{T+M}\) is the final clock qubit. On the legal clock subspace, \(\ket{1}\bra{1}_{C_{T+M}}\) coincides with the projector onto the final clock state \(\ket{T+M}\). 
    It can be seen that
        {$H(0)=H(2\pi)$ and}
    $$\left\|\frac{\mathrm{d}H(\theta)}{\mathrm{d}\theta}\right\|\in \mathcal{O}(1) 
    \text{ \ and \ } 
    \left\|\frac{\mathrm{d}^2H(\theta)}{\mathrm{d}\theta^2}\right\|\in \mathcal{O}(1)$$
    for all $\theta\in  \mathbb{R}$.

    The restriction of $V(\theta)$ to the ground space of $H_\text{hist}$ yields a standard SSH model.
    That is, let $P_0=\ket{\psi_\text{hist}}\bra{\psi_\text{hist}}\otimes I_a$ be a projector onto the ground space of $H_\text{hist}$, then we obtain 
    \begin{equation*}
        P_0 V(\theta) P_0 
        = \ket{\psi_\text{hist}}\bra{\psi_\text{hist}} \otimes 
        V^{(1)}(\theta)
    \end{equation*}
    where
    \begin{equation*}
        V^{(1)}(\theta) 
        \coloneqq 
        \frac{1}{N} 
        \left[ \left(
        v_\text{eff}(p_1) + \cos{\theta}\right) X_a + \sin{\theta} Y_a 
        \right], 
        \quad 
        v_\text{eff}(p_1)\coloneqq \frac{3}{2} - p_1.
    \end{equation*}
    This is exactly the Hamiltonian of SSH model and Berry phase of $V^{(1)}(\theta)$ is 0 (resp. $\pi$) if $v_\text{eff}(p_1) > 1$ (resp. $v_\text{eff}(p_1)<1$).
    By the definition of \BQP circuit, we have $v_\text{eff}(p_1) \leq \frac{1}{2}+O(2^{-n})$ for the YES instance and $v_\text{eff}(p_1) \geq \frac{3}{2}-O(2^{-n})$ for the NO instance.
    We will use these facts to analyze the Berry phase of our parametrized Hamiltonian $H(\theta)$ later.

    \proofpara{Analysis of the spectral gap}
    Now we show that the Hamiltonian $H(\theta) = H_{\mathrm{hist}} + r V(\theta)$ has a unique ground state and has a spectral gap of $\Delta(H(\theta))\geq 1/\mathrm{poly}(n)$ for all $\theta$.
    By direct calculation, we can easily confirm that $\|rV(\theta)\|\leq \frac{5}{2}r$.
    Thus, as far as $\frac{5}{2}r \leq \frac{\Delta(H_\text{hist})}{4}$, the ground energy and the first excited energy are separated from the second excited energy by $\Delta(H_\text{hist})/2\geq 1/\mathrm{poly}(n)$.

    Next we analyze the ground energy and the first excited energy of $H(\theta)$ by Schrieffer–Wolff transformation.
    As reviewed in Sec.~\ref{subsec:SW}, the effective low energy Hamiltonian is given by
    \begin{equation*}
        P_0 U_{SW} (H_\text{hist}+rV(\theta))U_{SW}^\dagger P_0 
        = rP_0V(\theta)P_0 + O\left( \frac{r^2}{\Delta(H_\text{hist})} \right).
    \end{equation*}
    Recall that 
    \begin{equation*}\label{eq:first-order-SSH}
        P_0 V(\theta) P_0 
        = \ket{\psi_\text{hist}}\bra{\psi_\text{hist}} \otimes 
        V^{(1)}(\theta), \quad
        V^{(1)}(\theta) 
        \coloneqq 
        \frac{1}{N} 
        \left[ \left(
        v_\text{eff}(p_1) + \cos{\theta}\right) X_a + \sin{\theta} Y_a 
        \right], 
        \quad 
        v_\text{eff}(p_1)\coloneqq \frac{3}{2} - p_1.
    \end{equation*}
    Two eigenvalues of $V^{(1)}(\theta)$ are calculated as
    \begin{equation*}
        E_{\pm}^{(1)} 
        = \pm \frac{1}{N} | v_\text{eff}(p_1) + e^{i\theta} | 
        = \pm \frac{1}{N} \sqrt{v_\text{eff}^2 +1 +2v_\text{eff}\cos{\theta}} .
    \end{equation*}
    Since $0\leq p_1\leq 1$, $1/2\leq v_\text{eff}(p_1)\leq 3/2$, the minimum value of $\sqrt{v_\text{eff}^2 +1 +2v_\text{eff}\cos{\theta}}$ is achieved with $e^{i\theta}=-1$.
    Recalling that we have the promise that $p_1 \leq O(2^{-n})$ or $p_1 \geq 1-O(2^{-n})$, we have $\frac{1}{3}\leq \frac{1}{2}-O(2^{-n}) \leq \sqrt{v_\text{eff}^2 +1 +2v_\text{eff}\cos{\theta}} \leq \frac{5}{2}$
    and we obtain
    \begin{equation*}
        \frac{1}{3N} \leq E_+^{(1)}\leq \frac{5}{2N}, \quad
        -\frac{5}{2N} \leq E_-^{(1)}\leq -\frac{1}{3N},
    \end{equation*}
    for all $\theta$.
    Combined with the argument of Schrieffer–Wolff transformation, the two lowest energies of $H(\theta)$ are
    \begin{equation*}
        -\frac{5r}{2N} - O(N^3r^2) \leq E_0 \leq -\frac{r}{3N} + O(N^3r^2),
    \end{equation*}
    \begin{equation*}
        \frac{r}{3N} - O(N^3r^2) \leq E_1 \leq \frac{5r}{2N} + O(N^3r^2)
    \end{equation*}
    As a result, the ground space of $H(\theta)$ is non-degenerate and, as far as $r=o(1/N^4)$ its spectral gap is lower bounded by $\Delta_{\min} = \frac{2r}{3N} - O(N^3r^2) \geq 1/\mathrm{poly}(n)$.
    Note that $r=o(1/N^4)$ is sufficiently small with respect to $\Delta(H_\mathrm{hist})/10$, so the first excited energy and the second excited energy are still well separated by the amount of $\Delta(H_\text{hist})/2\geq 1/\mathrm{poly}(n)$.

    \proofpara{Analysis of the Berry phase}
    We consider the unitary $U_{\text{sym}}=I\otimes X_a$, then the parametrized Hamiltonian has a symmetry condition
    \begin{equation*}
        U_{\text{sym}}H(\theta)U_{\text{sym}}^\dagger = H(-\theta).
    \end{equation*}
    As discussed in Section~\ref{subsec:BP} and Appendix~\ref{app:berry-symmetry}, the unique ground states of $H(0)$ and $H(\pi)$ are also eigenstates of $U_{\text{sym}}$.
    That is,
    \begin{equation*}
        U_{\text{sym}}\ket{\psi_0(0)}
        = \xi_0 \ket{\psi_0(0)}, \quad 
        U_{\text{sym}}\ket{\psi_0(\pi)}
        = \xi_\pi \ket{\psi_0(\pi)},
    \end{equation*}
    where $\ket{\psi_0(\theta)}$ is a unique ground state of $H(\theta)$.
    Then we can calculate the Berry phase as 
    \begin{equation}\label{eq:berry-sym}
         e^{i\theta_B}=\xi_0\xi_\pi,
    \end{equation}
    as shown in Appendix~\ref{app:berry-symmetry}.
    Therefore we only have to compute the eigenvalues $\xi_0$ and $\xi_\pi$ of $U_{\text{sym}}$ for $\ket{\psi_0(0)}$ and $\ket{\psi_0(\pi)}$.

    To obtain $\xi_0$ and $\xi_\pi$, we use Schrieffer–Wolff transformation again.
    Since $U_{\text{sym}}=I\otimes X_a$ commutes with $H(0)$ and $H(\pi)$, the ground states $\ket{\psi_0(0)}$ and $\ket{\psi_0(\pi)}$ belong to either $+1$ eigenspace of $I\otimes X_a$ or $-1$ eigenspace of $I\otimes X_a$.
    In the $+1$ eigenspace of $I\otimes X_a$, we can write
    \begin{equation*}
        H(0) = H_\text{hist} + r\left( \frac{5}{2} - \Pi_1 \right) \otimes \ket{1}\bra{1}_{C_{T+M}}, 
        \quad 
        H(\pi) = H_\text{hist} + r\left( \frac{1}{2} - \Pi_1 \right) \otimes \ket{1}\bra{1}_{C_{T+M}}.
    \end{equation*}
    Using the first-order non-degenerate perturbation theory, we get the ground energy
    \begin{equation*}
        E_{0,+}(0) = \frac{r}{N}\left( \frac{5}{2} - p_1 \right) + O\left(\frac{r^2}{\Delta(H_\text{hist})}\right),
        \quad 
        E_{0,+}(\pi) = \frac{r}{N}\left( \frac{1}{2} - p_1 \right) + O\left(\frac{r^2}{\Delta(H_\text{hist})}\right).
    \end{equation*}
    Similarly, we can calculate the ground energy in the $-1$ eigenspace of $I\otimes X_a$ as
    \begin{equation*}
        E_{0,-}(0) = - \frac{r}{N}\left( \frac{5}{2} - p_1 \right) + O\left(\frac{r^2}{\Delta(H_\text{hist})}\right),
        \quad 
        E_{0,-}(\pi) = - \frac{r}{N}\left( \frac{1}{2} - p_1 \right) + O\left(\frac{r^2}{\Delta(H_\text{hist})}\right).
    \end{equation*}
    Recalling that $0 \leq p_1 \leq 1$, the ground state of $H(0)$ always belongs to the $-1$ eigenspace of $I\otimes X_a$, and then $\xi_0=-1$.
    On the other hand, the ground state of $H(\pi)$ belongs to the $+1$ ($-1$ resp.) eigenspace of $I\otimes X_a$ in the YES (NO resp.) case.
    Therefore
    \begin{equation*}
        \xi_\pi = 
        \begin{cases}
            +1 & \text{if } x\in L_\text{yes} \\
            -1 & \text{if } x\in L_\text{no}.
        \end{cases}
    \end{equation*}
    It follows from Eq.~\eqref{eq:berry-sym} that
    \begin{equation*}
        \theta_B = 
        \begin{cases}
            \pi & \text{if } x\in L_\text{yes} \\
            0 & \text{if } x\in L_\text{no}.
        \end{cases}
    \end{equation*}

    \proofpara{Guiding state}
    Let 
    \begin{equation*}
        \ket{c_\text{hist}} \coloneqq \frac{1}{\sqrt{M+1}}\sum_{t=0}^M \ket{0^n}\otimes \ket{t}.
    \end{equation*}
    The initial idling generates a large overlap with the history state:
    \begin{equation*}
        |\braket{\psi_\text{hist}|c_\text{hist}}|^2\geq \frac{M+1}{T+M+1}=1-\frac{T}{T+M+1}\geq 1-1/\mathrm{poly}(n)
    \end{equation*}
    As shown before, the ground state of $H(0)$ lies in the $-1$ eigenspace of $X_a$.
    Therefore we choose
    \begin{equation*}
        \ket{c} = \ket{c_\text{hist}}\otimes \ket{-} _a.
    \end{equation*}
    By the first-order perturbation theory, the ground state $\ket{\psi_0(0)}$ of $H(0)$ has a large overlap with $\ket{\psi_\text{hist}}\otimes \ket{-} _a$:
    \begin{equation*}
        |\bra{\psi_0(0)}(\ket{\psi_\text{hist}}\ket{-} _a)|^2 \geq 1- O(r/\Delta(H_\text{hist})) \geq 1 - 1/\mathrm{poly}(n),
    \end{equation*}
    where the last inequality holds since we have chosen $r$ such that $r=o(1/N^4)=o(\Delta(H_\text{hist}))$.
    Therefore we obtain $|\braket{\psi_0(0)|c}|^2 \geq 1 - 1/\mathrm{poly}(n)$.
    Sampling-and-query-access to $\ket{c}$ can be constructed classically in polynomial time since $\ket{c}$ only has a support on polynomially many number of computational basis states, then we finish the proof.
\end{proof}

The quantum algorithm of Ref.~\cite{hayakawa2025computational} (Theorem~2 and Section~4) gives containment in $\BQP$ when $1/\varepsilon$, $1/\Delta_{\min}$, $1/\gamma$, $B_1$ and $B_2$ are polynomially bounded, using the given preparation circuit for the guiding state. 
Combined with Theorem~\ref{thm:hardness-BPE}, this gives $\BQP$-completeness in that regime.

\proofpara{Implications for adiabatic quantum computation}
The construction in Theorem~\ref{thm:hardness-BPE} admits a direct interpretation as an adiabatic computation in which the computational output is encoded in a quantized Berry phase. 
The guiding state $\ket{c}$ constructed above can be prepared efficiently and has squared overlap at least $1-1/\mathrm{poly}(n)$ with the ground state of $H(0)$.
Starting from this state, one can adiabatically evolve the system from $\theta = 0$ to $\theta = \pi$ in polynomial time since $\Delta_{\min}^{-1}$, $B_1$, and $B_2$ are polynomially bounded~\cite{jansen2007bounds}.
The ground state of $H(\pi)$ is an eigenstate of $X_a$ with eigenvalue $+1$ for a YES instance and $-1$ for a NO instance.
Therefore, measuring $X_a$ distinguishes the two cases with bounded error even in the presence of initial-state and adiabatic errors.
Since $\xi_0=-1$, Eq.~\eqref{eq:berry-sym} gives $e^{i\theta_B}=-\xi_\pi$, and this measurement reads out the binary invariant of the Hamiltonian loop.

This provides a complementary perspective on the standard universality result of adiabatic quantum computation~\cite{aharonov2008adiabatic}.
In the standard setting, the output is encoded in the ground state of the final Hamiltonian, which can change under gap-preserving deformations of the Hamiltonian path.
In our construction, by contrast, the output is encoded in a symmetry-quantized topological invariant.
The encoded answer remains unchanged under continuous deformations of the loop that preserve $X_aH(\theta)X_a=H(-\theta)$ and keep the ground state nondegenerate and gapped throughout the loop.
Indeed, the ground states at \(\theta=0,\pi\) vary continuously under such a deformation, while their eigenvalues of \(X_a\) are restricted to $\pm 1$. 
Thus \(\xi_0\) and \(\xi_\pi\) cannot change, and Eq.~\eqref{eq:berry-sym} fixes the Berry phase.
This invariance motivates the use of quantized Berry phases as robust encodings of computational outputs in adiabatic quantum computation.

\section{Improving hardness for physical Hamiltonians}\label{sec:hardness-physical}

We now strengthen the hardness result of Theorem~\ref{thm:hardness-BPE} to physically motivated Hamiltonians.
Specifically, we show that the hardness persists for any fixed non-2SLD Hamiltonian on a 2D square lattice.  

Our proof is mainly based on Refs.~\cite{cubitt2018universal, zhou2026universal, cade2022improved}.
However, we have to preserve the symmetry to quantize the Berry phase and to ensure that the Hamiltonian is smooth throughout the entire parameter region.
To this end, it is insufficient to simulate the spectrum at each parameter point and it requires additional care.

A fixed finite set $\mathcal S$ of Hermitian interactions on at most two qubits is \emph{non-2SLD} if there is no single-qubit unitary $U$ such that every two-qubit $S\in\mathcal S$ has the form
\begin{equation*}
 (U\otimes U)S(U^\dagger\otimes U^\dagger)
 =\alpha_S Z\otimes Z+A_S\otimes I+I\otimes B_S.
\end{equation*}
An $\mathcal S$-Hamiltonian is a local Hamiltonian each of whose terms is drawn from $\mathcal{S}$ with a real coefficient.
Coefficients of either sign are allowed, and thus the theorem does not impose an antiferromagnetic restriction.
On the square lattice, two-qubit interactions act only on nearest-neighbor sites.

\begin{thm}\label{thm:hardness-physical-coefficient-variable}
    For every fixed non-2SLD interaction set $\mathcal{S}$, the restriction of guided Berry phase estimation to $\mathcal{S}$-Hamiltonians on a 2D square lattice is $\BQP$-hard with any $0<\varepsilon \leq \pi/4$ and $\Delta_{\min}=\Theta(1/\mathrm{poly}(n))$.
    The hardness holds even for $B_1, B_2 =O(1)$, a guiding state $\ket{u}$ of a semi-classical encoded state satisfying $|\braket{\psi_0(0)|u}|^2\geq 1-1/\mathrm{poly}(n)$, and the Berry phase promised to be $\theta_B=0$ or $\pi$.
\end{thm}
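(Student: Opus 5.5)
We reduce from the construction of Theorem~\ref{thm:hardness-BPE}. The reduction has to keep the spectral gap, the smoothness of the loop, and an exact reflection symmetry $R$ with $RH(\theta)R^\dagger=H(-\theta)$. Once the symmetry survives, the quantization formula \eqref{eq:berry-parity-formula} reduces Berry-phase preservation to preserving the two discrete eigenvalues $\xi_0,\xi_\pi$. Those are fixed whenever the low-energy subspace of the simulator is close, via an encoding isometry commuting with the symmetry, to that of the source Hamiltonian at $\theta=0,\pi$.

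\textbf{Step 1: swap-symmetric source Hamiltonian.} The symmetry $I\otimes X_a$ is a single-qubit operation, which is not natural on a lattice. We therefore first rebuild the 5-local instance so that the symmetry becomes a register exchange. We take two copies of the computational and clock registers, $L$ and $R$, and replace the ancilla qubit by a ``which copy'' degree of freedom. Concretely, $H_{\mathrm{hist}}^{L}+H_{\mathrm{hist}}^{R}$ plus a penalty term restricts the low-energy space to one excitation living on the left or on the right, spanned by $\ket{\psi_{\mathrm{hist}}}_L\ket{\mathrm{idle}}_R$ and $\ket{\mathrm{idle}}_L\ket{\psi_{\mathrm{hist}}}_R$. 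On this space the swap $S$ acts as $X_a$. The terms $\cos\theta\,X_a$ and $(3/2-\Pi_1)X_a$ become symmetric hopping terms between the two final-time clock states. The term $\sin\theta\,Y_a$ becomes an antisymmetric hopping term, which is odd under $S$ exactly as $Y_a$ is odd under conjugation by $X_a$. Repeating the Schrieffer--Wolff analysis of Theorem~\ref{thm:hardness-BPE} gives the same effective SSH model. It also gives $\xi_0=-1$, and $\xi_\pi=\pm1$ according to YES/NO. The guiding state is the symmetric/antisymmetric combination of the two idling states.

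At this stage we also remove coefficient zero crossings. We rewrite each $\theta$-dependent term with a constant offset, for instance $(c+\cos\theta)$ with $c>1$, compensating by a $\theta$-independent term. This ensures that every $\theta$-dependent coefficient is bounded away from zero along the loop, so the later perturbative gadgets, whose couplings scale as powers of the coefficients, stay smooth with $B_1,B_2=O(1)$ after rescaling.

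\textbf{Step 2: reduction to the lattice.} We apply the chain from Refs.~\cite{cubitt2018universal,zhou2026universal,cade2022improved}: locality reduction to 2-local qubit terms, then subdivision and fork gadgets to reach the square lattice, then encoding into the fixed non-2SLD set $\mathcal S$ via the universality gadgets of Ref.~\cite{cubitt2018universal}. Throughout, we lay out the lattice mirror-symmetrically, and mirror-image terms receive identical gadgets and mediator qubits. The lattice reflection $R$ then exactly satisfies $RH'(\theta)R^\dagger=H'(-\theta)$: $\theta$-even terms are reflection symmetric, and the odd terms map to their negatives. The odd term needs a sign-flipped gadget pair, obtained by choosing gadget coefficients $\propto\sin\theta$ whose images carry $-\sin\theta$. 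This is allowed because arbitrary signs are permitted. The $\theta$-dependence is confined to the constant-size region around the final clock qubits.

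Each gadget yields $\|H'_{\mathrm{eff}}(\theta)-\mathcal E H(\theta)\mathcal E^\dagger\|\le\delta$ with an isometry $\mathcal E$ that commutes with the symmetries ($\mathcal E S=R\mathcal E$). Taking $\delta$ a small fraction of $\Delta_{\min}$ preserves non-degeneracy and an inverse-polynomial gap. The ground states at $\theta=0,\pi$ are $\delta/\Delta$-close to encoded source ground states, so their $R$-eigenvalues equal $\xi_0,\xi_\pi$, and hence $\theta_B\in\{0,\pi\}$ as before. Finally, we rescale the Hamiltonian so that local terms have norm $\le1$ and derivatives remain $O(1)$. The guiding state is the semi-classical encoding of the Step 1 guiding state, and it keeps overlap $1-1/\mathrm{poly}(n)$.

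\textbf{Main obstacle.} The hardest part will be Step 2's requirement of \emph{uniform-in-$\theta$} smooth simulation that commutes with $R$. Standard gadget constructions pick parameters pointwise and may introduce $\theta$-dependent isometries or diverging derivatives. The fix I would try first is to keep all gadget energy scales $\theta$-independent, letting only the final, non-vanishing target coefficients vary smoothly. I would then bound $\|\dot H'\|$ by tracking the polynomial dependence of the gadget couplings on those coefficients, which Step 1's offset makes bounded away from zero.
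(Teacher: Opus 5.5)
\textbf{Gap 1: Step~1 cannot be built from local terms.} You keep two separate computational registers. Then $\ket{\psi_{\mathrm{hist}}}_L\ket{\mathrm{idle}}_R$ and $\ket{\mathrm{idle}}_L\ket{\psi_{\mathrm{hist}}}_R$ differ on entire registers. A ``hopping between the two final-time clock states'' would have to move the computed state $U_x\ket{0^n}$ from the $L$ register to the $R$ register. No $O(1)$-local term has a non-negligible, $p_1$-dependent matrix element of that kind.

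The low-energy space is also wrong. $H^L_{\mathrm{hist}}+H^R_{\mathrm{hist}}$ has the unique ground state $\ket{\psi_{\mathrm{hist}}}_L\ket{\psi_{\mathrm{hist}}}_R$. Adding a penalty cannot create a degenerate pair in which one copy is idle.

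The missing idea is to keep \emph{both} history states present and to carry the which-side bit on two extra qubits $b_1,b_2$ with $H_{\mathcal C}=(I+Z_{b_1}Z_{b_2})/2$, as the paper does. The even term then uses the $2$-local real hopping $\mathrm{SWAP}_{b_1,b_2}$, with coefficient $\frac32-\Pi_{1,\alpha}+\cos\theta$. The odd term uses the diagonal $(Z_{b_1}-Z_{b_2})/2$, with coefficient $\sin\theta$. Both are conditioned on the final clock qubit of copy $\alpha$ and summed over $\alpha=L,R$ with weight $\frac12$. Exchanging the two halves together with $\mathrm{SWAP}_{b_1,b_2}$ maps $H(\theta)$ to $H(-\theta)$ exactly, and on the ground space it acts as the logical symmetry.

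\textbf{Gap 2: you never make the target real.} Your odd term is an imaginary (antisymmetric) hopping, and the circuit uses an arbitrary gate set, so the source Hamiltonian is complex. Real interaction sets such as Heisenberg or XY cannot simulate a generic complex target without also encoding its complex conjugate. The universality chain of Ref.~\cite{cubitt2018universal} does this through encodings $M\mapsto V(M\otimes P+\bar M\otimes Q)V^\dagger$. The simulator's low-energy space therefore contains both $\ket{\psi_0(\theta)}$ and its complex conjugate. Its ground state is degenerate up to the simulation error, so the nondegeneracy and gap promises fail. This is forced: a nondegenerate ground state of a real Hamiltonian is real up to phase.

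The paper avoids this before any gadget is applied. It compiles the circuit into the real gate set $\{H,\mathrm{Toffoli}\}$ and conjugates the ancilla by $R_a=SH$, which gives the real loop $(\frac32-\Pi_1+\cos\theta)Z_a+\sin\theta X_a$. You need the same step, and a real diagonal odd term instead of an imaginary hopping.

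\textbf{A further problem with smoothness.} Your smoothness fix is applied at the wrong stage. The subdivision, 3-to-2 and $Y$-removal gadgets produce new $\theta$-dependent $2$-local coefficients. A source-level offset does not guarantee that these stay away from zero, yet they are the ones square-rooted in the Heisenberg/XY encoding. The paper instead splits each $j(\theta)P_uQ_v$ immediately before that step as $C_+-C_-$, with $C_\pm=(M\pm j)/2\geq 1/2$ and two mediator qubits, so only the smooth $\sqrt{C_\pm/2}$ appear.
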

\begin{proof}
    \proofpara{Construction and analysis of the real and spatially sparse parametrized Hamiltonian}
    We first modify the 5-local Hamiltonian in Eq.~\eqref{eq:hamiltonian-5-local} to be real in the computational basis and spatially sparse.
    Let $x$ be an instance of a \BQP\ problem.
    Let \(U_x=U_TU_{T-1}\cdots U_1\) be a
    quantum circuit with pre-amplified success probability on \(n\) qubits such that
    \begin{itemize}
        \item If $x \in L_{\mathrm{yes}}$, then $p_1= \| \Pi_1 U_x \ket{0^n} \|^2 = 1 - O(2^{-n})$,
        \item If $x \in L_{\mathrm{no}}$, then $p_1  =O(2^{-n})$.
    \end{itemize}
    Without loss of generality, we take the encoded quantum circuit to be real, using a universal gate set of $\{H,\mathrm{Toffoli}\}$~\cite{shi2002both, aharonov2003simple}.
    Note that we do not add the initial idling in this step.
    
    Next, using standard SWAP-based sparsification~\cite{oliveira2005complexity, zhou2026universal}, where the computational information is moved onto the next fresh qubits by SWAP operators after each single gate operation, we transform the circuit into a real spatially sparse circuit $U_x^{\mathbb R,\mathrm{sp}} = U_{T_{\mathrm{sp}}}\cdots U_1$ on $n_{\mathrm{sp}}$ qubits.
    Here $T_{\mathrm{sp}}=\mathrm{poly}(n)$, $n_{\mathrm{sp}}=\mathrm{poly}(n)$, and each qubit participates in only a constant number of nontrivial gates.
    By placing each clock qubit near the corresponding qubit for computation, the resulting circuit-to-Hamiltonian construction can be arranged so that its interaction graph is spatially sparse and 6-local.
    Note that penalty terms for the initial state can also be chosen so that the Hamiltonian is spatially sparse as shown in Ref.~\cite{oliveira2005complexity} by checking each computational qubit just before applying computational gates.

    We now choose the length $M$ of the initial idling step.
    The initial idling does not require introducing any additional computational qubits, and it suffices to append a one-dimensional chain consisting only of clock qubits before the computation begins.
    We choose $M$ such that
    \begin{equation*}
    \frac{T_{\mathrm{sp}}}
    {T_{\mathrm{sp}}+M+1}
    \le
    \frac{1}{\operatorname{poly}(n)}.
    \end{equation*}
    The resulting Hamiltonian $H_\text{hist}^{\mathbb{R},\mathrm{sp}}$ therefore remains spatially sparse and 6-local.
    The ground state of $H_\text{hist}^{\mathbb{R},\mathrm{sp}}$ is the corresponding history state
    \begin{equation*}
        \ket{\psi_\text{hist}}
        =\frac{1}{\sqrt{T_{\mathrm{sp}}+M+1}} \sum_{t=0}^{T_{\mathrm{sp}}+M} U_t U_{t-1}\dots U_1 \ket{0^{n_{\mathrm{sp}}}}\otimes \ket{t},
    \end{equation*}
    and, by the same argument as Ref.~\cite{oliveira2005complexity}, the spectral gap of the Hamiltonian is $\Delta(H_\text{hist}^{\mathbb{R},\mathrm{sp}})=\Omega\left(\frac{1}{N^3}\right)$, where we redefine $N=T_{\mathrm{sp}}+M+1$.
    Here we reindex the padded circuit so that $U_t=I$ for $1\leq t\leq M$ and $U_{M+j}$ is the $j$-th gate of the sparse circuit.
    Taking the guiding state as 
    \begin{equation*}
        \ket{c_\text{hist}}
        =\ket{0^{n_{\mathrm{sp}}}}\otimes
        \frac{1}{\sqrt{M+1}} \sum_{t=0}^{M} \ket{t},
    \end{equation*}
    we obtain
    \begin{equation*}
        |\braket{\psi_\text{hist}|c_\text{hist}}|^2
        =\frac{M+1}{N} =1-\frac{T_{\mathrm{sp}}}{N} \geq 1-\frac{1}{\mathrm{poly}(n)}.
    \end{equation*}

    As in the previous section, we introduce one additional ancilla qubit, add the perturbation term $V(\theta)$ in Eq.~\eqref{eq:perturbation}, and obtain
    \begin{equation*}
        H(\theta) 
        = H_\text{hist}^{\mathbb{R},\mathrm{sp}}
        +r V(\theta)
    \end{equation*}
    However, $V(\theta)$ contains imaginary number elements, so we apply the fixed single-qubit rotation $R_a=SH$.
    Since
    \begin{equation*}
        R_aX_aR_a^\dagger=Z_a,
        \quad
        R_aY_aR_a^\dagger=X_a,
    \end{equation*}
    the Hamiltonian is transformed to
    \begin{align}\label{eq:real-5-local}
        H^{\mathbb{R},\mathrm{sp}}(\theta)
        =
        R_a H(\theta) R_a^\dagger
        =
        H_\text{hist}^{\mathbb{R},\mathrm{sp}}
        +
        r V^{\mathbb{R},\mathrm{sp}}(\theta)
        =
        H_\text{hist}^{\mathbb{R},\mathrm{sp}}
        +r 
        \left[
            \left(
                \frac{3}{2}-\Pi_1+\cos{\theta}
            \right)\otimes Z_a
            +
            \sin{\theta} X_a
        \right]
        \otimes \ket{1}\bra{1}_{C_{T+M}}.
    \end{align}
    This transformation is independent of $\theta$ and therefore leaves
    the Berry phase unchanged.
    $H^{\mathbb{R},\mathrm{sp}}(\theta)$ has a symmetry of $Z_aH^{\mathbb{R},\mathrm{sp}}(\theta)Z_a^\dagger=H^{\mathbb{R},\mathrm{sp}}(-\theta)$.
    Note that adding this perturbation term does not affect the spatial sparsity of the interaction graph since the output qubit, the final clock qubit, and the ancilla qubit for the perturbation term can be placed near the end of the computation.

    Then we can apply the same analysis of the spectral gap, Berry phase and guiding state to $H^{\mathbb{R},\mathrm{sp}}(\theta)$ as in the proof of Theorem~\ref{thm:hardness-BPE}.
    The spectral gap is lower bounded by $\Delta_{\min} = \frac{2r}{3N} - O(N^3r^2) \geq 1/\mathrm{poly}(n)$ for all $\theta$ as far as $r = o(1/N^4)$.
    The Berry phase satisfies that
    \begin{equation}\label{eq:berry-quantized-physical}
        \theta_B = 
        \begin{cases}
            \pi & \text{if } x\in L_\text{yes} \\
            0 & \text{if } x\in L_\text{no}.
        \end{cases}
    \end{equation}
    And the guiding state $\ket{c}=\ket{c_\text{hist}}\otimes \ket{1}_a$ satisfies 
    \begin{equation*}
        |\braket{\psi_0^{\mathbb{R},\mathrm{sp}}(0)|c}|^2
        \geq 1-\frac{1}{\mathrm{poly}(n)},
    \end{equation*}
    where $\ket{\psi_0^{\mathbb{R},\mathrm{sp}}(0)}$ is the unique ground state of $H^{\mathbb{R},\mathrm{sp}}(0)$.

    \proofpara{Encoding to the global symmetric system}
    Now we encode $H^{\mathbb{R},\mathrm{sp}}(\theta)$ to the system where the symmetry operator is a reflection of the entire system.
    This property makes it easy to ensure that the Hamiltonian preserves the symmetry under the reduction to physical Hamiltonians.

    We place two copies of $H_\text{hist}^{\mathbb{R},\mathrm{sp}}$ on two spatially separated registers, denoted by $L$ and $R$. 
    We write the corresponding Hamiltonians as $H_{\text{hist},L}^{\mathbb{R},\mathrm{sp}}$ and $H_{\text{hist},R}^{\mathbb{R},\mathrm{sp}}$, and denote the one-qubit projectors for the final clock and the output qubit by $\ket{1}\bra{1}_{C_{T+M},L}$, $\ket{1}\bra{1}_{C_{T+M},R}$ and $\Pi_{1,L}$, $\Pi_{1,R}$, respectively.
    We place two additional qubits, $b_1$ and $b_2$, near the output qubits and final clock qubits of the $L$ and $R$ registers, respectively, and define 
    \begin{equation*}
        H_\mathcal{C} = 
        \frac{I+Z_{b_1}Z_{b_2}}{2}.
    \end{equation*}
    We regard its two-dimensional ground space $\mathcal{C} = \mathrm{span}\{\ket{01},\ket{10}\}$ as a logical subspace and encode $V^{\mathbb{R},\mathrm{sp}}(\theta)$ in this subspace.
    Specifically, we choose the logical basis states as
    \begin{equation*}
        \ket{0_{\mathcal{C}}} \coloneqq
        \frac{\ket{01}+\ket{10}}{\sqrt{2}},
        \quad 
        \ket{1_{\mathcal{C}}} \coloneqq
        \frac{\ket{01}-\ket{10}}{\sqrt{2}},
    \end{equation*}
    and define the logical operators as
    \begin{equation*}
        Z_\mathcal{C} \coloneqq
        \frac{I+X_{b_1}X_{b_2}+Y_{b_1}Y_{b_2}+Z_{b_1}Z_{b_2}}{2}
        \left( =\mathrm{SWAP}_{b_1,b_2} \right), 
        \quad 
        X_\mathcal{C} \coloneqq
        \frac{Z_{b_1}-Z_{b_2}}{2}.
    \end{equation*}
    Then we define the encoded perturbation term
    \begin{equation*}
        V_{\mathcal{C}}^{\mathbb{R},\mathrm{sp}}(\theta)
        \coloneqq
        \frac{1}{2}\sum_{\alpha=L,R}
        \left[  
        \left(
        \frac{3}{2}-\Pi_{1,\alpha} +\cos{\theta}
        \right)
        \otimes Z_\mathcal{C}
        +\sin{\theta} X_\mathcal{C}
        \right]
        \otimes \ket{1}\bra{1}_{C_{T+M},\alpha},
    \end{equation*}
    and the parametrized Hamiltonian
    \begin{equation*}
        H_{\mathcal{C}}^{\mathbb{R},\mathrm{sp}}(\theta)
        \coloneqq
        H_{\text{hist},L}^{\mathbb{R},\mathrm{sp}} + H_{\text{hist},R}^{\mathbb{R},\mathrm{sp}}
        +H_\mathcal{C}
        + r V_{\mathcal{C}}^{\mathbb{R},\mathrm{sp}}(\theta).
    \end{equation*}

    The two-dimensional ground space of $H_{\text{hist},L}^{\mathbb{R},\mathrm{sp}} 
    +H_{\text{hist},R}^{\mathbb{R},\mathrm{sp}} + H_\mathcal{C}$
    is spanned by 
    $\{\ket{\psi_{\text{hist}}}_L \ket{\psi_{\text{hist}}}_R\ket{01}_{b_1,b_2}, \ket{\psi_{\text{hist}}}_L \ket{\psi_{\text{hist}}}_R\ket{10}_{b_1,b_2}\}$ and its spectral gap is $\Delta(H_{\text{hist},L}^{\mathbb{R},\mathrm{sp}} 
    +H_{\text{hist},R}^{\mathbb{R},\mathrm{sp}} + H_\mathcal{C})=\Omega(1/N^3)$.
    Let $P_0$ be a projector onto this ground space.
    Then the first order effective Hamiltonian is
    \begin{equation*}
        P_0 V_{\mathcal{C}}^{\mathbb{R},\mathrm{sp}}(\theta) P_0
        = \ket{\psi_{\text{hist}}}\bra{\psi_{\text{hist}}}_L \otimes \ket{\psi_{\text{hist}}}\bra{\psi_{\text{hist}}}_R \otimes 
        \frac{1}{N} \left[  
        \left(
        \frac{3}{2}-p_1 +\cos{\theta}
        \right)
        \otimes Z_\mathcal{C}
        +\sin{\theta} X_\mathcal{C}
        \right].
    \end{equation*}
    This is essentially the same form as Eq.~\eqref{eq:first-order-SSH}, and therefore we can apply the same analysis to $H_{\mathcal{C}}^{\mathbb{R},\mathrm{sp}}(\theta)$.
    The spectral gap is lower bounded by $\Delta_{\min} = \frac{2r}{3N} - O(N^3r^2) \geq 1/\mathrm{poly}(n)$ for all $\theta$ as far as $r = o(1/N^4)$.
    The guiding state $\ket{c}=\ket{c_\text{hist}}_L \ket{c_\text{hist}}_R \otimes \ket{1_\mathcal{C}}$ satisfies 
    \begin{equation*}
        |\braket{\psi_{0,\mathcal{C}}^{\mathbb{R},\mathrm{sp}}(0)|c}|^2
        \geq 1-\frac{1}{\mathrm{poly}(n)},
    \end{equation*}
    where $\ket{\psi_{0,\mathcal{C}}^{\mathbb{R},\mathrm{sp}}(0)}$ is the unique ground state of $H_{\mathcal{C}}^{\mathbb{R},\mathrm{sp}}(0)$.

    Let $R_{\mathrm{hist}}$ denote the operator that swaps the entire $L$ and $R$ registers.
    We obtain 
    \begin{equation*}
        (R_{\mathrm{hist}}\mathrm{SWAP}_{b_1,b_2})H_{\mathcal{C}}^{\mathbb{R},\mathrm{sp}}(\theta)(R_{\mathrm{hist}}\mathrm{SWAP}_{b_1,b_2})^\dagger = H_{\mathcal{C}}^{\mathbb{R},\mathrm{sp}}(-\theta),
    \end{equation*}
    and therefore the Berry phase satisfies Eq.~\eqref{eq:berry-quantized-physical}.
    We note that the operator $R_{\mathrm{hist}}\mathrm{SWAP}_{b_1,b_2}$ simply swaps the qubits in the left half with the corresponding qubits in the right half.
    Therefore this symmetry would be exactly preserved as long as the perturbative gadgets are applied in exactly the same manner on the left and right halves, even though its simulation of the target Hamiltonian is approximate.

    \proofpara{Reduction to the spatially sparse 2-local Pauli interaction without $Y$ terms}
    We follow Lemma~40 and Theorem~41 of Ref.~\cite{cubitt2018universal} and Ref.~\cite{oliveira2005complexity}.
    As noted in the previous paragraph, the symmetry is preserved by assigning identical gadgets (ancilla qubits and Hamiltonian terms) to each pair of corresponding terms exchanged under the swap of the $L$ and $R$ registers.
    At reduction steps that introduce mediator qubits, terms inherited from the \(L\) and \(R\) registers are kept separately labelled even when they coincide after the Pauli expansion, so that the corresponding gadgets can still be chosen as a pair exchanged by the SWAP. 
    An interaction fixed by the SWAP is similarly split into two half-strength copies exchanged by the SWAP.

    We now reduce the Hamiltonian $H_{\mathcal{C}}^{\mathbb{R},\mathrm{sp}}(\theta)$ to a spatially sparse 2-local Pauli Hamiltonian with no $Y$ terms.
    The real 6-local Hamiltonian $H_{\mathcal{C}}^{\mathbb{R},\mathrm{sp}}(\theta)$ for each $\theta$ can first be simulated by a 7-local Hamiltonian whose Pauli decomposition contains no Pauli $Y$ operators (Lemma~40 of Ref.~\cite{cubitt2018universal}).
    We then apply subdivision gadgets followed by 3-to-2 gadgets~\cite{oliveira2005complexity} to obtain a spatially sparse 2-local Hamiltonian of the form $H_{\mathcal{S}_{XZ}}(\theta) = \sum_{i<j}\alpha_{ij}(\theta)A_{ij} + \sum_k (\beta_k(\theta)X_k + \gamma_k(\theta) Z_k)$, where $A_{ij}\in \{XX,XZ,ZX,ZZ\}$.
    We denote the interaction set by $S_{XZ}=\{XX,XZ,ZX,ZZ,X,Z\}$.
    Applying each gadget together with its copy on the term exchanged by the current SWAP permutation, we obtain a new operator $R_{\mathcal{S}_{XZ}}$ that swaps $L$ and $R$ registers, and it satisfies
    \begin{equation*}
        R_{\mathcal{S}_{XZ}} H_{\mathcal{S}_{XZ}}(\theta) R_{\mathcal{S}_{XZ}}^\dagger = H_{\mathcal{S}_{XZ}}(-\theta),
    \end{equation*}
    exactly after this step.

    At these stages, signed coefficients can be placed in one term of each gadget, giving polynomial dependence on those coefficients without taking their roots.
    For example, consider a parameter dependent term $h = j(\theta) P \otimes Q$ with Pauli strings $P$ and $Q$.
    Then, adding a mediator qubit $m$, the subdivision gadget $\Delta H_0 + \sqrt{\Delta} H_2 + H_1$, with $H_0 = \ket{1}\bra{1}_m$, $H_2= (P\otimes X_m - j(\theta) Q\otimes X_m)/\sqrt{2}$, and $H_1 = (1+j(\theta)^2)I/2$, simulates the target Pauli string
    \begin{equation*}
        \Pi_- H_1 \Pi_- - \Pi_- H_2 \Pi_+ H_0^{-1} \Pi_+ H_2 \Pi_- = 
        \frac{1+j(\theta)^2}{2} I - \frac{1}{2}(P-j(\theta) Q)^2 = j(\theta) P\otimes Q =h,
    \end{equation*}
    where $\Pi_-$ is a projector onto the ground space of $H_0$ and $\Pi_+=I-\Pi_-$.
    Therefore this gadget simulates $h$ (see, e.g., Lemma 37 of Ref.~\cite{cubitt2018universal}).
    The coefficients can also be encoded in the same way for both the 3-to-2 gadget and the gadget used to eliminate $Y$ terms.
    Choosing sufficiently large $\Delta$ independently of $\theta$, then Lemmas 37–39 of Ref.~\cite{cubitt2018universal} yield simulation guarantees that hold uniformly over $\theta$.

    \proofpara{Reduction to the spatially sparse $\{XX+YY+ZZ\}$ or $\{XX+YY\}$-Hamiltonian}
    Next, we simulate the $\mathcal{S}_{XZ}$ interaction with Heisenberg ($\{XX+YY+ZZ\}$) interactions or $XY$ ($\{XX+YY\}$) interactions.
    We mainly follow Theorem 42 of Ref.~\cite{cubitt2018universal}, but we perform a preprocessing before the reduction in order to maintain smoothness of the parameter.
    This is because the encoding of Theorem 42 of Ref.~\cite{cubitt2018universal} uses square roots of interaction strength, which may lead to the divergence of derivative of the Hamiltonian.

    Consider a single two-local Pauli term $j(\theta)P_uQ_v$ for $P,Q\in\{X,Z\}$.
    Choose an upper-bound $M$ such that $\sup_{\theta}|j(\theta)|+1\leq M$, and we take 
    \begin{equation*}
        C_{\pm}(\theta) \coloneqq \frac{M\pm j(\theta)}{2}.
    \end{equation*}
    Then we have $C_{+}(\theta) -C_{-}(\theta) =j(\theta)$ and $C_{\pm}(\theta)\geq 1/2$ for all $\theta$.
    For this Pauli term, we introduce two mediator qubits $m_{+}$ and $m_{-}$, and define $H_{0,m} = \ket{1}\bra{1}_{m_{+}}
    + \ket{1}\bra{1}_{m_{-}}$, and
    \begin{equation*}
        B_m(\theta)
        =
        \sqrt{\frac{C_{+}(\theta)}{2}}
        (P_u-Q_v)X_{m_{+}}
        +
        \sqrt{\frac{C_{-}(\theta)}{2}}
        (P_u+Q_v)X_{m_{-}}.
    \end{equation*}
    We then implement the gadget Hamiltonian
    \begin{equation*}
        \Delta H_{0,m} + \sqrt{\Delta}B_m(\theta) + M I.
    \end{equation*}
    Let $P_{0,m}$ be a projector onto the ground space of $H_{0,m}$.
    By the direct calculation, the first order effective Hamiltonian is $P_{0,m} (M I) P_{0,m} = M P_{0,m}$ and the second order effective Hamiltonian is 
    \begin{align*}
        - P_{0,m}B_m(\theta)(I-P_{0,m})
        H_{0,m}^{-1}
        (I-P_{0,m})B_m(\theta)P_{0,m}
        &= \left[ -\frac{C_{+}(\theta)}{2}(P_u-Q_v)^2 
        - \frac{C_{-}(\theta)}{2}(P_u+Q_v)^2\right] P_{0,m}\\
        &= j(\theta)P_u Q_v P_{0,m} - M P_{0,m}.
    \end{align*}
    Therefore $\Delta H_{0,m} + \sqrt{\Delta}B_m(\theta) + M I$ simulates $j(\theta)P_u Q_v$.
    In this way, since $\sqrt{C_\pm(\theta)/2}\geq 1/2$ for all $\theta$, the coefficient is smooth for all $\theta$.
    This property is essential for the reduction to Heisenberg and XY interaction.
    Applying this gadget to the terms with coefficients depending on $\theta$, and introducing the mediator qubits and gadget Hamiltonians in exactly the same manner for each pair of Pauli terms exchanged by the SWAP operation, preserves the symmetry under the SWAP operation.

    Now we can complete the reduction to the $\mathcal{S}_0=\{XX+YY+ZZ\}$ or $\{XX+YY\}$-Hamiltonians simply by applying Theorem 42 of Ref.~\cite{cubitt2018universal}.
    This reduction uses a subspace encoding in which each logical qubit is encoded into the ground space of a fixed four-qubit Heisenberg or XY Hamiltonian, and the desired two-qubit logical interactions are generated perturbatively by interactions between the corresponding four-qubit blocks.
    As in Eq. (199) of Ref.~\cite{cubitt2018universal}, the second-order perturbation term takes the form $\propto \Pi_{-}^\text{tot}H_2^2\Pi_{-}^\text{tot}$.
    Therefore, to encode a coefficient $j(\theta)$, one would naively choose as $H_2 = \sqrt{|j(\theta)|}H_2'$ using the gadget for the corresponding sign.
    However, this choice may lead to a divergent derivative when \(j(\theta)\) approaches zero.
    The gadget introduced above avoids this problem.
    Corresponding logical qubits in the \(L\) and \(R\) registers are encoded using identical four-qubit blocks. 
    The perturbative gadgets associated with terms exchanged by the SWAP are chosen to be exchanged in the same way while a interaction fixed by the SWAP is implemented using a SWAP-invariant perturbative interaction. 
    Hence the SWAP symmetry is preserved throughout the subspace encoding.
    Denoting the resulting simulator Hamiltonian by $H_{\mathcal{S}_0}(\theta)$ and the corresponding SWAP operator by $R_{\mathcal{S}_0}$, we obtain
    \begin{equation*}
        R_{\mathcal{S}_0}H_{\mathcal{S}_0}(\theta)R_{\mathcal{S}_0}^\dagger
        = H_{\mathcal{S}_0}(-\theta).
    \end{equation*}

    \proofpara{Reduction to the $\mathcal{S}_0$-Hamiltonians on a 2D square lattice}
    Following Lemma 47 of Ref.~\cite{cubitt2018universal}, the reduction to a Hamiltonian on a 2D square lattice is achieved by applying subdivision, fork, and crossing gadgets of Refs.~\cite{oliveira2005complexity, piddock2015complexity}.
    The symmetry can also be preserved by applying identical gadgets to pairs of terms exchanged by the SWAP operator.
    Denoting the Hamiltonian obtained after the reduction by $H_{\mathrm{2D}}(\theta)$ and the corresponding SWAP operator that exchanges the $L$ and $R$ registers by $R_{\mathrm{2D}}$, we have
    \begin{equation*}
        R_{\mathrm{2D}} H_{\mathrm{2D}}(\theta) R_{\mathrm{2D}}^\dagger 
        = H_{\mathrm{2D}}(-\theta).
    \end{equation*}

    \proofpara{Reduction to arbitrary non-2SLD $\mathcal{S}$-Hamiltonian on a 2D square lattice}
    We apply Theorem 43 of Ref.~\cite{cubitt2018universal}, including its square-lattice implementation from Ref.~\cite{piddock2015complexity}. 
    Since all gadgets in the proof are local mediator or constant-size subspace-encoding gadgets, we apply identical gadgets to every pair of target interactions exchanged by $R_{\mathrm{2D}}$.
    Hence the SWAP symmetry is preserved.
    For the corresponding SWAP $R_{\mathcal{S}}$ and the reduced Hamiltonian $H_\mathcal{S}(\theta)$, we obtain
    \begin{equation}\label{eq:final-symmetry}
        R_{\mathcal{S}} H_\mathcal{S}(\theta) R_{\mathcal{S}}^\dagger = H_\mathcal{S}(-\theta).
    \end{equation}
    Recall that the preprocessing above ensures that every parameter-dependent coefficient to which a square root is applied is uniformly bounded away from zero.
    In the subsequent reductions, square roots are taken only of coefficient with this property, possibly multiplied by \(\theta\)-independent factors.
    Therefore, the reduced Hamiltonian $H_\mathcal{S}$ is still smooth for all $\theta$.

    \proofpara{Simulation errors, Berry phase, and normalization}
    Let $V_0$ be the composition of the fixed encoding used above, which consists of adding the ground states of mediator penalties, encoding each qubit in the ground space of Heisenberg or XY four-qubit gadgets, and applying the fixed encodings for $\mathcal S$.
    As shown in Ref.~\cite{bravyi2017complexity} (and reviewed in Ref.~\cite{cubitt2018universal}), errors of effective-Hamiltonian $\varepsilon_{\mathrm{sim}}$ and encoding errors $\eta_{\mathrm{sim}}$ can be inverse polynomially suppressed with polynomial large penalty strength.
    Let $g$ be the minimum gap of $H_{\mathcal C}^{\mathbb R,\mathrm{sp}}(\theta)$.
    With sufficiently small $\varepsilon_{\mathrm{sim}}<g/4$, the final simulator Hamiltonian has a unique ground state and a spectral gap at least $g/2$.
    By Lemma 2 and Lemma 3 of Ref.~\cite{bravyi2017complexity}, the ground state of the simulator Hamiltonian $\ket{\tilde{\psi}_0(\theta)}$ and the target ground state $\ket{\psi_0(\theta)}$ satisfy
    \begin{equation}\label{eq:simulation-error}
        \sup_{\theta} \left\| \ket{\tilde{\psi}_0(\theta)} - V_0 \ket{\psi_0(\theta)} \right\| \leq \eta_{\mathrm{sim}} + O(\varepsilon_{\mathrm{sim}}/g).
    \end{equation}

    The symmetric gadget construction preserves the symmetry in Eq.~\eqref{eq:final-symmetry} and gives
    \begin{equation}\label{eq:symmetry-simulation}
        V_0^\dagger R_\mathcal{S}V_0 =  R_{\mathrm{hist}}\mathrm{SWAP}_{b_1,b_2}  .
    \end{equation}
    By symmetry, $\ket{\tilde{\psi}_0(0)}$, $\ket{\tilde{\psi}_0(\pi)}$, $V_0 \ket{\psi_0(0)}$ and $V_0 \ket{\psi_0(\pi)}$ are eigenstates of $R_\mathcal{S}$.
    If $\ket{\tilde{\psi}_0(\theta)}$ and $V_0 \ket{\psi_0(\theta)}$ have opposite eigenvalues at $\theta=0$ or $\pi$, then the distance between these two states in Eq.~\eqref{eq:simulation-error} becomes $\sqrt{2}$.
    Therefore, the eigenvalues of $R_\mathcal{S}$ with respect to the ground state at $\theta=0$ and $\pi$ are preserved as long as the simulation error in Eq.~\eqref{eq:simulation-error} is smaller than $\sqrt{2}$.
    Hence the Berry phase is also preserved as Eq.~\eqref{eq:berry-quantized-physical}.

    The choices of coefficients above give polynomial upper-bounds on all coefficients and their first and second derivatives. 
    Divide the final Hamiltonian by the maximum among the upper-bound on the norm of local terms and the sums of norms of the first and second derivatives. 
    This makes the norm of each local term and $B_1,B_2$ at most one, preserves the eigenvectors, symmetry, and Berry phase, and leaves an inverse-polynomial gap.

    \proofpara{Guiding state}
    The result about the guiding state is a direct consequence of Ref.~\cite{cade2022improved}.
    Setting the guiding state as $\ket{\tilde{c}}\coloneqq V_0 \ket{c}$, we have 
    \begin{equation*}
        |\braket{\tilde{\psi}_0(0)|\tilde{c}}|^2 \geq 1 - 1/\mathrm{poly}(n),
    \end{equation*}
    as long as the simulation error in Eq.~\eqref{eq:simulation-error} is inverse polynomially small.
    By construction, $\ket{\tilde{c}}$ is a semi-classical encoded state and efficient sampling-and-query-access to this state can be implemented classically as shown in Ref.~\cite{cade2022improved}.
\end{proof}

\section{Classical algorithms for Berry phase estimation}\label{sec:dequantize}
In this section, we provide an efficient classical algorithm for Berry phase estimation in the constant spectral gap regime and analyze its explicit complexity.
To show this, we first approximate the quasi-adiabatic continuation operator by a spatial truncation, and then provide a general classical algorithm for simulating time-dependent Hamiltonians using truncated and discretized Dyson series and the dequantized QSVT for sparse-matrices~\cite{gharibian2022dequantizing}.
Combining these two ingredients, we obtain an explicit classical algorithm for estimating Berry phase.
In particular, when the Berry phase is promised to be $0$ or $\pi$, the classical algorithm can estimate the Berry phase in polynomial time provided that $B_1=O(1)$ and $\gamma \geq 1/2 + \Omega(1)$ and in quasi-polynomial time provided that $B_1=O(1)$ and $\gamma \geq 1/2$.

\subsection{Quasi-adiabatic continuation and spatial truncation}\label{subsec:qac}
Consider the qubit system on a fixed $d$-dimensional lattice.
Write
\begin{equation}\label{eq:local-decomposition}
 H(\theta)=\sum_{Z\in\mathcal E}h_Z(\theta),
 \quad
 \sup_{\theta,Z}\|h_Z(\theta)\|\leq 1,
 \quad
 m\coloneqq |\mathcal E|=O(n),
 \quad
 \sup_{\theta,x}\sum_{Z\ni x}\|h_Z(\theta)\|=O(1).
\end{equation}
Note that the norm of the whole Hamiltonian can be $\|H(\theta)\|\leq m =O(n)$.
We assume that a ground state is non-degenerate along the loop and its spectral gap is lower-bounded by $\Delta_{\min}$. 
We denote the upper-bound of first and second derivatives of the Hamiltonian as $B_1$ and $B_2$.
For any time-dependent Hamiltonian $G(t)$, we define its time evolution operator as
\begin{equation*}
    U_G(T,0)=\mathcal{T} \exp{\left( -i \int_0^T G(t) \mathrm{d}t \right)}.
\end{equation*}

The quasi-adiabatic continuation operator~\cite{hastings2005quasiadiabatic, hastings2010quasi}
\begin{equation*}
    D(\theta) = \int_{-\infty}^{\infty} 
    W_{\Delta_\text{min}}(t) e^{itH(\theta)} \dot{H}(\theta) e^{-itH(\theta)} \mathrm{d}t,
\end{equation*}
would yield the Berry phase
\begin{equation*}
    U_{D}(2\pi,0)\ket{\psi_0(0)}
    = e^{i\theta_B} \ket{\psi_0 (0)}, 
\end{equation*}
when the filter function is real and odd and satisfies
\begin{equation}\label{eq:filter-condition}
 \widehat W_{\Delta_{\min}}(0)=0,\qquad
 \widehat W_{\Delta_{\min}}(\omega)=-\frac{i}{\omega}\quad(|\omega|\geq\Delta_{\min}),
\end{equation}
where $\widehat W_{\Delta_{\min}}(\omega)=\int_{\mathbb R}W_{\Delta_{\min}}(t)e^{i\omega t}\mathrm{d}t$.
Ref.~\cite{hastings2010quasi} constructs an explicit function that satisfies these properties.
As a direct consequence of Corollary 7 in Ref.~\cite{hastings2010quasi}, we have
\begin{equation}\label{eq:filter-moments}
 |W_{\Delta_{\min}}(t)|\leq O\left(e^{-\sqrt{\Delta_{\min}|t|}}\right),\qquad
 w_0:=\int|W_{\Delta_{\min}}(t)|\,\mathrm{d}t\leq O(\Delta_{\min}^{-1}),\qquad
 w_1:=\int|tW_{\Delta_{\min}}(t)|\,\mathrm{d}t\leq O(\Delta_{\min}^{-2}).
\end{equation}
Then, by direct calculation, the norms of $D(\theta)$ and its derivative are bounded as
\begin{equation}\label{eq:qac-norm-derivative}
 \sup_\theta\|D(\theta)\|\leq w_0B_1\leq O\left(\frac{B_1}{\Delta_{\min}}\right),\qquad
 \sup_\theta\|\dot D(\theta)\|
 \leq w_0B_2+2w_1B_1^2
 \leq O\left(\frac{B_2}{\Delta_{\min}}+\frac{B_1^2}{\Delta_{\min}^2}\right).
\end{equation}

Now we expand the local terms in Pauli strings
\begin{equation*}
    H(\theta)=a_I(\theta)I+\sum_{P\in\mathcal P}a_P(\theta)P.
\end{equation*}
We have separated the identity term from the traceless terms, and each qubit participates in $O(1)$ Pauli terms.
As shown in Lemma 5 of Ref.~\cite{harrow2017extremal}, we have
\begin{equation*}
    \sum_{P\in\mathcal P}| a_P(\theta)|\leq O(\| H(\theta)\|).
\end{equation*}
Applying the same bound to the traceless parts of $\dot{H}$ and $\ddot{H}$, we obtain
\begin{equation}\label{eq:variation-pauli-bound}
    \sum_{P\in\mathcal P}|\dot a_P(\theta)|\leq O(\|\dot H(\theta)\|) =O(B_1),
    \quad \text{and} \quad
    \sum_{P\in\mathcal P}|\ddot a_P(\theta)|\leq O(\|\ddot H(\theta)\|) =O(B_2).
\end{equation}

For each Pauli string $P$, let $Y_{P,R}$ be its radius-$R$ neighborhood and let
$H_{P,R}(\theta)
=
\sum_{\substack{Q\in\mathcal{P}: \operatorname{supp}(Q)\subseteq Y_{P,R}}}
a_Q(\theta) Q$. 
Define
\begin{equation}\label{eq:qac-truncation}
 D_{P,R}(\theta)= \dot{a}_P(\theta) \int_{\mathbb R}W_{\Delta_{\min}}(t)
 e^{itH_{P,R}(\theta)} P e^{-itH_{P,R}(\theta)}\mathrm{d}t,
 \qquad D_R(\theta)=\sum_{P\in\mathcal P}D_{P,R}(\theta).
\end{equation}
Note that the term proportional to the identity operator in $H(\theta)$ does not contribute to $D_R(\theta)$ since the filter function is chosen such that $\int_{\mathbb R}W_{\Delta_{\min}}(t) \mathrm{d}t = \widehat W_{\Delta_{\min}}(0)=0$.
Therefore we can assume that $H_{P,R}(\theta)$ does not contain the identity term.
The following lemma bounds the error from this truncation.
\begin{lem}\label{lem:qac-local}
It holds that
\begin{equation}\label{eq:qac-local-error}
 \sup_\theta\|D(\theta)-D_R(\theta)\|
 \leq O\left(\frac{B_1}{\Delta_{\min}}e^{-\Omega\left(\sqrt{\Delta_{\min} R}\right)}\right).
\end{equation}
Thus an error at most $\zeta>0$ is obtained with
\begin{equation}\label{eq:radius-zeta}
 R=O\left(\frac {1}{\Delta_{\min}}
      \log^2\left(\frac{B_1}{\Delta_{\min} \zeta} \right)\right).
\end{equation}
The resulting $D_R$ has at most $m2^{O(R^d)}$ nonzero entries per row and column.
Also it holds that
\begin{equation}
    \sup_\theta \|D_R(\theta)\|\leq O\left(\frac{B_1}{\Delta_{\min}}\right),
    \quad \text{and} \quad 
    \sup_\theta\|\dot D_R(\theta)\| =
    O\left(\frac{B_2}{\Delta_{\min}}+
             \frac{B_1^2}{\Delta_{\min}^2}\right).
\end{equation}
\end{lem}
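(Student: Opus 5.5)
The plan is to compare $D(\theta)$ and $D_R(\theta)$ term by term in the Pauli expansion and control each difference with a Lieb--Robinson bound. Using Eq.~\eqref{eq:qac-truncation} and the fact that the identity part of $H$ drops out because $\widehat W_{\Delta_{\min}}(0)=0$, we can write
\begin{equation*}
 D(\theta)-D_R(\theta)=\sum_{P\in\mathcal P}\dot a_P(\theta)\int_{\mathbb R}W_{\Delta_{\min}}(t)\Big(e^{itH}Pe^{-itH}-e^{itH_{P,R}}Pe^{-itH_{P,R}}\Big)\mathrm dt .
\end{equation*}
The local interaction strength is uniformly bounded, by Eq.~\eqref{eq:local-decomposition} and the fact that each qubit participates in $O(1)$ Pauli terms. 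The standard Lieb--Robinson comparison argument (interaction picture, i.e.\ a Duhamel formula for $e^{-itH_{P,R}}e^{itH}$, combined with the Lieb--Robinson bound for commutators of boundary terms with $P(t)$) then gives, uniformly in $\theta$ and $P$,
\begin{equation*}
 \big\|e^{itH}Pe^{-itH}-e^{itH_{P,R}}Pe^{-itH_{P,R}}\big\|\leq \min\{2,\;C\,R^{d-1}e^{(v|t|-R)/\xi}\}.
\end{equation*}
Here $v,\xi,C$ are constants depending only on $d$ and the local bound. The polynomial prefactor counts the boundary terms of $Y_{P,R}$ and $\operatorname{supp}P$, which has constant size.

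The next step is to split the time integral at $t_*=R/(2v)$. For $|t|\leq t_*$ the Lieb--Robinson term is at most $CR^{d-1}e^{-R/(2\xi)}$, so this part contributes at most $w_0\,CR^{d-1}e^{-R/(2\xi)}=O(\Delta_{\min}^{-1})e^{-\Omega(R)}$. For $|t|>t_*$ we use the trivial bound $2$ together with the filter decay in Eq.~\eqref{eq:filter-moments}:
\begin{equation*}
 \int_{t_*}^\infty e^{-\sqrt{\Delta_{\min}t}}\,\mathrm dt=O\big((\sqrt{t_*/\Delta_{\min}}+\Delta_{\min}^{-1})e^{-\sqrt{\Delta_{\min}t_*}}\big)=O(\Delta_{\min}^{-1})e^{-\Omega(\sqrt{\Delta_{\min}R})}.
\end{equation*}
The polynomial prefactor is absorbed into the exponent. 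Since $\Delta_{\min}\leq O(1)$, the $e^{-\Omega(R)}$ term is dominated by this one. Summing over $P$ with $\sum_P|\dot a_P|=O(B_1)$ from Eq.~\eqref{eq:variation-pauli-bound} yields Eq.~\eqref{eq:qac-local-error}. Setting the right-hand side equal to $\zeta$ and solving gives $\sqrt{\Delta_{\min}R}=\Theta(\log(B_1/(\Delta_{\min}\zeta)))$, which is Eq.~\eqref{eq:radius-zeta}.

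For sparsity, each $D_{P,R}$ acts nontrivially only on $Y_{P,R}$, which contains $O(R^d)$ qubits on a $d$-dimensional lattice. Hence $D_{P,R}$ has at most $2^{O(R^d)}$ nonzero entries in each row and column. There are $O(m)$ Pauli strings, so $D_R$ has at most $m2^{O(R^d)}$ nonzeros per row and column.

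The norm bound follows from $\|D_{P,R}\|\leq|\dot a_P|\,w_0$ and summing over $P$. For the derivative, differentiating $D_{P,R}$ produces two kinds of terms. The first comes from $\ddot a_P$ and is bounded by $|\ddot a_P|\,w_0$. The second comes from the conjugation. Write $\frac{d}{d\theta}e^{itH_{P,R}}=i\int_0^t e^{isH_{P,R}}\dot H_{P,R}e^{i(t-s)H_{P,R}}\,\mathrm ds$; this gives
\begin{equation*}
 \Big\|\tfrac{d}{d\theta}\big(e^{itH_{P,R}}Pe^{-itH_{P,R}}\big)\Big\|\leq 2|t|\,\|\dot H_{P,R}\|\leq 2|t|\sum_Q|\dot a_Q|=O(|t|B_1),
\end{equation*}
so this term contributes $|\dot a_P|\,O(w_1B_1)$. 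Summing over $P$ with Eqs.~\eqref{eq:filter-moments} and~\eqref{eq:variation-pauli-bound} gives $O(B_2/\Delta_{\min}+B_1^2/\Delta_{\min}^2)$.

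I expect the main obstacle to be the Lieb--Robinson comparison step. It must be done with constants uniform in $\theta$, and we must check that the truncation to Pauli terms supported inside $Y_{P,R}$ (rather than to local terms $h_Z$) still gives the exponential bound with only a polynomial-in-$R$ boundary prefactor. I would handle this by bounding the Duhamel integrand using commutators of $P(s)$ with the Pauli terms that cross the boundary of $Y_{P,R}$. Each such term lies at distance at least $R-O(1)$ from $\operatorname{supp}P$, and its coefficient is $O(1)$ by the bounded local interaction strength.
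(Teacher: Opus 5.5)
Your proposal is correct and follows essentially the same route as the paper. Both arguments compare $D$ and $D_R$ Pauli term by Pauli term, use a Lieb--Robinson bound for $|t|\le R/(2v_{\mathrm{LR}})$, use the trivial bound with the $e^{-\sqrt{\Delta_{\min}|t|}}$ filter decay beyond that, sum with $\sum_P|\dot a_P|=O(B_1)$, and bound $\|D_R\|$ and $\|\dot D_R\|$ termwise via $w_0$ and $w_1$. Your Duhamel and boundary-counting justification of the Lieb--Robinson step, the remark that $e^{-\Omega(R)}$ is dominated because $\Delta_{\min}=O(1)$, and the sparsity count fill in details the paper leaves implicit.
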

\begin{proof}
    The following bound is a simplification of the analysis in Ref.~\cite{hastings2010quasi}.
    \begin{align*}
        \sup_\theta\|D(\theta)-D_R(\theta)\| 
        &\leq
        \sup_\theta \sum_{P\in\mathcal P} |\dot{a}_P(\theta)| \int_{\mathbb R} |W_{\Delta_{\min}}(t)| 
        \left\| e^{itH(\theta)} P e^{-itH(\theta)} - e^{itH_{P,R}(\theta)} P e^{-itH_{P,R}(\theta)} \right\| \mathrm{d}t \\
        &\leq \sup_\theta \sum_{P\in\mathcal P} |\dot{a}_P(\theta)| \int_{|t|\leq R/2v_{\mathrm{LR}}} |W_{\Delta_{\min}}(t)| 
        \left\| e^{itH(\theta)} P e^{-itH(\theta)} - e^{itH_{P,R}(\theta)} P e^{-itH_{P,R}(\theta)} \right\| \mathrm{d}t\\
        &\qquad +\sup_\theta \sum_{P\in\mathcal P} |\dot{a}_P(\theta)| \int_{|t|> R/2v_{\mathrm{LR}}} |W_{\Delta_{\min}}(t)| 
        \left\| e^{itH(\theta)} P e^{-itH(\theta)} - e^{itH_{P,R}(\theta)} P e^{-itH_{P,R}(\theta)} \right\| \mathrm{d}t \\
        & \leq \sup_\theta \sum_{P\in\mathcal P} |\dot{a}_P(\theta)| \int_{|t|\leq R/2v_{\mathrm{LR}}} |W_{\Delta_{\min}}(t)| 
        \mathrm{d}t O(e^{-\mu R}) 
        + 2\sup_\theta \sum_{P\in\mathcal P} |\dot{a}_P(\theta)| \int_{|t|> R/2v_{\mathrm{LR}}} |W_{\Delta_{\min}}(t)| 
         \mathrm{d}t \\
        &\leq O\left(\frac{B_1}{\Delta_{\min}} e^{-\mu R}\right) + O(B_1) \int_{|t|> R/2v_{\mathrm{LR}}} |W_{\Delta_{\min}}(t)| 
         \mathrm{d}t,
    \end{align*}
    where $v_{\mathrm{LR}}$ is Lieb-Robinson velocity and $\mu$ is a constant.
    Recalling that $|W_{\Delta_{\min}}(t)|\leq O\left(e^{-\sqrt{\Delta_{\min}|t|}}\right)$, we obtain
    \begin{align*}
        \int_{|t|> R/2v_{\mathrm{LR}}} |W_{\Delta_{\min}}(t)| 
         \mathrm{d}t
        \leq O\left(\frac{\sqrt{\Delta_{\min}R/v_{\mathrm{LR}}}}{\Delta_{\min}} e^{-\Omega(\sqrt{\Delta_{\min}R/v_{\mathrm{LR}}})}\right)
        = O\left(\frac{1}{\Delta_{\min}} e^{-\Omega(\sqrt{\Delta_{\min}R/v_{\mathrm{LR}}})}\right),
    \end{align*}
    where the last equality holds since the polynomial prefactor can be absorbed into the exponential function.
    Therefore we obtain
    \begin{equation*}
        \sup_\theta\|D(\theta)-D_R(\theta)\| 
        \leq O\left(\frac{B_1}{\Delta_{\min}} e^{-\Omega(\sqrt{\Delta_{\min}R})}\right).
    \end{equation*}
    
    Now we calculate the norms of $D_R(\theta)$ and $\dot{D}_R(\theta)$.
    Using Eq.~\eqref{eq:filter-moments} and Eq.~\eqref{eq:variation-pauli-bound}, we have
    \begin{equation*}
        \sup_\theta \|D_R(\theta)\|
        \leq \sup_\theta \sum_{P\in\mathcal P}|\dot a_P(\theta)| w_0 \leq O\left(\frac{B_1}{\Delta_{\min}}\right).
    \end{equation*}

    Observe that $\|\dot{H}_{P,R}(\theta)\| \leq \sum_{P\in\mathcal P}|\dot a_P(\theta)| \leq O(B_1)$ since $H_{P,R}(\theta)$ does not have identity term.
    Then, using Eq.~\eqref{eq:variation-pauli-bound} and Eq.~\eqref{eq:filter-moments}, we obtain
    \begin{align*}
        \sup_\theta \|\dot{D}_R(\theta)\|
        &\leq \sup_\theta 
        \left( \sum_{P\in\mathcal P}|\ddot a_P(\theta)| w_0 
        + 2 \sum_{P\in\mathcal P}|\dot a_P(\theta)| \cdot \|\dot{H}_{P,R}(\theta)\|
        \cdot w_1 
        \right) \\
        &\leq O\left(\frac{B_2}{\Delta_{\min}}+
             \frac{B_1^2}{\Delta_{\min}^2}\right).
    \end{align*}
\end{proof}
To evaluate the truncated terms $D_{P,R}(\theta)$, we exactly diagonalize the local matrix $H_{P,R}=\sum_a E_a(\theta) \ket{E_a(\theta)}\bra{E_a(\theta)}$.
Then we can efficiently calculate
\begin{equation*}
    \braket{E_a(\theta)|D_{P,R}(\theta)|E_b(\theta)}=\dot{a}_P(\theta)\widehat{W}_{\Delta_{\min}}(E_a(\theta)-E_b(\theta)) \braket{E_a(\theta)|P|E_b(\theta)},
\end{equation*}
since $H_{P,R}$ acts on $O(R^d)$ qubits and hence is a constant-size matrix when $\Delta_{\min}$, $B_1$, and $\zeta$ are constant.
We only have to diagonalize the truncated matrices and the eigenvectors of the full many-body Hamiltonian are not required. 
We absorb the cost of this diagonalization and evaluation of the fixed scalar function $\widehat W_{\Delta_{\min}}$ into $\bm q(D_R)$. 
If the coefficients and this scalar filter can be evaluated to $b$ bit-precision in $\operatorname{poly}(n,b)$-time, then 
\begin{equation*}
    \bm q(D_R) = m2^{O(R^d)}\operatorname{poly}(n,b).
\end{equation*}

\subsection{Classical algorithm for simulating time-dependent Hamiltonian dynamics}\label{subsec:sparse-time}
Now we review the inner product estimation by sampling, which is a central technique of dequantization.
\begin{lem}\label{lem:inner-product-estimation}
    Suppose that we are given
    \begin{itemize}
        \item query-access to a vector $w$ such that $\|w\|\leq 1$,
        \item sampling-and-query-access to a vector $v$ such that $\|v\|\leq 1$.
    \end{itemize}
    Then we can estimate $\hat{z}\in\mathbb{C}$ such that
    \begin{equation}
    \left|\hat{z} - v^\dagger w \right| \leq \varepsilon,
    \end{equation}
    classically with a probability at least $1-\delta$ in $O(\log{(1/\delta)}(\bm{q}(w)+\bm{sq}(v))/\varepsilon^2)$-time.
\end{lem}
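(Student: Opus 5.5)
The plan is the standard importance-sampling estimator from the dequantization literature. Write $v^\dagger w=\sum_i \overline{v_i}w_i$. Using sampling-access to $v$, draw an index $i$ with probability $p_i^v=|v_i|^2/\|v\|^2$. Using query-access to both vectors and the known value $\|v\|$, form the random variable
\begin{equation*}
 Z_i \coloneqq \|v\|^2\,\frac{w_i}{v_i}
 = \|v\|^2\,\frac{\overline{v_i}\,w_i}{|v_i|^2}.
\end{equation*}
Indices with $v_i=0$ are never sampled. The cost of one sample of $Z$ is $O(\bm{q}(w)+\bm{sq}(v))$.

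First I will verify unbiasedness:
\begin{equation*}
 \mathbb{E}[Z]=\sum_i \frac{|v_i|^2}{\|v\|^2}\,\|v\|^2\frac{\overline{v_i}w_i}{|v_i|^2}=v^\dagger w .
\end{equation*}
Then I will bound the second moment:
\begin{equation*}
 \mathbb{E}|Z|^2=\sum_i \frac{|v_i|^2}{\|v\|^2}\,\|v\|^4\frac{|w_i|^2}{|v_i|^2}=\|v\|^2\|w\|^2\le 1 .
\end{equation*}
This uses $\|v\|,\|w\|\le 1$. Hence the variance is at most $1$, counting real and imaginary parts separately if one prefers.

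Next comes concentration. Averaging $K=O(1/\varepsilon^2)$ independent copies gives, by Chebyshev, an estimate within $\varepsilon$ with probability at least $3/4$. To reach failure probability $\delta$, I will use the median-of-means trick. Take $O(\log(1/\delta))$ independent batch means and output the median, applied separately to the real and imaginary parts with error $\varepsilon/\sqrt2$ each. A Chernoff bound on the number of failed batches gives success probability at least $1-\delta$. The total number of samples is $O(\log(1/\delta)/\varepsilon^2)$, each of cost $O(\bm{q}(w)+\bm{sq}(v))$, which is the claimed runtime.

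The only mildly delicate step is the amplification. A direct Hoeffding bound does not apply because $Z$ may be unbounded when $|v_i|$ is tiny. For that reason I rely on the variance bound combined with median-of-means rather than on boundedness, and that is where I would take care.
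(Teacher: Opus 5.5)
Your proof is correct and is essentially the standard argument the paper relies on: the paper gives no proof of its own and instead defers to Proposition~1 of Ref.~\cite{gall2023robust}. That proposition uses exactly this importance-sampling estimator $\|v\|^2 w_i/v_i$, with a second-moment bound and median-of-means amplification. One cosmetic fix: the second-moment computation should read $\mathbb{E}|Z|^2=\|v\|^2\sum_{i:v_i\neq 0}|w_i|^2\le\|v\|^2\|w\|^2$, since indices with $v_i=0$ are never sampled; this only strengthens your bound.
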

The proof of this lemma is given, for example, in Proposition~1 of Ref.~\cite{gall2023robust}.
This lemma implies that, if we want to estimate $\braket{v|U|u}$ for some unitary, we have to construct query-access to $U\ket{u}$.

The following lemma provides an explicit complexity of dequantized QSVT for time-dependent Hamiltonian simulation.

\begin{lem}\label{lem:dequantized-QSVT-time-dependent}
    Let $T\in \mathbb{R}_+$ be an evolution time and $\alpha, \beta  \in \mathbb{R}_{+}$.
    Take $0<\varepsilon\leq1$ and $0<\delta<1/2$.
    Suppose that we are given 
    \begin{itemize}
        \item query-access to a quantum state $\ket{u}$,
        \item query-access to an $s$-sparse time-dependent Hamiltonian $G(t)$ for each $t\in [0,T]$ such that $\sup_t{\|G(t)\|}\leq \alpha$ and $\sup_t{\|\dot{G}(t)\|}\leq \beta$,
        \item sampling-and-query-access to a quantum state $\ket{v}$.
    \end{itemize}
    Then there exists a classical algorithm that estimates 
    \begin{equation*}
        \braket{v|U_G(T,0)|u},
    \end{equation*}
    with an additive error $\varepsilon$, a failure probability $\delta$ in 
    \begin{equation*}
        O\left( \frac{\log{(1/\delta)}}{\varepsilon^2} 
        \left( \bm{sq}(\ket{v})+ \bm{q}(\ket{u}) K (Ms)^K + \bm{q}(G) K^2 (Ms)^K \right) \right), 
    \end{equation*}
    -time, where
    \begin{equation*}
        K=O\left(
        \alpha T + \log{\left(\frac{1}{\varepsilon}\right)}
        \right), \quad 
        \text{and} \quad 
        M=O\left(
        \frac{\beta T^2}{\varepsilon}
        \right).
    \end{equation*}
\end{lem}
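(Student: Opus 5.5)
The approach is to discretize and truncate the Dyson series of $U_G(T,0)$, so that $U_G(T,0)\ket{u}$ is approximated by an explicit sum of products of sparse matrices applied to $\ket{u}$. Each entry of that approximation can be computed by brute-force recursion over nonzero entries, which gives query-access to it. Lemma~\ref{lem:inner-product-estimation} then estimates the overlap with $\ket{v}$ from this query-access and the sampling-and-query-access to $\ket{v}$.

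First, the Dyson series reads
\begin{equation*}
U_G(T,0)=\sum_{k\ge0}(-i)^k\int_{0\le t_1\le\cdots\le t_k\le T}G(t_k)\cdots G(t_1)\,\mathrm{d}t_1\cdots\mathrm{d}t_k .
\end{equation*}
The $k$-th term has norm at most $(\alpha T)^k/k!$, so truncating at order $K=O(\alpha T+\log(1/\varepsilon))$ costs error $\varepsilon/3$; this is the usual Stirling tail bound. Second, split $[0,T]$ into $M$ intervals of length $T/M$ and replace $G(t)$ by the piecewise-constant $G(t_j)$ at the left endpoints. I would bound the error without going through the series: by the Duhamel formula,
\begin{equation*}
\|U_G-U_{\tilde G}\|\le\int_0^T\|G(t)-\tilde G(t)\|\,\mathrm{d}t\le \beta T^2/M .
\end{equation*}
Taking $M=O(\beta T^2/\varepsilon)$ therefore makes this error at most $\varepsilon/3$. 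The piecewise-constant evolution is a product of $M$ exponentials. I would instead discretize the Dyson integrals as ordered sums over grid points, i.e., a Riemann sum over $j_1\le\cdots\le j_k\in\{1,\dots,M\}$ with weight $(T/M)^k$ up to a combinatorial correction for coinciding indices. Because the order-$k$ term then has at most $M^k$ index tuples, this yields the $(Ms)^K$ factor. The discretization error of the nested sum is again controlled by $\beta$ and is at most $O(K\beta T^2/M)$ weighted by the series. Choosing $M$ with the stated scaling, possibly with mild extra factors absorbed into constants or logs, keeps it at $\varepsilon/3$.

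Third, to obtain a query to the $i$-th entry of $w=\sum_{k\le K}\sum_{j_1\le\cdots\le j_k}c_{\vec j}\,G(t_{j_k})\cdots G(t_{j_1})\ket{u}$, I expand each product along sparse rows. Starting from row $i$, each factor offers at most $s$ nonzero column choices, found with one query to $G$ each, and at the end I query $\ket{u}$. One product of length $k$ therefore costs $O(s^k)$ leaf queries to $\ket{u}$ and $O(k s^k)$ queries to $G$. Summing over $k\le K$ and over the at most $M^K$ index tuples gives $\bm{q}(w)=O(K(Ms)^K\bm q(u)+K^2(Ms)^K\bm q(G))$. The norm of $w$ is at most $1+2\varepsilon/3$, and I would rescale it by a constant to meet the hypothesis $\|w\|\le1$ of Lemma~\ref{lem:inner-product-estimation}. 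That lemma, with accuracy $\varepsilon/3$ and failure probability $\delta$, then gives the claimed runtime.

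The main obstacle is step two, carrying the time discretization inside the truncated series. One has to check that the discretized nested sums remain close to the true ordered integrals using only $\|\dot G\|\le\beta$, and that the discretized terms do not blow up the norm beyond $e^{\alpha T}$-type factors. My first attempt is to take exactly the series of the piecewise-constant Hamiltonian, writing each interval exponential as its own truncated Taylor series. If the resulting count of products is not $M^K$, I would instead bound the Riemann-sum error term by term using Lipschitz continuity of the integrand, which has constant $k\alpha^{k-1}\beta$.
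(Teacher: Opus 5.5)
\textbf{Verdict: gap in the second step.} Your overall architecture is the paper's: the Duhamel bound for the piecewise-constant replacement, a truncated Dyson series, brute-force row expansion for query access, rescaling, then Lemma~\ref{lem:inner-product-estimation}. Your third-step cost count also matches. The gap is exactly the step you flag as the main obstacle. You never establish the fact you are unsure of, and your fallback does not give the stated $M$.

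If you truncate the continuous Dyson series first and then compare each order-$k$ term with its grid version using the Lipschitz constant $k\alpha^{k-1}\beta$, the order-$k$ error you obtain is about $\alpha^{k-1}\beta\tau T^k/(k-1)!$ with $\tau=T/M$. Summing over $k$ gives $\beta T^2 e^{\alpha T}/M$. So the ``weighting by the series'' is a factor $e^{\alpha T}$, not a constant or a log. Since $M$ enters the runtime as $M^K$, this changes the bound. Term-by-term comparison discards the unitarity that keeps Duhamel's estimate $\beta T^2/(2M)$ free of such factors. Also, truncating each of the $M$ interval exponentials separately would produce $(K'+1)^M$ products, not $M^K$. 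You must truncate by total degree.

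\textbf{The missing fact.} No Riemann-sum error arises at all if you order the approximations as the paper does. First use Duhamel: $\|U_G(T,0)-U_{G_M}(T,0)\|\le\beta T^2/(2M)$. Then truncate the Dyson series of the piecewise-constant evolution itself. Its order-$k$ term is exactly
\[
\frac{(-i\tau)^k}{k!}\sum_{j_1,\ldots,j_k=0}^{M-1}\mathcal T\bigl[G_{j_k}\cdots G_{j_1}\bigr],\qquad G_j=G(t_j).
\]
The reason is a counting argument. A non-decreasing index multiset with multiplicities $m_\ell$ appears $k!/\prod_\ell m_\ell!$ times among the $M^k$ tuples. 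Multiplied by $\tau^k/k!$, this reproduces precisely the simplex cell volumes $\tau^k/\prod_\ell m_\ell!$, which is your ``combinatorial correction''. Equivalently, this is the degree-$k$ part of the ordered product $e^{-i\tau G_{M-1}}\cdots e^{-i\tau G_0}$.

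Hence there are at most $M^k$ products at order $k$. The tail is bounded by $\sum_{k>K}\frac{\tau^k}{k!}M^k\alpha^k=\sum_{k>K}\frac{(\alpha T)^k}{k!}\le 2^{-K}$ once $K+1\ge 2e\alpha T$. With this, your first step (truncating the continuous series) is unnecessary. The choices $M=O(\beta T^2/\varepsilon)$ and $K=O(\alpha T+\log(1/\varepsilon))$ then go through with no extra factors.
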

\begin{proof}
    \proofpara{Discretization and truncation of time evolution operator}
    We first divide the evolution time into $M$ intervals, and let
    $\tau:=\frac{T}{M}$,
    $t_j:=j\tau$, and 
    $j=0,\ldots,M$.
    We define the discretized Hamiltonian as
    $G_M(t):=G(t_{j})$ for $t\in[t_j,t_{j+1})$.
    Then we have
    \begin{equation}\label{eq:time-grid-error}
         \|U_G(T,0)-U_{G_M}(T,0)\|
         \leq\int_0^T\|G(t)-G_M(t)\|\mathrm{d}t
         \leq M \int_0^\tau \beta t\mathrm{d}t
         = \frac{\beta T^2}{2M}.
    \end{equation}
    Choosing $M=O(\beta T^2 /\varepsilon)$, the additive error can be upper-bounded by $\varepsilon/3$.
    
    Let $G_j\coloneqq G(t_{j})$ and define
    \begin{equation}\label{eq:discrete-dyson}
         \widetilde U_{K,M}
         =\sum_{k=0}^{K}\frac{(-i\tau)^k}{k!}
           \sum_{j_1,\ldots,j_k=0}^{M-1}
           \mathcal T[G_{j_k}\cdots G_{j_1}],
    \end{equation}
    where $\mathcal T$ orders the grid indices, with later times on the left.
    This is exactly the degree-$K$ Dyson expansion of the piecewise-constant evolution $U_{G_M}(T,0)$.
    Then the truncation error can be written as
    \begin{equation}\label{eq:dyson-approximation}
     \|\widetilde U_{K,M}-U_{G_M}(T,0)\|\leq
     \sum_{k=K+1}^{\infty}\frac{(\tau)^k}{k!}
       \sum_{j_1,\ldots,j_k=0}^{M-1}
       \|\mathcal T[G_{j_k}\cdots G_{j_1}]\|
       \leq \sum_{k=K+1}^{\infty} \frac{(\tau)^k}{k!} (M\alpha)^k
       \leq \sum_{k=K+1}^{\infty} \frac{(\alpha T)^k}{k!} .
    \end{equation}
    When $K+1\geq 2e \alpha T$, we have $\frac{(\alpha T)^k}{k!}\leq \left( \frac{e \alpha T}{k} \right)^k \leq 2^{-k} $ for all $k\geq K+1$.
    Therefore 
    \begin{equation*}
        \|\widetilde U_{K,M}-U_{G_M}(T,0)\|\leq 2^{-K}.
    \end{equation*}
    Choosing $K=O(\alpha T + \log{(1/\varepsilon)})$, the additive error is upper bounded by $\varepsilon/3$.

    \proofpara{Complexity of the classical algorithm}
    Now we have a linear combination of products of sparse Hamiltonians $\widetilde U_{K,M}$ that approximates $U_G(T,0)$ with an additive error of $2\varepsilon/3$.
    Therefore, an estimation of $\braket{v|\widetilde U_{K,M}|u}$ with an additive error $\varepsilon/3$ implies an estimation of $\braket{v|U_G(T,0)|u}$ with an additive error $\varepsilon$.
    Note that we have taken $0<\varepsilon \leq 1$, so $\left\|\widetilde U_{K,M}\ket{u}\right\| \leq 1+\varepsilon/3 <2$.
    Therefore, we may rescale the vector as $\frac{1}{2}\widetilde U_{K,M}\ket{u}$ and estimate $\braket{v|\widetilde U_{K,M}|u}/2$ with an additive error $\varepsilon/6$.
    Multiplying the estimate by $2$ gives an estimate of $\braket{v|\widetilde U_{K,M}|u}$ with additive error $\varepsilon/3$.

    The number of terms in $\widetilde U_{K,M}$ is at most $(K+1) M^K = O(K M^K)$.
    The sparsity of each term is at most $s^{K}$, and therefore, 
    \begin{equation*}
        \bm{q}\left( \widetilde U_{K,M}\ket{u} \right)
        = O \left( \bm{q}(\ket{u}) K (Ms)^K + \bm{q}(G) K^2 (Ms)^K \right) .
    \end{equation*}
    As a result, by Lemma~\ref{lem:inner-product-estimation}, the classical algorithm can estimate $\braket{v|U_G(T,0)|u}$ with an additive error $\varepsilon$ and a failure probability $\delta$ in
    \begin{equation*}
        O\left( \frac{\log{(1/\delta)}}{\varepsilon^2} 
        \left( \bm{sq}(\ket{v})+ \bm{q}(\ket{u}) K (Ms)^K + \bm{q}(G) K^2 (Ms)^K \right) \right), 
    \end{equation*}
    -time.
\end{proof}

\subsection{Classical algorithm for general Berry phase estimation}
We now construct a classical algorithm for estimating a continuous valued Berry phase and analyze its computational complexity.
Specifically, we apply Lemma~\ref{lem:dequantized-QSVT-time-dependent} to the truncated quasi-adiabatic continuation in Eq.~\eqref{eq:qac-truncation} with total evolution time $2\pi$.
The error in the guiding state introduces a bias in the estimate, while the remaining errors are controlled by the truncation of the quasi-adiabatic generator and by the discretization and truncation of the Dyson series.
For classical access to the truncated quasi-adiabatic evolution, we follow Section~\ref{subsec:qac}.

\begin{thm}\label{thm:classical-BPE}
    Suppose that $|\braket{u|\psi_0(0)}|^2 \geq 1-\eta$, with $\eta\geq0$, $0<\varepsilon\leq1$, and $2\eta+\varepsilon/2\leq1/2$.
    Assume that we are given
    \begin{itemize}
        \item sampling-and-query-access to a quantum state $\ket{u}$,
        \item query-access to an $s$-sparse time-dependent Hamiltonian $D_R(\theta)$ for each $\theta\in [0,2\pi]$.
    \end{itemize}
    Then there exists a classical algorithm that can estimate $\theta_B$ with an additive error $\leq 4\eta + \varepsilon$ and a failure probability $\leq \delta$ in 
    \begin{equation}\label{eq:complexity-classical-BPE}
        O\left( \frac{\log{(1/\delta)}}{\varepsilon^2} 
        \left( \bm{sq}(\ket{u})+ \bm{q}(\ket{u}) K (Ms)^K + \bm{q}(D_R) K^2 (Ms)^K \right) \right),
    \end{equation}
    -time, where
    \begin{align*}
        K&= O\left(
        \frac{B_1}{\Delta_{\min}} + \log{\left(\frac{1}{\varepsilon}\right)}
        \right), \\
        M&=  O\left( \frac{1}{\varepsilon} \left(\frac{B_2}{\Delta_{\min}}+\frac{B_1^2}{\Delta_{\min}^2}\right) \right), \\
        s&= O\left( m2^{O(R^d)} \right), \\
        R&=O\left(\frac {1}{\Delta_{\min}}
          \log^2\left(\frac{B_1}{\Delta_{\min} \varepsilon} \right)\right).
    \end{align*}
    For example, if we suppose that $\bm{sq}(\ket{u})=\bm{q}(\ket{u})=\bm{q}(D_R)=\mathrm{poly}(n)$, $\varepsilon=\Theta(1)$, $\Delta_{\min}=\Theta(1)$, $B_1=O(1)$, and $B_2=\mathrm{poly}(n)$, then the classical algorithm runs in polynomial time.
\end{thm}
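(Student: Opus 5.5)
The plan is to run Lemma~\ref{lem:dequantized-QSVT-time-dependent} with $G=D_R$, $T=2\pi$, $\ket{v}=\ket{u}$, and obtain an estimate $\hat z$ of $z_R\coloneqq\braket{u|U_{D_R}(2\pi,0)|u}$. The output is $\hat\theta_B=\arg\hat z$. The analysis separates the deviation of $\arg \hat z$ from $\theta_B$ into three sources: the guiding-state error $\eta$, the truncation error of Lemma~\ref{lem:qac-local}, and the sampling/Dyson error from Lemma~\ref{lem:dequantized-QSVT-time-dependent}.

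\textbf{Parameter choices.} By Lemma~\ref{lem:qac-local}, $\sup_\theta\|D_R\|\le\alpha=O(B_1/\Delta_{\min})$ and $\sup_\theta\|\dot D_R\|\le\beta=O(B_2/\Delta_{\min}+B_1^2/\Delta_{\min}^2)$. The sparsity is $s=m2^{O(R^d)}$. Plugging $T=2\pi$ and an inner-product accuracy $c\varepsilon$, for a small constant $c$, into Lemma~\ref{lem:dequantized-QSVT-time-dependent} gives exactly the stated $K$, $M$ and running time~\eqref{eq:complexity-classical-BPE}.

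\textbf{Truncation error.} I choose $\zeta=c\varepsilon/(2\pi)$ in~\eqref{eq:radius-zeta}, which yields the stated $R$. The standard Duhamel bound then gives
\[
\|U_D(2\pi,0)-U_{D_R}(2\pi,0)\|\le\int_0^{2\pi}\|D-D_R\|\,\mathrm d\theta\le c\varepsilon .
\]
Combining this with the Dyson/sampling accuracy, $|\hat z-z|\le 2c\varepsilon$ with probability $1-\delta$, where $z=\braket{u|U_D(2\pi,0)|u}$.

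\textbf{Guiding-state bias.} Write $\ket u=a\ket{\psi_0(0)}+\sqrt{1-|a|^2}\,\ket{\phi}$ with $\ket\phi\perp\ket{\psi_0(0)}$ and $|a|^2\ge1-\eta$. The quasi-adiabatic evolution maps $\ket{\psi_0(0)}\mapsto e^{i\theta_B}\ket{\psi_0(0)}$, and since it is unitary it keeps the orthogonal complement invariant. Hence
\[
z=|a|^2e^{i\theta_B}+(1-|a|^2)\braket{\phi|U_D|\phi},
\]
so $|z-e^{i\theta_B}|\le 2\eta$.

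\textbf{Converting to phase error.} Altogether $|\hat z-e^{i\theta_B}|\le 2\eta+2c\varepsilon\le 2\eta+\varepsilon/4$, which is at most $1/2$ by the hypothesis $2\eta+\varepsilon/2\le 1/2$. For a point $w$ with $|w-e^{i\phi}|=\rho\le 1/2$, one has $|\arg w-\phi|_{2\pi}\le\arcsin\rho\le \tfrac{\pi}{3}\cdot2\rho\cdot\tfrac{3}{\pi}\cdots$; concretely $\arcsin\rho\le 2\rho$ for $\rho\le1/2$ (indeed $\arcsin(1/2)=\pi/6<1$). This gives
\[
|\hat\theta_B-\theta_B|_{2\pi}\le 4\eta+\varepsilon/2\le 4\eta+\varepsilon .
\]

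\textbf{Main obstacle.} The delicate step is justifying that $U_D(2\pi,0)$ returns $\ket{\psi_0(0)}$ with exactly the phase $e^{i\theta_B}$, rather than with an extra dynamical or gauge contribution. This requires the properties~\eqref{eq:filter-condition} of the filter. I will cite Hastings' result that $i\,\mathrm d_\theta\ket{\psi_0}=D\ket{\psi_0}$ up to a term parallel to $\ket{\psi_0}$, and then check that this parallel part is absent because the diagonal element of $D$ vanishes. Specifically, $\braket{\psi_0|D|\psi_0}=\widehat W(0)\braket{\psi_0|\dot H|\psi_0}=0$, so the parallel-transport phase accumulated is exactly the Berry connection integral.

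The example in the statement then follows by substitution. With constant $\Delta_{\min}$, $B_1$ and $\varepsilon$, the quantities $K$ and $R$ are constant and $s=O(m)=O(n)$. The parameter $M$ is $\mathrm{poly}(n)$ because $B_2=\mathrm{poly}(n)$. Hence $(Ms)^K=\mathrm{poly}(n)$, and the whole algorithm runs in polynomial time.
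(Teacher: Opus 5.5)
Your proposal is correct and follows the paper's proof essentially step for step. The shared steps are: the $2\eta$ guiding-state bias from $\ket{\psi_0(0)}$ being an eigenvector of the unitary $U_D(2\pi,0)$; the truncation error from Lemma~\ref{lem:qac-local} with $\zeta=\Theta(\varepsilon)$; estimation of $\braket{u|U_{D_R}(2\pi,0)|u}$ by Lemma~\ref{lem:dequantized-QSVT-time-dependent} with $T=2\pi$; and converting $|\hat z-e^{i\theta_B}|\le 1/2$ into a phase error of at most twice that. Your check that $\braket{\psi_0|D|\psi_0}=\widehat W_{\Delta_{\min}}(0)\braket{\psi_0|\dot H|\psi_0}=0$, so that $U_D(2\pi,0)$ imparts exactly $e^{i\theta_B}$, fills in a step the paper only recalls from Section~\ref{subsec:qac}. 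One small fix: $\arcsin\rho\le 2\rho$ on $[0,1/2]$ follows from convexity of $\arcsin$, not just from its value at $\rho=1/2$.
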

\begin{proof}
    Recall that the Berry phase is given by
    \begin{equation*}
        e^{i\theta_B} = \braket{\psi_0(0)|U_D(2\pi,0)|\psi_0(0)}.
    \end{equation*}
    Expanding the guiding state as $\ket{u}=\sqrt{p}\ket{\psi_0(0)}+\sqrt{1-p}\ket{u_\perp}$, we obtain
    \begin{equation*}
        \left| \braket{u|U_D(2\pi,0)|u} - e^{i\theta_B} \right|
        \leq 2(1-p) \leq 2\eta,
    \end{equation*}
    since $\braket{u|U_D(2\pi,0)|u}=pe^{i\theta_B}+(1-p)\braket{u_\perp|U_D(2\pi,0)|u_\perp}$.
    The cross terms vanish because $U_D(2\pi,0)$ has $\ket{\psi_0(0)}$ as an eigenvector.
    As shown in Lemma~\ref{lem:qac-local}, the choice of $R=O\left(\Delta_{\min}^{-1}
      \log^2\left(B_1/(\Delta_{\min} \varepsilon) \right)\right)$ ensures that $\|U_D(2\pi,0) - U_{D_R}(2\pi,0)\| \leq \varepsilon/4$.
    Using the algorithm in Section~\ref{subsec:sparse-time} that estimates $\braket{u|U_{D_R}(2\pi,0)|u}$ with an additive error of $\varepsilon/4$, we can obtain an estimate satisfying $|\hat{z}-e^{i\theta_B}|\leq \varepsilon/2 + 2\eta$. 
    For $\varepsilon/2 + 2\eta \leq 1/2$, we have $|\arg{\hat{z}}-\theta_B|_{2\pi}\leq \varepsilon + 4\eta$.

    The runtime is obtained by applying Lemma~\ref{lem:dequantized-QSVT-time-dependent} with the following parameters:
    \begin{align*}
        K&= O\left(
        \alpha + \log{\left(\frac{1}{\varepsilon}\right)}
        \right)
        = O\left(
        \frac{B_1}{\Delta_{\min}} + \log{\left(\frac{1}{\varepsilon}\right)}
        \right), \\
        M&= O\left( \frac{\beta}{\varepsilon} \right) = O\left( \frac{1}{\varepsilon} \left(\frac{B_2}{\Delta_{\min}}+\frac{B_1^2}{\Delta_{\min}^2}\right) \right), \\
        s&= O\left( m2^{O(R^d)} \right), \\
        R&=O\left(\frac {1}{\Delta_{\min}}
          \log^2\left(\frac{B_1}{\Delta_{\min} \varepsilon} \right)\right).
    \end{align*}
    Then the runtime is given by
    \begin{equation*}
        O\left( \frac{\log{(1/\delta)}}{\varepsilon^2} 
        \left( \bm{sq}(\ket{u})+ \bm{q}(\ket{u}) K (Ms)^K + \bm{q}(D_R) K^2 (Ms)^K \right) \right).
    \end{equation*}
\end{proof}

\subsection{Classical algorithm for estimating the quantized Berry phase}\label{subsec:classical-quantized}
For a phase promised to be $0$ or $\pi$, it suffices to determine the sign of the $e^{i\theta_B}$. 
If the ground-state overlap $\gamma$ is strictly larger than $1/2$, its contribution already dominates the real part of the estimated amplitude, that is, $\mathrm{sgn}(e^{i\theta_B})=\mathrm{sgn}(\mathrm{Re}(\braket{u|U_D(2\pi,0)|u}))$.
In the case of $\gamma\geq 1/2$, the following degree-one energy filter suppresses every excited state contribution, which enables us to determine the quantized Berry phase from the sign of the filtered amplitude.

\begin{thm}\label{thm:classical-quantized-BPE}
    Suppose that we are given $|\braket{u|\psi_0(0)}|^2 \geq \gamma$ with $1/2\leq\gamma\leq1$, and the Berry phase $\theta_B\in \{0,\pi\}$. 
    Assume that we are given
    \begin{itemize}
        \item sampling-and-query-access to a quantum state $\ket{u}$,
        \item query-access to an $s$-sparse time-dependent Hamiltonian $D_R(\theta)$ for each $\theta\in [0,2\pi]$.
    \end{itemize}
    Then there exists a classical algorithm that can estimate $\theta_B$ with a failure probability $\leq \delta$ in 
    \begin{equation}
        O\left( \frac{\log{(1/\delta)}}{\varepsilon^{*2}} 
        \left( \bm{sq}(\ket{u})+ \bm{q}(\ket{u}) m2^k K (Ms)^K + \bm{q}(D_R) K^2 (Ms)^K \right) \right),
    \end{equation}
    -time, where
    \begin{align*}
        K&= O\left(
        \frac{B_1}{\Delta_{\min}} + \log{\left(\frac{1}{\varepsilon^*}\right)}
        \right), \\
        M&=  O\left( \frac{1}{\varepsilon^*} \left(\frac{B_2}{\Delta_{\min}}+\frac{B_1^2}{\Delta_{\min}^2}\right) \right), \\
        s&= O\left( m2^{O(R^d)} \right), \\
        R&=O\left(\frac {1}{\Delta_{\min}}
          \log^2\left(\frac{B_1}{\Delta_{\min} \varepsilon^*} \right)\right),\\
        \varepsilon^* &= \Theta \left(\gamma-\frac12+\frac{\Delta_{\min}}{8A_0}\right),
    \end{align*}
    and $A_0$ is a known upper-bound of $\|H(0)\|$.
    For example, if we suppose that $\bm{sq}(\ket{u})=\bm{q}(\ket{u})=\bm{q}(D_R)=\mathrm{poly}(n)$, $A_0=m=\mathrm{poly}(n)$, $\Delta_{\min}=\Theta(1)$, $B_1=O(1)$, and $B_2=\mathrm{poly}(n)$,
    the algorithm runs in polynomial time for $\gamma\geq 1/2 + \Omega(1)$ and in quasi-polynomial time for $\gamma \geq 1/2$.
\end{thm}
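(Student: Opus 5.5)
Since the Berry phase is promised to be $0$ or $\pi$, it suffices to determine the sign $\xi:=e^{i\theta_B}\in\{\pm1\}$. We do this with a degree-one energy filter. Write $\ket{u}=\sqrt{p}\ket{\psi_0(0)}+\sqrt{1-p}\ket{u_\perp}$ with $p\ge\gamma$ and $\ket{u_\perp}$ supported on the excited eigenspaces of $H(0)$. Let $E_0$ be the ground energy, so all excited energies satisfy $E\ge E_0+\Delta_{\min}$. Define the filter
\begin{equation*}
F:=\frac{A_0-H(0)}{2A_0},
\end{equation*}
which is a degree-one polynomial in $H(0)$. On the spectrum $[-A_0,A_0]$ it is positive and bounded by $1$. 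Its value on the ground state is $f_0=(A_0-E_0)/(2A_0)$, and on every excited state it is at most $f_0-\Delta_{\min}/(2A_0)$.

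The quantity we estimate is $z:=\mathrm{Re}\braket{u|F\,U_D(2\pi,0)|u}$. Because $U_D(2\pi,0)$ maps $\ket{\psi_0(0)}$ to $\xi\ket{\psi_0(0)}$, and both $U_D(2\pi,0)$ and $F$ preserve the excited subspace (the quasi-adiabatic evolution returns the ground space to itself after the closed loop), the cross terms vanish. This gives
\begin{equation*}
z=p\,f_0\,\xi+(1-p)\,\mathrm{Re}\braket{u_\perp|F U_D|u_\perp},
\end{equation*}
and the second term is bounded in magnitude by $(1-p)(f_0-\Delta_{\min}/(2A_0))$. Hence
\begin{equation*}
\xi z\ge (2p-1)f_0+(1-p)\frac{\Delta_{\min}}{2A_0}.
\end{equation*}
We need this margin to be positive even when $p=1/2$. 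At $p=1/2$ the bound reads $\Delta_{\min}/(4A_0)$, so the extra term saves us there. For general $p$ we use $f_0\ge 1/2-O(\Delta_{\min}/A_0)$, which holds after shifting $H(0)$ by a known constant so that $E_0\le 0$; we can arrange this because $E_0\ge -A_0$ and the identity shift does not affect $D$. With this, the margin is $\Omega(\gamma-1/2+\Delta_{\min}/(8A_0))=\Omega(\varepsilon^*)$. Making the constant $1/8$ and this shift precise is the main obstacle. If the shift is problematic, a fallback is to use $F=(A_0-H(0)+c)/(2A_0+c)$ and optimize over $c$.

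The sign of $z$ then determines $\xi$. We estimate $z$ to additive error $\varepsilon^*/3$ and read off its sign. This requires controlling three error sources:
\begin{itemize}
\item[(i)] replacing $U_D$ by $U_{D_R}$, with error at most $\varepsilon^*/9$ using Lemma~\ref{lem:qac-local} and the stated radius $R$;
\item[(ii)] discretization and Dyson truncation, with the same budget, using Lemma~\ref{lem:dequantized-QSVT-time-dependent} with $\alpha=O(B_1/\Delta_{\min})$ and $\beta$ taken from Lemma~\ref{lem:qac-local};
\item[(iii)] sampling error, handled by Lemma~\ref{lem:inner-product-estimation}.
\end{itemize}
Since $\|F\|\le1$, the errors from (i) and (ii) carry over to $z$ without amplification.

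For the sampling step, take $v=\ket{u}$, for which we have sampling-and-query access, and $w=F\,\widetilde U_{K,M}\ket{u}$ rescaled by $1/2$. A query to $w$ first requires query access to $\widetilde U_{K,M}\ket{u}$, at the cost computed in Lemma~\ref{lem:dequantized-QSVT-time-dependent}. We then apply $H(0)$, which is a sum of $m$ local terms, each with at most $2^k$ nonzero entries per row. This multiplies the $\bm q(\ket{u})$ term by $m2^k$ and yields the stated runtime with $\varepsilon$ replaced by $\varepsilon^*$.

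For the example parameters: when $\gamma\ge1/2+\Omega(1)$, $\varepsilon^*=\Theta(1)$ and all of $K$, $R$, $s$ are constant or polynomial, so the runtime is polynomial. When $\gamma=1/2$, $\varepsilon^*=\Theta(1/\mathrm{poly}(n))$, so $K=O(\log n)$ and $(Ms)^K=n^{O(\log n)}$, giving quasi-polynomial time. In that case $R$ is polylogarithmic, so $2^{O(R^d)}$ must also be checked; it is quasi-polynomial as long as $R^d=\mathrm{polylog}(n)$, which holds here.
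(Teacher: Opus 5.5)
Your strategy is the paper's: a degree-one filter in $H(0)$, reading $e^{i\theta_B}$ off the sign of a real part, and reducing to Theorem~\ref{thm:classical-BPE} with $\varepsilon\to\varepsilon^*$. Your cross-term argument is correct, as is your bound on the $\ket{u_\perp}$ contribution by the filter value at $E_1$. Two steps, however, do not go through as written.

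\textbf{The margin.} With $F=(A_0-H(0))/(2A_0)$, your bound $(2p-1)f_0+(1-p)\Delta_{\min}/(2A_0)$ is $\Omega(\gamma-\frac12)$ only if $f_0=\Omega(1)$. The hypotheses give only $E_0\le A_0-\Delta_{\min}$. For example, $H(0)=\sum_{i=1}^n(I-\Delta_{\min}\ket{0}\bra{0}_i)$ with $A_0=n$ has unique ground state $\ket{0^n}$, gap $\Delta_{\min}$, and $f_0=\Delta_{\min}/2$. The guaranteed margin $pf_0-(1-p)f_1$ is then only of order $\Delta_{\min}(\gamma-\frac12)+\Delta_{\min}/A_0$, a factor $\Delta_{\min}$ short of $\varepsilon^*$ in the $\gamma-\frac12$ term.

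Your proposed repair is also mis-justified. A known shift making $E_0\le 0$ relies on $E_0\le A_0$ (shift by $-A_0$), not on $E_0\ge -A_0$. After the shift the norm bound becomes $2A_0$, so the filter must be renormalized. Doing exactly this gives the paper's filter $G_0=\frac34 I-\frac{H(0)}{4A_0}$, which satisfies $\frac12 I\le G_0\le I$ on the whole spectrum. Your fallback with $c=\Theta(A_0)$ would work the same way. The clean way to conclude is to split the margin through the first excited value $g_1$ rather than $g_0$:
\[
p\,g_0-(1-p)\,g_1=(2p-1)\,g_1+p\,(g_0-g_1)\ \ge\ \Bigl(p-\frac12\Bigr)+\frac{\Delta_{\min}}{8A_0},
\]
using $g_1\ge\frac12$, $g_0-g_1\ge\Delta_{\min}/(4A_0)$, and $p\ge\frac12$.

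\textbf{The runtime.} The stated runtime does not follow for your operator ordering. You estimate $\braket{u|F\widetilde U_{K,M}|u}$ with $w=F\,\widetilde U_{K,M}\ket{u}$. A single query to $w$ then needs up to $m2^k$ entries of $\widetilde U_{K,M}\ket{u}$, each costing $\bm{q}(\ket{u})K(Ms)^K+\bm{q}(D_R)K^2(Ms)^K$. So the $\bm{q}(D_R)$ term is also multiplied by $m2^k$, contrary to your claim.

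The paper instead estimates $\bra{u}U_D(2\pi,0)G_0\ket{u}$, applying the filter to $\ket{u}$ first. It feeds $G_0\ket{u}$ (norm at most $1$, query cost $O(m2^k\bm{q}(\ket{u}))$) into Lemma~\ref{lem:dequantized-QSVT-time-dependent} as the input vector. Then only the $\bm{q}(\ket{u})$ term picks up the factor $m2^k$. The margin analysis is unaffected by this reordering, because $U_D(2\pi,0)$ and $G_0$ both preserve $\mathrm{span}\{\ket{\psi_0(0)}\}$ and its orthogonal complement.

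Your explicit check that $2^{O(R^d)}$ stays quasi-polynomial when $\varepsilon^*=1/\mathrm{poly}(n)$ is a useful detail the paper leaves implicit.
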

\begin{proof}
    Let 
    \begin{equation}\label{eq:weak-energy-filter}
        G_0=\frac34I-\frac{H(0)}{4A_0},
        \qquad\frac12I\leq G_0\leq I ,
    \end{equation}
    be a degree-1 spectral filter of $H(0)$, where $A_0$ is a known upper-bound of $\|H(0)\|$.
    We estimate $z=\bra uU_D(2\pi,0)G_0\ket u$ instead of $\braket{u|U_D(2\pi,0)|u}$.
    Write $p=|\braket{u|\psi_0(0)}|^2$, and let $E_0,E_1$ be the two lowest eigenvalues of $H(0)$. 
    Expanding the guiding state as $\ket{u}=\sqrt{p}\ket{\psi_0(0)}+\sqrt{1-p}\ket{u_\perp}$, we have
    \begin{equation*}
        z = pe^{i\theta_B} \left(\frac{3}{4} - \frac{E_0}{4A_0} \right)
        + (1-p) \braket{u_\perp|U_D(2\pi,0)G_0|u_\perp} .
    \end{equation*}
    Since $e^{i \theta_B}\in\{+1,-1 \}$, 
    \begin{align*}
        e^{i\theta_B} \mathrm{Re}(z) &\geq p \left(\frac{3}{4} - \frac{E_0}{4A_0} \right)
        - (1-p) \left(\frac{3}{4} - \frac{E_1}{4A_0} \right) \\
        &= (2p-1)\left(\frac34-\frac{E_1}{4A_0}\right)
          +\frac{p(E_1-E_0)}{4A_0}\notag\\
        &\geq p-\frac12+\frac{\Delta_{\min}}{8A_0}
 \geq\gamma-\frac12+\frac{\Delta_{\min}}{8A_0}.
    \end{align*}
    Therefore we only have to determine the sign of $\mathrm{Re}(z)$.
    This can be achieved by estimating $\mathrm{Re}(z)$ with an additive error $\varepsilon^* \leq O\left(\gamma-\frac12+\frac{\Delta_{\min}}{8A_0}\right)$.
    The runtime of this classical algorithm is obtained by substituting $\varepsilon\leftarrow \varepsilon^*$ and $\bm{q}(\ket{u})\leftarrow \bm{q}(G_0\ket{u})$ in Theorem~\ref{thm:classical-BPE}.
    $H(0)$ is a $k$-local Hamiltonian and has at most $m2^k$ nonzero entries per row, and thus $\bm{q}(G_0\ket{u})=O(m2^k\bm{q}(\ket{u}))$.
\end{proof}

\section{Conclusion}\label{sec:conclusion}
We have established hardness results and classical algorithms for quantized Berry phase estimation in quantum many-body systems with access to a guiding state. 
Distinguishing Berry phases of $0$ and $\pi$ is $\BQP$-hard even for Hamiltonians on a two-dimensional square lattice drawn from any fixed non-2SLD interaction set, with $B_1=O(1)$ and an inverse-polynomial spectral gap. 
To show this, we encode quantum computation into a symmetry-quantized topological invariant of a Hamiltonian loop.
The encoded value is thus robust to symmetry-preserving deformations that do not close the gap.
In contrast, for geometrically local Hamiltonians on fixed-dimensional lattices, we obtain polynomial-time classical algorithms when the spectral gap, $B_1$, and target precision are constant and the guiding state is sufficiently accurate.
We believe that our results provide further motivation for exploring the power of quantum computation in characterizing topological properties of quantum many-body systems.

Several open problems remain. 
First, what is the complexity of Berry phase estimation when the spectral gap remains constant but $B_1$ is allowed to grow polynomially in $n$, even at constant precision and with an accurate guiding state? 
Does this regime remain classically tractable, or can sufficiently large variation revive $\mathsf{BQP}$-hardness without a small gap?
This question is closely related to the computational power of constant-gap adiabatic evolution.
While constant-gap adiabatic evolution is classically simulable in one dimension under suitable assumptions~\cite{hastings2009quantum}, universal quantum computation can be realized with a constant gap using adiabatic gate teleportation, at the price of a degenerate ground space~\cite{bacon2009adiabatic}.

Second, can our classical algorithms be extended to guiding states with smaller overlap? 
In particular, it remains open whether efficient classical estimation is possible with an arbitrary nonzero constant, or even inverse-polynomial, overlap with the initial ground state.

Third, what is the complexity of constant-precision Berry phase estimation without a guiding state?
Previous work~\cite{hayakawa2025computational} establishes $\mathsf{UQMA}\cap\mathsf{co}\text{-}\mathsf{UQMA}$-completeness at inverse-polynomial precision when an energy threshold that separates the ground energy and first excited energy is given.
It is essential to clarify how this characterization changes under exact phase quantization or a constant spectral gap.

The setting without guiding states also motivates quantifying quantum advantage.
In the context of ground energy estimation, recent works provide algorithms beating the natural Grover bound for a constant precision~\cite{buhrman2025beating} and reveal fine-grained lower bounds that prohibit such an improvement for a high-precision~\cite{chia2026fine}.
These results suggest analogous questions for Berry phase estimation.
Can Berry phase estimation be performed faster than the complexity of $O(2^{n/2})$, and can fine-grained upper and lower bounds characterize how the achievable speedup depends on the phase precision and spectral gap?

Finally, our results are related to recent work establishing quantum hardness of recognizing phases of matter from copies of an unknown quantum state under cryptographic assumptions~\cite{schuster2025hardness}.
Both settings assume access to a ground state, or a state closely related to it, and ask how difficult it is to extract information characterizing the underlying many-body system.
An important difference is that the hard instances of Ref.~\cite{schuster2025hardness} require parent Hamiltonians whose locality grows logarithmically, whereas our hardness holds for constant-local Hamiltonians.
Another distinction is that the Berry phase need not by itself identify the phase of its ground state.
It would be interesting to understand when the complexity of computing such explicit invariants can be related to the complexity of recognizing the underlying phase, particularly for constant-local Hamiltonians.

\begin{acknowledgments}
This work is supported by MEXT Quantum Leap Flagship Program (MEXT Q-LEAP) Grant No. JPMXS0120319794, JST COI-NEXT Grant No. JPMJPF2014, and JST CREST JPMJCR24I3.
K.S. is supported by JST SPRING Grant No. JPMJSP2138.
\end{acknowledgments}

\section*{Statement of AI use}
GPT-5.6 and GPT-6 were used to explore approaches to the proof of Theorem 2, Theorem 3 and Theorem 4, to check intermediate arguments, and to refine the text.
The final proof was written and independently verified by the authors.


\bibliography{ref}

\begin{thebibliography}{56}%
\makeatletter
\providecommand \@ifxundefined [1]{%
 \@ifx{#1\undefined}
}%
\providecommand \@ifnum [1]{%
 \ifnum #1\expandafter \@firstoftwo
 \else \expandafter \@secondoftwo
 \fi
}%
\providecommand \@ifx [1]{%
 \ifx #1\expandafter \@firstoftwo
 \else \expandafter \@secondoftwo
 \fi
}%
\providecommand \natexlab [1]{#1}%
\providecommand \enquote  [1]{``#1''}%
\providecommand \bibnamefont  [1]{#1}%
\providecommand \bibfnamefont [1]{#1}%
\providecommand \citenamefont [1]{#1}%
\providecommand \href@noop [0]{\@secondoftwo}%
\providecommand \href [0]{\begingroup \@sanitize@url \@href}%
\providecommand \@href[1]{\@@startlink{#1}\@@href}%
\providecommand \@@href[1]{\endgroup#1\@@endlink}%
\providecommand \@sanitize@url [0]{\catcode `\\12\catcode `\$12\catcode `\&12\catcode `\#12\catcode `\^12\catcode `\_12\catcode `\%12\relax}%
\providecommand \@@startlink[1]{}%
\providecommand \@@endlink[0]{}%
\providecommand \url  [0]{\begingroup\@sanitize@url \@url }%
\providecommand \@url [1]{\endgroup\@href {#1}{\urlprefix }}%
\providecommand \urlprefix  [0]{URL }%
\providecommand \Eprint [0]{\href }%
\providecommand \doibase [0]{https://doi.org/}%
\providecommand \selectlanguage [0]{\@gobble}%
\providecommand \bibinfo  [0]{\@secondoftwo}%
\providecommand \bibfield  [0]{\@secondoftwo}%
\providecommand \translation [1]{[#1]}%
\providecommand \BibitemOpen [0]{}%
\providecommand \bibitemStop [0]{}%
\providecommand \bibitemNoStop [0]{.\EOS\space}%
\providecommand \EOS [0]{\spacefactor3000\relax}%
\providecommand \BibitemShut  [1]{\csname bibitem#1\endcsname}%
\let\auto@bib@innerbib\@empty
\bibitem [{\citenamefont {Landau}\ and\ \citenamefont {Lifshitz}(2013)}]{landau2013course}%
  \BibitemOpen
  \bibfield  {author} {\bibinfo {author} {\bibfnamefont {L.~D.}\ \bibnamefont {Landau}}\ and\ \bibinfo {author} {\bibfnamefont {E.~M.}\ \bibnamefont {Lifshitz}},\ }\href@noop {} {\emph {\bibinfo {title} {Course of theoretical physics}}}\ (\bibinfo  {publisher} {Elsevier},\ \bibinfo {year} {2013})\BibitemShut {NoStop}%
\bibitem [{\citenamefont {Wen}(2017)}]{wen2017colloquium}%
  \BibitemOpen
  \bibfield  {author} {\bibinfo {author} {\bibfnamefont {X.-G.}\ \bibnamefont {Wen}},\ }\bibfield  {title} {\bibinfo {title} {Colloquium: Zoo of quantum-topological phases of matter},\ }\href {https://doi.org/10.1103/RevModPhys.89.041004} {\bibfield  {journal} {\bibinfo  {journal} {Reviews of Modern Physics}\ }\textbf {\bibinfo {volume} {89}},\ \bibinfo {pages} {041004} (\bibinfo {year} {2017})}\BibitemShut {NoStop}%
\bibitem [{\citenamefont {Senthil}(2015)}]{senthil2015symmetry}%
  \BibitemOpen
  \bibfield  {author} {\bibinfo {author} {\bibfnamefont {T.}~\bibnamefont {Senthil}},\ }\bibfield  {title} {\bibinfo {title} {Symmetry-protected topological phases of quantum matter},\ }\href {https://doi.org/https://doi.org/10.1146/annurev-conmatphys-031214-014740} {\bibfield  {journal} {\bibinfo  {journal} {Annu. Rev. Condens. Matter Phys.}\ }\textbf {\bibinfo {volume} {6}},\ \bibinfo {pages} {299} (\bibinfo {year} {2015})}\BibitemShut {NoStop}%
\bibitem [{\citenamefont {Chiu}\ \emph {et~al.}(2016)\citenamefont {Chiu}, \citenamefont {Teo}, \citenamefont {Schnyder},\ and\ \citenamefont {Ryu}}]{chiu2016classification}%
  \BibitemOpen
  \bibfield  {author} {\bibinfo {author} {\bibfnamefont {C.-K.}\ \bibnamefont {Chiu}}, \bibinfo {author} {\bibfnamefont {J.~C.}\ \bibnamefont {Teo}}, \bibinfo {author} {\bibfnamefont {A.~P.}\ \bibnamefont {Schnyder}},\ and\ \bibinfo {author} {\bibfnamefont {S.}~\bibnamefont {Ryu}},\ }\bibfield  {title} {\bibinfo {title} {Classification of topological quantum matter with symmetries},\ }\href@noop {} {\bibfield  {journal} {\bibinfo  {journal} {Reviews of Modern Physics}\ }\textbf {\bibinfo {volume} {88}},\ \bibinfo {pages} {035005} (\bibinfo {year} {2016})}\BibitemShut {NoStop}%
\bibitem [{\citenamefont {Chen}\ \emph {et~al.}(2010)\citenamefont {Chen}, \citenamefont {Gu},\ and\ \citenamefont {Wen}}]{chen2010local}%
  \BibitemOpen
  \bibfield  {author} {\bibinfo {author} {\bibfnamefont {X.}~\bibnamefont {Chen}}, \bibinfo {author} {\bibfnamefont {Z.-C.}\ \bibnamefont {Gu}},\ and\ \bibinfo {author} {\bibfnamefont {X.-G.}\ \bibnamefont {Wen}},\ }\bibfield  {title} {\bibinfo {title} {Local unitary transformation, long-range quantum entanglement, wave function renormalization, and topological order},\ }\href {https://doi.org/10.1103/PhysRevB.82.155138} {\bibfield  {journal} {\bibinfo  {journal} {Physical Review B—Condensed Matter and Materials Physics}\ }\textbf {\bibinfo {volume} {82}},\ \bibinfo {pages} {155138} (\bibinfo {year} {2010})}\BibitemShut {NoStop}%
\bibitem [{\citenamefont {Wen}(2013)}]{wen2013topological}%
  \BibitemOpen
  \bibfield  {author} {\bibinfo {author} {\bibfnamefont {X.-G.}\ \bibnamefont {Wen}},\ }\bibfield  {title} {\bibinfo {title} {Topological order: From long-range entangled quantum matter to a unified origin of light and electrons},\ }\href {https://doi.org/https://doi.org/10.1155/2013/198710} {\bibfield  {journal} {\bibinfo  {journal} {International Scholarly Research Notices}\ }\textbf {\bibinfo {volume} {2013}},\ \bibinfo {pages} {198710} (\bibinfo {year} {2013})}\BibitemShut {NoStop}%
\bibitem [{\citenamefont {Berry}(1984)}]{berry1984quantal}%
  \BibitemOpen
  \bibfield  {author} {\bibinfo {author} {\bibfnamefont {M.~V.}\ \bibnamefont {Berry}},\ }\bibfield  {title} {\bibinfo {title} {Quantal phase factors accompanying adiabatic changes},\ }\href {https://doi.org/https://doi.org/10.1098/rspa.1984.0023} {\bibfield  {journal} {\bibinfo  {journal} {Proceedings of the Royal Society of London. A. Mathematical and Physical Sciences}\ }\textbf {\bibinfo {volume} {392}},\ \bibinfo {pages} {45} (\bibinfo {year} {1984})}\BibitemShut {NoStop}%
\bibitem [{\citenamefont {King-Smith}\ and\ \citenamefont {Vanderbilt}(1993)}]{king1993theory}%
  \BibitemOpen
  \bibfield  {author} {\bibinfo {author} {\bibfnamefont {R.}~\bibnamefont {King-Smith}}\ and\ \bibinfo {author} {\bibfnamefont {D.}~\bibnamefont {Vanderbilt}},\ }\bibfield  {title} {\bibinfo {title} {Theory of polarization of crystalline solids},\ }\href {https://doi.org/10.1103/PhysRevB.47.1651} {\bibfield  {journal} {\bibinfo  {journal} {Physical Review B}\ }\textbf {\bibinfo {volume} {47}},\ \bibinfo {pages} {1651} (\bibinfo {year} {1993})}\BibitemShut {NoStop}%
\bibitem [{\citenamefont {Resta}(1994)}]{resta1994macroscopic}%
  \BibitemOpen
  \bibfield  {author} {\bibinfo {author} {\bibfnamefont {R.}~\bibnamefont {Resta}},\ }\bibfield  {title} {\bibinfo {title} {Macroscopic polarization in crystalline dielectrics: the geometric phase approach},\ }\href {https://doi.org/10.1103/RevModPhys.66.899} {\bibfield  {journal} {\bibinfo  {journal} {Reviews of modern physics}\ }\textbf {\bibinfo {volume} {66}},\ \bibinfo {pages} {899} (\bibinfo {year} {1994})}\BibitemShut {NoStop}%
\bibitem [{\citenamefont {Thouless}\ \emph {et~al.}(1982)\citenamefont {Thouless}, \citenamefont {Kohmoto}, \citenamefont {Nightingale},\ and\ \citenamefont {den Nijs}}]{thouless1982quantized}%
  \BibitemOpen
  \bibfield  {author} {\bibinfo {author} {\bibfnamefont {D.~J.}\ \bibnamefont {Thouless}}, \bibinfo {author} {\bibfnamefont {M.}~\bibnamefont {Kohmoto}}, \bibinfo {author} {\bibfnamefont {M.~P.}\ \bibnamefont {Nightingale}},\ and\ \bibinfo {author} {\bibfnamefont {M.}~\bibnamefont {den Nijs}},\ }\bibfield  {title} {\bibinfo {title} {Quantized hall conductance in a two-dimensional periodic potential},\ }\href {https://doi.org/10.1103/PhysRevLett.49.405} {\bibfield  {journal} {\bibinfo  {journal} {Physical review letters}\ }\textbf {\bibinfo {volume} {49}},\ \bibinfo {pages} {405} (\bibinfo {year} {1982})}\BibitemShut {NoStop}%
\bibitem [{\citenamefont {Niu}\ \emph {et~al.}(1985)\citenamefont {Niu}, \citenamefont {Thouless},\ and\ \citenamefont {Wu}}]{niu1985quantized}%
  \BibitemOpen
  \bibfield  {author} {\bibinfo {author} {\bibfnamefont {Q.}~\bibnamefont {Niu}}, \bibinfo {author} {\bibfnamefont {D.~J.}\ \bibnamefont {Thouless}},\ and\ \bibinfo {author} {\bibfnamefont {Y.-S.}\ \bibnamefont {Wu}},\ }\bibfield  {title} {\bibinfo {title} {Quantized hall conductance as a topological invariant},\ }\href {https://doi.org/10.1103/PhysRevB.31.3372} {\bibfield  {journal} {\bibinfo  {journal} {Physical Review B}\ }\textbf {\bibinfo {volume} {31}},\ \bibinfo {pages} {3372} (\bibinfo {year} {1985})}\BibitemShut {NoStop}%
\bibitem [{\citenamefont {Zak}(1989)}]{zak1989berry}%
  \BibitemOpen
  \bibfield  {author} {\bibinfo {author} {\bibfnamefont {J.}~\bibnamefont {Zak}},\ }\bibfield  {title} {\bibinfo {title} {Berry’s phase for energy bands in solids},\ }\href {https://doi.org/10.1103/PhysRevLett.62.2747} {\bibfield  {journal} {\bibinfo  {journal} {Physical review letters}\ }\textbf {\bibinfo {volume} {62}},\ \bibinfo {pages} {2747} (\bibinfo {year} {1989})}\BibitemShut {NoStop}%
\bibitem [{\citenamefont {Hatsugai}(2006)}]{hatsugai2006quantized}%
  \BibitemOpen
  \bibfield  {author} {\bibinfo {author} {\bibfnamefont {Y.}~\bibnamefont {Hatsugai}},\ }\bibfield  {title} {\bibinfo {title} {Quantized berry phases as a local order parameter of a quantum liquid},\ }\href {https://doi.org/https://doi.org/10.1143/jpsj.75.123601} {\bibfield  {journal} {\bibinfo  {journal} {Journal of the Physical Society of Japan}\ }\textbf {\bibinfo {volume} {75}},\ \bibinfo {pages} {123601} (\bibinfo {year} {2006})}\BibitemShut {NoStop}%
\bibitem [{\citenamefont {Hirano}\ \emph {et~al.}(2008)\citenamefont {Hirano}, \citenamefont {Katsura},\ and\ \citenamefont {Hatsugai}}]{hirano2008topological}%
  \BibitemOpen
  \bibfield  {author} {\bibinfo {author} {\bibfnamefont {T.}~\bibnamefont {Hirano}}, \bibinfo {author} {\bibfnamefont {H.}~\bibnamefont {Katsura}},\ and\ \bibinfo {author} {\bibfnamefont {Y.}~\bibnamefont {Hatsugai}},\ }\bibfield  {title} {\bibinfo {title} {Topological classification of gapped spin chains: Quantized berry phase as a local order parameter},\ }\href {https://doi.org/10.1103/PhysRevB.77.094431} {\bibfield  {journal} {\bibinfo  {journal} {Physical Review B—Condensed Matter and Materials Physics}\ }\textbf {\bibinfo {volume} {77}},\ \bibinfo {pages} {094431} (\bibinfo {year} {2008})}\BibitemShut {NoStop}%
\bibitem [{\citenamefont {Araki}\ \emph {et~al.}(2020)\citenamefont {Araki}, \citenamefont {Mizoguchi},\ and\ \citenamefont {Hatsugai}}]{araki2020zq}%
  \BibitemOpen
  \bibfield  {author} {\bibinfo {author} {\bibfnamefont {H.}~\bibnamefont {Araki}}, \bibinfo {author} {\bibfnamefont {T.}~\bibnamefont {Mizoguchi}},\ and\ \bibinfo {author} {\bibfnamefont {Y.}~\bibnamefont {Hatsugai}},\ }\bibfield  {title} {\bibinfo {title} {Zq berry phase for higher-order symmetry-protected topological phases},\ }\href {https://doi.org/10.1103/PhysRevResearch.2.012009} {\bibfield  {journal} {\bibinfo  {journal} {Physical Review Research}\ }\textbf {\bibinfo {volume} {2}},\ \bibinfo {pages} {012009} (\bibinfo {year} {2020})}\BibitemShut {NoStop}%
\bibitem [{\citenamefont {Pollmann}\ and\ \citenamefont {Turner}(2012)}]{pollmann2012detection}%
  \BibitemOpen
  \bibfield  {author} {\bibinfo {author} {\bibfnamefont {F.}~\bibnamefont {Pollmann}}\ and\ \bibinfo {author} {\bibfnamefont {A.~M.}\ \bibnamefont {Turner}},\ }\bibfield  {title} {\bibinfo {title} {Detection of symmetry-protected topological phases in one dimension},\ }\href {https://doi.org/10.1103/PhysRevB.86.125441} {\bibfield  {journal} {\bibinfo  {journal} {Physical Review B—Condensed Matter and Materials Physics}\ }\textbf {\bibinfo {volume} {86}},\ \bibinfo {pages} {125441} (\bibinfo {year} {2012})}\BibitemShut {NoStop}%
\bibitem [{\citenamefont {Motoyama}\ and\ \citenamefont {Todo}(2013)}]{motoyama2013path}%
  \BibitemOpen
  \bibfield  {author} {\bibinfo {author} {\bibfnamefont {Y.}~\bibnamefont {Motoyama}}\ and\ \bibinfo {author} {\bibfnamefont {S.}~\bibnamefont {Todo}},\ }\bibfield  {title} {\bibinfo {title} {Path-integral monte carlo method for the local z 2 berry phase},\ }\href {https://doi.org/10.1103/PhysRevE.87.021301} {\bibfield  {journal} {\bibinfo  {journal} {Physical Review E—Statistical, Nonlinear, and Soft Matter Physics}\ }\textbf {\bibinfo {volume} {87}},\ \bibinfo {pages} {021301} (\bibinfo {year} {2013})}\BibitemShut {NoStop}%
\bibitem [{\citenamefont {Murta}\ \emph {et~al.}(2020)\citenamefont {Murta}, \citenamefont {Catarina},\ and\ \citenamefont {Fern{\'a}ndez-Rossier}}]{murta2020berry}%
  \BibitemOpen
  \bibfield  {author} {\bibinfo {author} {\bibfnamefont {B.}~\bibnamefont {Murta}}, \bibinfo {author} {\bibfnamefont {G.}~\bibnamefont {Catarina}},\ and\ \bibinfo {author} {\bibfnamefont {J.}~\bibnamefont {Fern{\'a}ndez-Rossier}},\ }\bibfield  {title} {\bibinfo {title} {Berry phase estimation in gate-based adiabatic quantum simulation},\ }\href {https://doi.org/10.1103/PhysRevA.101.020302} {\bibfield  {journal} {\bibinfo  {journal} {Physical Review A}\ }\textbf {\bibinfo {volume} {101}},\ \bibinfo {pages} {020302} (\bibinfo {year} {2020})}\BibitemShut {NoStop}%
\bibitem [{\citenamefont {Tamiya}\ \emph {et~al.}(2021)\citenamefont {Tamiya}, \citenamefont {Koh},\ and\ \citenamefont {Nakagawa}}]{tamiya2021calculating}%
  \BibitemOpen
  \bibfield  {author} {\bibinfo {author} {\bibfnamefont {S.}~\bibnamefont {Tamiya}}, \bibinfo {author} {\bibfnamefont {S.}~\bibnamefont {Koh}},\ and\ \bibinfo {author} {\bibfnamefont {Y.~O.}\ \bibnamefont {Nakagawa}},\ }\bibfield  {title} {\bibinfo {title} {Calculating nonadiabatic couplings and berry's phase by variational quantum eigensolvers},\ }\href {https://doi.org/10.1103/PhysRevResearch.3.023244} {\bibfield  {journal} {\bibinfo  {journal} {Physical Review Research}\ }\textbf {\bibinfo {volume} {3}},\ \bibinfo {pages} {023244} (\bibinfo {year} {2021})}\BibitemShut {NoStop}%
\bibitem [{\citenamefont {Kiumi}(2026)}]{kiumi2026adiabatic}%
  \BibitemOpen
  \bibfield  {author} {\bibinfo {author} {\bibfnamefont {C.}~\bibnamefont {Kiumi}},\ }\bibfield  {title} {\bibinfo {title} {Adiabatic error cancellation in berry phase estimation},\ }\href {https://arxiv.org/abs/2604.20952} {\bibfield  {journal} {\bibinfo  {journal} {arXiv preprint arXiv:2604.20952}\ } (\bibinfo {year} {2026})}\BibitemShut {NoStop}%
\bibitem [{\citenamefont {Hayakawa}\ \emph {et~al.}(2025)\citenamefont {Hayakawa}, \citenamefont {Sakamoto},\ and\ \citenamefont {Kiumi}}]{hayakawa2025computational}%
  \BibitemOpen
  \bibfield  {author} {\bibinfo {author} {\bibfnamefont {R.}~\bibnamefont {Hayakawa}}, \bibinfo {author} {\bibfnamefont {K.}~\bibnamefont {Sakamoto}},\ and\ \bibinfo {author} {\bibfnamefont {C.}~\bibnamefont {Kiumi}},\ }\bibfield  {title} {\bibinfo {title} {Computational complexity of berry phase estimation in topological phases of matter},\ }\href {https://arxiv.org/abs/2509.13423} {\bibfield  {journal} {\bibinfo  {journal} {arXiv preprint arXiv:2509.13423}\ } (\bibinfo {year} {2025})}\BibitemShut {NoStop}%
\bibitem [{\citenamefont {Gharibian}\ and\ \citenamefont {Le~Gall}(2022)}]{gharibian2022dequantizing}%
  \BibitemOpen
  \bibfield  {author} {\bibinfo {author} {\bibfnamefont {S.}~\bibnamefont {Gharibian}}\ and\ \bibinfo {author} {\bibfnamefont {F.}~\bibnamefont {Le~Gall}},\ }\bibfield  {title} {\bibinfo {title} {Dequantizing the quantum singular value transformation: hardness and applications to quantum chemistry and the quantum pcp conjecture},\ }in\ \href {https://doi.org/10.1145/3519935.3519991} {\emph {\bibinfo {booktitle} {Proceedings of the 54th annual ACM SIGACT symposium on theory of computing}}}\ (\bibinfo {year} {2022})\ pp.\ \bibinfo {pages} {19--32}\BibitemShut {NoStop}%
\bibitem [{\citenamefont {Gall}(2024)}]{gall2024classical}%
  \BibitemOpen
  \bibfield  {author} {\bibinfo {author} {\bibfnamefont {F.~L.}\ \bibnamefont {Gall}},\ }\bibfield  {title} {\bibinfo {title} {Classical algorithms for constant approximation of the ground state energy of local hamiltonians},\ }\href {https://arxiv.org/abs/2410.21833} {\bibfield  {journal} {\bibinfo  {journal} {arXiv preprint arXiv:2410.21833}\ } (\bibinfo {year} {2024})}\BibitemShut {NoStop}%
\bibitem [{Note1()}]{Note1}%
  \BibitemOpen
  \bibinfo {note} {The term 2SLD stands for ''the 2-local parts of all interactions in the set are simultaneously locally diagonalizable''. The concept was originally introduced in Ref.~\cite {cubitt2016complexity}.}\BibitemShut {Stop}%
\bibitem [{\citenamefont {Su}\ \emph {et~al.}(1979)\citenamefont {Su}, \citenamefont {Schrieffer},\ and\ \citenamefont {Heeger}}]{su1979solitons}%
  \BibitemOpen
  \bibfield  {author} {\bibinfo {author} {\bibfnamefont {W.-P.}\ \bibnamefont {Su}}, \bibinfo {author} {\bibfnamefont {J.~R.}\ \bibnamefont {Schrieffer}},\ and\ \bibinfo {author} {\bibfnamefont {A.~J.}\ \bibnamefont {Heeger}},\ }\bibfield  {title} {\bibinfo {title} {Solitons in polyacetylene},\ }\href {https://doi.org/10.1103/PhysRevLett.42.1698} {\bibfield  {journal} {\bibinfo  {journal} {Physical review letters}\ }\textbf {\bibinfo {volume} {42}},\ \bibinfo {pages} {1698} (\bibinfo {year} {1979})}\BibitemShut {NoStop}%
\bibitem [{\citenamefont {Kitaev}\ \emph {et~al.}(2002)\citenamefont {Kitaev}, \citenamefont {Shen},\ and\ \citenamefont {Vyalyi}}]{kitaev2002classical}%
  \BibitemOpen
  \bibfield  {author} {\bibinfo {author} {\bibfnamefont {A.~Y.}\ \bibnamefont {Kitaev}}, \bibinfo {author} {\bibfnamefont {A.}~\bibnamefont {Shen}},\ and\ \bibinfo {author} {\bibfnamefont {M.~N.}\ \bibnamefont {Vyalyi}},\ }\href {https://doi.org/10.1090/gsm/047} {\emph {\bibinfo {title} {Classical and quantum computation}}}\ (\bibinfo  {publisher} {American Mathematical Society},\ \bibinfo {year} {2002})\BibitemShut {NoStop}%
\bibitem [{\citenamefont {Feynman}(1986)}]{feynman1986quantum}%
  \BibitemOpen
  \bibfield  {author} {\bibinfo {author} {\bibfnamefont {R.~P.}\ \bibnamefont {Feynman}},\ }\bibfield  {title} {\bibinfo {title} {Quantum mechanical computers.},\ }\href {https://doi.org/10.1007/BF01886518} {\bibfield  {journal} {\bibinfo  {journal} {Found. Phys.}\ }\textbf {\bibinfo {volume} {16}},\ \bibinfo {pages} {507} (\bibinfo {year} {1986})}\BibitemShut {NoStop}%
\bibitem [{\citenamefont {Bravyi}\ \emph {et~al.}(2011)\citenamefont {Bravyi}, \citenamefont {DiVincenzo},\ and\ \citenamefont {Loss}}]{bravyi2011schrieffer}%
  \BibitemOpen
  \bibfield  {author} {\bibinfo {author} {\bibfnamefont {S.}~\bibnamefont {Bravyi}}, \bibinfo {author} {\bibfnamefont {D.~P.}\ \bibnamefont {DiVincenzo}},\ and\ \bibinfo {author} {\bibfnamefont {D.}~\bibnamefont {Loss}},\ }\bibfield  {title} {\bibinfo {title} {Schrieffer--wolff transformation for quantum many-body systems},\ }\href {https://doi.org/https://doi.org/10.1016/j.aop.2011.06.004} {\bibfield  {journal} {\bibinfo  {journal} {Annals of physics}\ }\textbf {\bibinfo {volume} {326}},\ \bibinfo {pages} {2793} (\bibinfo {year} {2011})}\BibitemShut {NoStop}%
\bibitem [{\citenamefont {Bravyi}\ and\ \citenamefont {Hastings}(2017)}]{bravyi2017complexity}%
  \BibitemOpen
  \bibfield  {author} {\bibinfo {author} {\bibfnamefont {S.}~\bibnamefont {Bravyi}}\ and\ \bibinfo {author} {\bibfnamefont {M.}~\bibnamefont {Hastings}},\ }\bibfield  {title} {\bibinfo {title} {On complexity of the quantum ising model},\ }\href {https://doi.org/https://doi.org/10.1007/s00220-016-2787-4} {\bibfield  {journal} {\bibinfo  {journal} {Communications in Mathematical Physics}\ }\textbf {\bibinfo {volume} {349}},\ \bibinfo {pages} {1} (\bibinfo {year} {2017})}\BibitemShut {NoStop}%
\bibitem [{\citenamefont {Aharonov}\ \emph {et~al.}(2008)\citenamefont {Aharonov}, \citenamefont {Van~Dam}, \citenamefont {Kempe}, \citenamefont {Landau}, \citenamefont {Lloyd},\ and\ \citenamefont {Regev}}]{aharonov2008adiabatic}%
  \BibitemOpen
  \bibfield  {author} {\bibinfo {author} {\bibfnamefont {D.}~\bibnamefont {Aharonov}}, \bibinfo {author} {\bibfnamefont {W.}~\bibnamefont {Van~Dam}}, \bibinfo {author} {\bibfnamefont {J.}~\bibnamefont {Kempe}}, \bibinfo {author} {\bibfnamefont {Z.}~\bibnamefont {Landau}}, \bibinfo {author} {\bibfnamefont {S.}~\bibnamefont {Lloyd}},\ and\ \bibinfo {author} {\bibfnamefont {O.}~\bibnamefont {Regev}},\ }\bibfield  {title} {\bibinfo {title} {Adiabatic quantum computation is equivalent to standard quantum computation},\ }\href {https://doi.org/https://doi.org/10.1137/080734479} {\bibfield  {journal} {\bibinfo  {journal} {SIAM review}\ }\textbf {\bibinfo {volume} {50}},\ \bibinfo {pages} {755} (\bibinfo {year} {2008})}\BibitemShut {NoStop}%
\bibitem [{\citenamefont {Cade}\ \emph {et~al.}(2023)\citenamefont {Cade}, \citenamefont {Folkertsma}, \citenamefont {Gharibian}, \citenamefont {Hayakawa}, \citenamefont {Le~Gall}, \citenamefont {Morimae},\ and\ \citenamefont {Weggemans}}]{cade2022improved}%
  \BibitemOpen
  \bibfield  {author} {\bibinfo {author} {\bibfnamefont {C.}~\bibnamefont {Cade}}, \bibinfo {author} {\bibfnamefont {M.}~\bibnamefont {Folkertsma}}, \bibinfo {author} {\bibfnamefont {S.}~\bibnamefont {Gharibian}}, \bibinfo {author} {\bibfnamefont {R.}~\bibnamefont {Hayakawa}}, \bibinfo {author} {\bibfnamefont {F.}~\bibnamefont {Le~Gall}}, \bibinfo {author} {\bibfnamefont {T.}~\bibnamefont {Morimae}},\ and\ \bibinfo {author} {\bibfnamefont {J.}~\bibnamefont {Weggemans}},\ }\bibfield  {title} {\bibinfo {title} {{Improved Hardness Results for the Guided Local Hamiltonian Problem}},\ }in\ \href {https://doi.org/10.4230/LIPIcs.ICALP.2023.32} {\emph {\bibinfo {booktitle} {50th International Colloquium on Automata, Languages, and Programming (ICALP 2023)}}},\ \bibinfo {series} {Leibniz International Proceedings in Informatics (LIPIcs)}, Vol.\ \bibinfo {volume} {261},\ \bibinfo {editor} {edited by\ \bibinfo {editor} {\bibfnamefont {K.}~\bibnamefont {Etessami}}, \bibinfo {editor} {\bibfnamefont {U.}~\bibnamefont
  {Feige}},\ and\ \bibinfo {editor} {\bibfnamefont {G.}~\bibnamefont {Puppis}}}\ (\bibinfo  {publisher} {Schloss Dagstuhl -- Leibniz-Zentrum f{\"u}r Informatik},\ \bibinfo {address} {Dagstuhl, Germany},\ \bibinfo {year} {2023})\ pp.\ \bibinfo {pages} {32:1--32:19}\BibitemShut {NoStop}%
\bibitem [{Note2()}]{Note2}%
  \BibitemOpen
  \bibinfo {note} {Very recently, independent work by Waite~\cite {waite2026computational} established BQP-completeness of guided Berry phase estimation at inverse-polynomial precision for parameterized 2-local qubit Hamiltonians, including weighted Heisenberg interactions on two-dimensional square and triangular lattices. Our work instead focuses on symmetry-quantized Berry phases taking exactly \(0\) or \(\pi \), for which we establish hardness already at constant precision. Moreover, the exact symmetry substantially simplifies the physical reduction by reducing Berry-phase preservation to the preservation of discrete symmetry eigenvalues.}\BibitemShut {Stop}%
\bibitem [{\citenamefont {Cubitt}\ \emph {et~al.}(2018)\citenamefont {Cubitt}, \citenamefont {Montanaro},\ and\ \citenamefont {Piddock}}]{cubitt2018universal}%
  \BibitemOpen
  \bibfield  {author} {\bibinfo {author} {\bibfnamefont {T.~S.}\ \bibnamefont {Cubitt}}, \bibinfo {author} {\bibfnamefont {A.}~\bibnamefont {Montanaro}},\ and\ \bibinfo {author} {\bibfnamefont {S.}~\bibnamefont {Piddock}},\ }\bibfield  {title} {\bibinfo {title} {Universal quantum hamiltonians},\ }\href {https://doi.org/https://doi.org/10.1073/pnas.1804949115} {\bibfield  {journal} {\bibinfo  {journal} {Proceedings of the National Academy of Sciences}\ }\textbf {\bibinfo {volume} {115}},\ \bibinfo {pages} {9497} (\bibinfo {year} {2018})}\BibitemShut {NoStop}%
\bibitem [{\citenamefont {Hastings}(2010{\natexlab{a}})}]{hastings2010locality}%
  \BibitemOpen
  \bibfield  {author} {\bibinfo {author} {\bibfnamefont {M.~B.}\ \bibnamefont {Hastings}},\ }\href {https://arxiv.org/abs/1008.5137} {\bibinfo {title} {Locality in quantum systems}} (\bibinfo {year} {2010}{\natexlab{a}}),\ \Eprint {https://arxiv.org/abs/1008.5137} {arXiv:1008.5137 [math-ph]} \BibitemShut {NoStop}%
\bibitem [{Note3()}]{Note3}%
  \BibitemOpen
  \bibinfo {note} {See Refs.~\cite {wu2024classical, zhang2024dequantized} for works considering other normalization and promise.}\BibitemShut {Stop}%
\bibitem [{\citenamefont {Hughes}\ \emph {et~al.}(2011)\citenamefont {Hughes}, \citenamefont {Prodan},\ and\ \citenamefont {Bernevig}}]{hughes2011inversion}%
  \BibitemOpen
  \bibfield  {author} {\bibinfo {author} {\bibfnamefont {T.~L.}\ \bibnamefont {Hughes}}, \bibinfo {author} {\bibfnamefont {E.}~\bibnamefont {Prodan}},\ and\ \bibinfo {author} {\bibfnamefont {B.~A.}\ \bibnamefont {Bernevig}},\ }\bibfield  {title} {\bibinfo {title} {Inversion-symmetric topological insulators},\ }\href {https://doi.org/10.1103/PhysRevB.83.245132} {\bibfield  {journal} {\bibinfo  {journal} {Physical Review B—Condensed Matter and Materials Physics}\ }\textbf {\bibinfo {volume} {83}},\ \bibinfo {pages} {245132} (\bibinfo {year} {2011})}\BibitemShut {NoStop}%
\bibitem [{\citenamefont {Gharibian}\ and\ \citenamefont {Kempe}(2012)}]{gharibian2012hardness}%
  \BibitemOpen
  \bibfield  {author} {\bibinfo {author} {\bibfnamefont {S.}~\bibnamefont {Gharibian}}\ and\ \bibinfo {author} {\bibfnamefont {J.}~\bibnamefont {Kempe}},\ }\bibfield  {title} {\bibinfo {title} {Hardness of approximation for quantum problems},\ }in\ \href {https://doi.org/https://doi.org/10.1007/978-3-642-31594-7_33} {\emph {\bibinfo {booktitle} {International Colloquium on Automata, Languages, and Programming}}}\ (\bibinfo {organization} {Springer},\ \bibinfo {year} {2012})\ pp.\ \bibinfo {pages} {387--398}\BibitemShut {NoStop}%
\bibitem [{\citenamefont {Jansen}\ \emph {et~al.}(2007)\citenamefont {Jansen}, \citenamefont {Ruskai},\ and\ \citenamefont {Seiler}}]{jansen2007bounds}%
  \BibitemOpen
  \bibfield  {author} {\bibinfo {author} {\bibfnamefont {S.}~\bibnamefont {Jansen}}, \bibinfo {author} {\bibfnamefont {M.-B.}\ \bibnamefont {Ruskai}},\ and\ \bibinfo {author} {\bibfnamefont {R.}~\bibnamefont {Seiler}},\ }\bibfield  {title} {\bibinfo {title} {Bounds for the adiabatic approximation with applications to quantum computation},\ }\bibfield  {journal} {\bibinfo  {journal} {Journal of Mathematical Physics}\ }\textbf {\bibinfo {volume} {48}},\ \href {https://doi.org/https://doi.org/10.1063/1.2798382} {https://doi.org/10.1063/1.2798382} (\bibinfo {year} {2007})\BibitemShut {NoStop}%
\bibitem [{\citenamefont {Zhou}\ and\ \citenamefont {Aharonov}(2026)}]{zhou2026universal}%
  \BibitemOpen
  \bibfield  {author} {\bibinfo {author} {\bibfnamefont {L.}~\bibnamefont {Zhou}}\ and\ \bibinfo {author} {\bibfnamefont {D.}~\bibnamefont {Aharonov}},\ }\bibfield  {title} {\bibinfo {title} {Universal hamiltonian simulators in one and two dimensions},\ }\bibfield  {journal} {\bibinfo  {journal} {Nature Communications}\ }\href {https://doi.org/https://doi.org/10.1038/s41467-026-71686-4} {https://doi.org/10.1038/s41467-026-71686-4} (\bibinfo {year} {2026})\BibitemShut {NoStop}%
\bibitem [{\citenamefont {Shi}(2003)}]{shi2002both}%
  \BibitemOpen
  \bibfield  {author} {\bibinfo {author} {\bibfnamefont {Y.}~\bibnamefont {Shi}},\ }\bibfield  {title} {\bibinfo {title} {Both toffoli and controlled-not need little help to do universal quantum computation},\ }\bibfield  {journal} {\bibinfo  {journal} {Quantum Information and Computation}\ }\href {https://doi.org/10.26421/qic3.1-7} {10.26421/qic3.1-7} (\bibinfo {year} {2003})\BibitemShut {NoStop}%
\bibitem [{\citenamefont {Aharonov}(2003)}]{aharonov2003simple}%
  \BibitemOpen
  \bibfield  {author} {\bibinfo {author} {\bibfnamefont {D.}~\bibnamefont {Aharonov}},\ }\bibfield  {title} {\bibinfo {title} {A simple proof that toffoli and hadamard are quantum universal},\ }\href {https://doi.org/10.48550/arXiv.quant-ph/0301040} {\bibfield  {journal} {\bibinfo  {journal} {arXiv preprint quant-ph/0301040}\ } (\bibinfo {year} {2003})},\ \Eprint {https://arxiv.org/abs/quant-ph/0301040} {quant-ph/0301040} \BibitemShut {NoStop}%
\bibitem [{\citenamefont {Oliveira}\ and\ \citenamefont {Terhal}(2008)}]{oliveira2005complexity}%
  \BibitemOpen
  \bibfield  {author} {\bibinfo {author} {\bibfnamefont {R.}~\bibnamefont {Oliveira}}\ and\ \bibinfo {author} {\bibfnamefont {B.~M.}\ \bibnamefont {Terhal}},\ }\bibfield  {title} {\bibinfo {title} {The complexity of quantum spin systems on a two-dimensional square lattice},\ }\href {https://doi.org/https://doi.org/10.26421/QIC8.10-2} {\bibfield  {journal} {\bibinfo  {journal} {Quantum Information and Computation}\ }\textbf {\bibinfo {volume} {8}},\ \bibinfo {pages} {900} (\bibinfo {year} {2008})}\BibitemShut {NoStop}%
\bibitem [{\citenamefont {Piddock}\ and\ \citenamefont {Montanaro}(2017)}]{piddock2015complexity}%
  \BibitemOpen
  \bibfield  {author} {\bibinfo {author} {\bibfnamefont {S.}~\bibnamefont {Piddock}}\ and\ \bibinfo {author} {\bibfnamefont {A.}~\bibnamefont {Montanaro}},\ }\bibfield  {title} {\bibinfo {title} {The complexity of antiferromagnetic interactions and 2d lattices},\ }\href {https://doi.org/https://doi.org/10.26421/QIC17.7-8-6} {\bibfield  {journal} {\bibinfo  {journal} {Quantum Info. Comput.}\ }\textbf {\bibinfo {volume} {17}},\ \bibinfo {pages} {636–672} (\bibinfo {year} {2017})}\BibitemShut {NoStop}%
\bibitem [{\citenamefont {Hastings}\ and\ \citenamefont {Wen}(2005)}]{hastings2005quasiadiabatic}%
  \BibitemOpen
  \bibfield  {author} {\bibinfo {author} {\bibfnamefont {M.~B.}\ \bibnamefont {Hastings}}\ and\ \bibinfo {author} {\bibfnamefont {X.-G.}\ \bibnamefont {Wen}},\ }\bibfield  {title} {\bibinfo {title} {Quasiadiabatic continuation of quantum states: The stability of topological ground-state degeneracy and emergent gauge invariance},\ }\href {https://doi.org/10.1103/PhysRevB.72.045141} {\bibfield  {journal} {\bibinfo  {journal} {Physical Review B—Condensed Matter and Materials Physics}\ }\textbf {\bibinfo {volume} {72}},\ \bibinfo {pages} {045141} (\bibinfo {year} {2005})}\BibitemShut {NoStop}%
\bibitem [{\citenamefont {Hastings}(2010{\natexlab{b}})}]{hastings2010quasi}%
  \BibitemOpen
  \bibfield  {author} {\bibinfo {author} {\bibfnamefont {M.~B.}\ \bibnamefont {Hastings}},\ }\bibfield  {title} {\bibinfo {title} {Quasi-adiabatic continuation for disordered systems: Applications to correlations, lieb-schultz-mattis, and hall conductance},\ }\href {https://arxiv.org/abs/1001.5280} {\bibfield  {journal} {\bibinfo  {journal} {arXiv preprint arXiv:1001.5280}\ } (\bibinfo {year} {2010}{\natexlab{b}})}\BibitemShut {NoStop}%
\bibitem [{\citenamefont {Harrow}\ and\ \citenamefont {Montanaro}(2017)}]{harrow2017extremal}%
  \BibitemOpen
  \bibfield  {author} {\bibinfo {author} {\bibfnamefont {A.~W.}\ \bibnamefont {Harrow}}\ and\ \bibinfo {author} {\bibfnamefont {A.}~\bibnamefont {Montanaro}},\ }\bibfield  {title} {\bibinfo {title} {Extremal eigenvalues of local hamiltonians},\ }\href {https://doi.org/10.22331/q-2017-04-25-6} {\bibfield  {journal} {\bibinfo  {journal} {Quantum}\ }\textbf {\bibinfo {volume} {1}},\ \bibinfo {pages} {6} (\bibinfo {year} {2017})}\BibitemShut {NoStop}%
\bibitem [{\citenamefont {Le~Gall}(2025)}]{gall2023robust}%
  \BibitemOpen
  \bibfield  {author} {\bibinfo {author} {\bibfnamefont {F.}~\bibnamefont {Le~Gall}},\ }\bibfield  {title} {\bibinfo {title} {Robust dequantization of the quantum singular value transformation and quantum machine learning algorithms},\ }\href {https://doi.org/10.1007/s00037-024-00262-3} {\bibfield  {journal} {\bibinfo  {journal} {computational complexity}\ }\textbf {\bibinfo {volume} {34}},\ \bibinfo {pages} {2} (\bibinfo {year} {2025})}\BibitemShut {NoStop}%
\bibitem [{\citenamefont {Hastings}(2009)}]{hastings2009quantum}%
  \BibitemOpen
  \bibfield  {author} {\bibinfo {author} {\bibfnamefont {M.~B.}\ \bibnamefont {Hastings}},\ }\bibfield  {title} {\bibinfo {title} {Quantum adiabatic computation with a constant gap is not useful in one dimension},\ }\href {https://doi.org/10.1103/PhysRevLett.103.050502} {\bibfield  {journal} {\bibinfo  {journal} {Physical review letters}\ }\textbf {\bibinfo {volume} {103}},\ \bibinfo {pages} {050502} (\bibinfo {year} {2009})}\BibitemShut {NoStop}%
\bibitem [{\citenamefont {Bacon}\ and\ \citenamefont {Flammia}(2009)}]{bacon2009adiabatic}%
  \BibitemOpen
  \bibfield  {author} {\bibinfo {author} {\bibfnamefont {D.}~\bibnamefont {Bacon}}\ and\ \bibinfo {author} {\bibfnamefont {S.~T.}\ \bibnamefont {Flammia}},\ }\bibfield  {title} {\bibinfo {title} {Adiabatic gate teleportation},\ }\href {https://doi.org/10.1103/PhysRevLett.103.120504} {\bibfield  {journal} {\bibinfo  {journal} {Physical review letters}\ }\textbf {\bibinfo {volume} {103}},\ \bibinfo {pages} {120504} (\bibinfo {year} {2009})}\BibitemShut {NoStop}%
\bibitem [{\citenamefont {Buhrman}\ \emph {et~al.}(2025)\citenamefont {Buhrman}, \citenamefont {Gharibian}, \citenamefont {Landau}, \citenamefont {Le~Gall}, \citenamefont {Schuch},\ and\ \citenamefont {Tamaki}}]{buhrman2025beating}%
  \BibitemOpen
  \bibfield  {author} {\bibinfo {author} {\bibfnamefont {H.}~\bibnamefont {Buhrman}}, \bibinfo {author} {\bibfnamefont {S.}~\bibnamefont {Gharibian}}, \bibinfo {author} {\bibfnamefont {Z.}~\bibnamefont {Landau}}, \bibinfo {author} {\bibfnamefont {F.}~\bibnamefont {Le~Gall}}, \bibinfo {author} {\bibfnamefont {N.}~\bibnamefont {Schuch}},\ and\ \bibinfo {author} {\bibfnamefont {S.}~\bibnamefont {Tamaki}},\ }\bibfield  {title} {\bibinfo {title} {Beating the natural grover bound for low-energy estimation and state preparation},\ }\href {https://doi.org/10.1103/29qw-bssx} {\bibfield  {journal} {\bibinfo  {journal} {Physical Review Letters}\ }\textbf {\bibinfo {volume} {135}},\ \bibinfo {pages} {030601} (\bibinfo {year} {2025})}\BibitemShut {NoStop}%
\bibitem [{\citenamefont {Chia}\ \emph {et~al.}(2026)\citenamefont {Chia}, \citenamefont {Hasegawa}, \citenamefont {Le~Gall},\ and\ \citenamefont {Shen}}]{chia2026fine}%
  \BibitemOpen
  \bibfield  {author} {\bibinfo {author} {\bibfnamefont {N.-H.}\ \bibnamefont {Chia}}, \bibinfo {author} {\bibfnamefont {A.}~\bibnamefont {Hasegawa}}, \bibinfo {author} {\bibfnamefont {F.}~\bibnamefont {Le~Gall}},\ and\ \bibinfo {author} {\bibfnamefont {Y.-C.}\ \bibnamefont {Shen}},\ }\bibfield  {title} {\bibinfo {title} {{Fine-Grained Complexity for Quantum Problems from Size-Preserving Circuit-To-Hamiltonian Constructions}},\ }in\ \href {https://doi.org/10.4230/LIPIcs.CCC.2026.12} {\emph {\bibinfo {booktitle} {41st Computational Complexity Conference (CCC 2026)}}},\ \bibinfo {series} {Leibniz International Proceedings in Informatics (LIPIcs)}, Vol.\ \bibinfo {volume} {383},\ \bibinfo {editor} {edited by\ \bibinfo {editor} {\bibfnamefont {D.}~\bibnamefont {Moshkovitz}}}\ (\bibinfo  {publisher} {Schloss Dagstuhl -- Leibniz-Zentrum f{\"u}r Informatik},\ \bibinfo {address} {Dagstuhl, Germany},\ \bibinfo {year} {2026})\ pp.\ \bibinfo {pages} {12:1--12:35}\BibitemShut {NoStop}%
\bibitem [{\citenamefont {Schuster}\ \emph {et~al.}(2025)\citenamefont {Schuster}, \citenamefont {Kufel}, \citenamefont {Yao},\ and\ \citenamefont {Huang}}]{schuster2025hardness}%
  \BibitemOpen
  \bibfield  {author} {\bibinfo {author} {\bibfnamefont {T.}~\bibnamefont {Schuster}}, \bibinfo {author} {\bibfnamefont {D.}~\bibnamefont {Kufel}}, \bibinfo {author} {\bibfnamefont {N.~Y.}\ \bibnamefont {Yao}},\ and\ \bibinfo {author} {\bibfnamefont {H.-Y.}\ \bibnamefont {Huang}},\ }\bibfield  {title} {\bibinfo {title} {Hardness of recognizing phases of matter},\ }\href {https://arxiv.org/abs/2510.08503} {\bibfield  {journal} {\bibinfo  {journal} {arXiv preprint arXiv:2510.08503}\ } (\bibinfo {year} {2025})}\BibitemShut {NoStop}%
\bibitem [{\citenamefont {Cubitt}\ and\ \citenamefont {Montanaro}(2016)}]{cubitt2016complexity}%
  \BibitemOpen
  \bibfield  {author} {\bibinfo {author} {\bibfnamefont {T.}~\bibnamefont {Cubitt}}\ and\ \bibinfo {author} {\bibfnamefont {A.}~\bibnamefont {Montanaro}},\ }\bibfield  {title} {\bibinfo {title} {Complexity classification of local hamiltonian problems},\ }\href {https://doi.org/https://doi.org/10.1137/140998287} {\bibfield  {journal} {\bibinfo  {journal} {SIAM Journal on Computing}\ }\textbf {\bibinfo {volume} {45}},\ \bibinfo {pages} {268} (\bibinfo {year} {2016})}\BibitemShut {NoStop}%
\bibitem [{\citenamefont {Waite}(2026)}]{waite2026computational}%
  \BibitemOpen
  \bibfield  {author} {\bibinfo {author} {\bibfnamefont {G.}~\bibnamefont {Waite}},\ }\href {https://arxiv.org/abs/2609.25929} {\bibinfo {title} {On the computational complexity of guided berry phase estimation}} (\bibinfo {year} {2026}),\ \Eprint {https://arxiv.org/abs/2609.25929} {arXiv:2609.25929 [quant-ph]} \BibitemShut {NoStop}%
\bibitem [{\citenamefont {Wu}\ \emph {et~al.}(2024)\citenamefont {Wu}, \citenamefont {Zhang}, \citenamefont {Wang},\ and\ \citenamefont {Yuan}}]{wu2024classical}%
  \BibitemOpen
  \bibfield  {author} {\bibinfo {author} {\bibfnamefont {Y.}~\bibnamefont {Wu}}, \bibinfo {author} {\bibfnamefont {Y.}~\bibnamefont {Zhang}}, \bibinfo {author} {\bibfnamefont {C.}~\bibnamefont {Wang}},\ and\ \bibinfo {author} {\bibfnamefont {X.}~\bibnamefont {Yuan}},\ }\bibfield  {title} {\bibinfo {title} {Classical algorithms for hamiltonian dynamics mean value and guided local hamiltonian problem},\ }\href {https://arxiv.org/abs/2409.04161} {\bibfield  {journal} {\bibinfo  {journal} {arXiv preprint arXiv:2409.04161}\ } (\bibinfo {year} {2024})}\BibitemShut {NoStop}%
\bibitem [{\citenamefont {Zhang}\ \emph {et~al.}(2024)\citenamefont {Zhang}, \citenamefont {Wu},\ and\ \citenamefont {Yuan}}]{zhang2024dequantized}%
  \BibitemOpen
  \bibfield  {author} {\bibinfo {author} {\bibfnamefont {Y.}~\bibnamefont {Zhang}}, \bibinfo {author} {\bibfnamefont {Y.}~\bibnamefont {Wu}},\ and\ \bibinfo {author} {\bibfnamefont {X.}~\bibnamefont {Yuan}},\ }\bibfield  {title} {\bibinfo {title} {A dequantized algorithm for the guided local hamiltonian problem},\ }\href {https://arxiv.org/abs/2411.16163} {\bibfield  {journal} {\bibinfo  {journal} {arXiv preprint arXiv:2411.16163}\ } (\bibinfo {year} {2024})}\BibitemShut {NoStop}%
\end{thebibliography}%


\appendix

\clearpage

\section{Berry phase from a parameter-reversing symmetry}\label{app:berry-symmetry}
We prove Eq.~\eqref{eq:berry-parity-formula} and Eq.~\eqref{eq:berry-sym}, which gives a relation between the Berry phase and the eigenvalues of the symmetry operator. 
Let $R$ obey Eq.~\eqref{eq:inversion-condition}, and then nondegeneracy implies
\begin{equation}
 R\ket{\psi_0(\theta)}=e^{i\beta(\theta)}\ket{\psi_0(-\theta)}.
\end{equation}
Differentiating this equality and considering that $R$ is independent of $\theta$ gives
\begin{equation}
 \mathcal A(\theta)+\mathcal A(-\theta)=-\dot\beta(\theta).
\end{equation}
Hence
\begin{align}
 \theta_B
 &=\int_{-\pi}^{\pi}\mathcal A(\theta)\,\mathrm d\theta
 =\int_0^\pi[\mathcal A(\theta)+\mathcal A(-\theta)]\,\mathrm d\theta\notag\\
 &=\beta(0)-\beta(\pi)\pmod{2\pi}.
\end{align}
At $\theta = 0$ and $\pi$, we have $R\ket{\psi_0(\theta)}=e^{i\beta(\theta)}\ket{\psi_0(\theta)}$. 
Since $R^2=I$, these eigenvalues are $\xi_0,\xi_\pi\in\{+1,-1\}$. Therefore $e^{i\theta_B}=\xi_0 \xi_\pi$.

\end{document}